\documentclass[preprint,12pt]{elsarticle}

\usepackage{amsmath,amsthm,amssymb}
\usepackage{proof}
\usepackage{xspace}
\usepackage{fullpage}
\usepackage{url}
\usepackage{graphics} 

\long\def\ignore#1{}

\newtheorem{defn}{Definition}
\newtheorem{lemma}[defn]{Lemma}

\newtheorem{theorem}[defn]{Theorem}

\newtheorem{cor}[defn]{Corollary}
\newtheorem{notation}[defn]{Notation}

\newcommand{\lvl}{{\rm lvl}}
\newcommand{\supp}{{\rm supp}}
\newcommand{\htf}{{\rm ht}}
\newcommand{\restrict}{\uparrow}

\newcommand{\abs}[1]{\hbox{\sl abs} \; #1}
\newcommand{\tabs}[2]{\hbox{\sl abs} \; #1 \; #2}
\newcommand{\app}[2]{\hbox{\sl app} \; #1 \; #2}
\newcommand{\arr}[2]{\hbox{\sl arr} \; #1 \; #2}
\newcommand{\of}[3]{\hbox{\sl of} \; #1 \; #2 \; #3}
\newcommand{\member}[2]{\hbox{\sl member} \; #1 \; #2}

\newcommand{\cntx}[1]{\hbox{\sl cntx} \; #1}

\newcommand{\subst}[3]{\hbox{\sl subst} \; #1 \; #2 \; #3}

\newcommand{\fresh}[2]{\hbox{\sl fresh} \; #1 \; #2}
\newcommand{\spec}[3]{\hbox{\sl spec} \; #1 \; #2 \; #3}
\newcommand{\monoTy}[1]{\hbox{\sl monoTy} \; #1}
\newcommand{\polyTy}[1]{\hbox{\sl polyTy} \; #1}
\newcommand{\tup}[1]{\langle #1\rangle}

\newcommand{\new}{\reflectbox{\ensuremath{\mathsf{N}}}\xspace}

\newcommand{\parag}[1]{\smallskip\noindent{\em #1}\quad}
\newcommand{\CSNAS}{\hbox{\sl CSNAS}}
\newcommand{\CSU}{\hbox{\sl CSU}}

\newcommand{\FOL}{FO\lambda}
\newcommand{\N}{{\rm I} \! {\rm N}}
\newcommand{\FOLDN }{\FOL^{\Delta\N}}
\newcommand{\foldnb}{$FO\lambda^{\Delta\nabla}$\xspace}
\newcommand{\G}{\ensuremath{\cal G}\xspace}
\newcommand{\logic}{\G}
\newcommand{\LG}{$LG^\omega$\xspace}
\newcommand{\ie}{{\em i.e.}}
\newcommand{\eg}{{\em e.g.}}

\newcommand{\lp}{$\lambda$Prolog\xspace}

\newcommand{\defL}{\hbox{\sl def}\mathcal{L}}
\newcommand{\defR}{\hbox{\sl def}\mathcal{R}}
\newcommand{\cL}{\hbox{\sl c}\mathcal{L}}
\newcommand{\unrhdL}{\unrhd\mathcal{L}}
\newcommand{\unrhdR}{\unrhd\mathcal{R}}
\newcommand{\cut}{\hbox{\sl cut}}
\newcommand{\botL}{\bot\mathcal{L}}
\newcommand{\topR}{\top\mathcal{R}}
\newcommand{\lorL}{\lor\mathcal{L}}
\newcommand{\lorR}{\lor\mathcal{R}}
\newcommand{\landL}{\land\mathcal{L}}
\newcommand{\landR}{\land\mathcal{R}}
\newcommand{\supsetL}{\supset\!\mathcal{L}}
\newcommand{\supsetR}{\supset\!\mathcal{R}}
\newcommand{\forallL}{\forall\mathcal{L}}
\newcommand{\forallR}{\forall\mathcal{R}}
\newcommand{\nablaL}{\nabla\mathcal{L}}
\newcommand{\nablaR}{\nabla\mathcal{R}}
\newcommand{\existsL}{\exists\mathcal{L}}
\newcommand{\existsR}{\exists\mathcal{R}}

\newcommand{\IL}{\mathcal{IL}}
\newcommand{\CIR}{\mathcal{CIR}}

\newcommand{\mueq}{\stackrel{\mu}{=}}
\newcommand{\nueq}{\stackrel{\nu}{=}}

\newcommand{\cas}[1]{[\![ #1 ]\!]}
\newcommand{\enc}[1]{\ulcorner #1 \urcorner}

\journal{Information and Computation}

\begin{document}

\begin{frontmatter}

\title{Nominal Abstraction}

\author[umn]{Andrew Gacek\corref{ajg}}
\ead{agacek@cs.umn.edu}

\author[saclay]{Dale Miller}
\ead{dale.miller@inria.fr}

\author[umn]{Gopalan Nadathur}
\ead{gopalan@cs.umn.edu}

\address[umn]{
Department of Computer Science and Engineering,
University of Minnesota
}
\address[saclay]{
INRIA Saclay - \^Ile-de-France
\& LIX/\'Ecole polytechnique
}

\cortext[ajg]{
4-192 EE/CS Building,
200 Union Street SE,
Minneapolis, MN 55455
}

\begin{abstract}

\noindent Recursive relational specifications are commonly used to
describe the computational structure of formal systems. 
Recent research in proof theory has identified two features that
facilitate direct, logic-based reasoning about such descriptions: 
the interpretation of atomic judgments through recursive definitions
and an encoding of binding constructs via generic judgments.  
However, logics encompassing these two features do not currently
allow for the definition of relations that embody dynamic aspects
related to binding, a capability needed in many
reasoning tasks.
We propose a new relation between terms called {\em nominal
  abstraction} as a means for overcoming this deficiency. 
We incorporate nominal
abstraction into a rich logic also including definitions, generic
quantification, induction, and co-induction that we then prove to be
consistent.  We present examples to show that this logic can provide
elegant treatments of binding contexts that appear in many proofs,
such as those establishing properties of typing calculi and of
arbitrarily cascading substitutions that play a role in reducibility
arguments.

\end{abstract}


\begin{keyword}
generic judgments
\sep
higher-order abstract syntax
\sep
$\lambda$-tree syntax
\sep
proof search
\sep
reasoning about operational semantics
\end{keyword}

\end{frontmatter}
\newpage

\section{Introduction}
\label{sec:intro}

This paper contributes to an increasingly important approach to using
relational specifications for formalizing and reasoning about a wide
class of computational systems. This approach, whose theoretical
underpinnings are provided by recent ideas from proof theory and proof search,
has been used with success in codifying within a logical setting the
methods of structural operational semantics that are often employed in 
describing aspects such as the evaluation and type assignment
characteristics of programming languages. The main ingredients of this
approach are the use of terms to represent the syntactic objects that
are of
interest in the relevant systems and the reflection of their dynamic
aspects into judgments over such terms.

One common application of the method has utilized recursive relational
specifications or judgments over algebraic terms. We highlight three
stages of development in the kinds of judgments that have been
employed in this context, using the transition semantics for CCS as a
motivating example \cite{milner89book}:

\parag{$(1)$ Logic programming, may behavior} Logic programming
languages allow for a natural encoding and animation of
relational specifications.  For example, Horn clauses provide a simple
and immediate encoding of CCS labeled transition systems and
unification and backtracking provide a means for exploring what is
{\em reachable} from a given process.  An early system based on this
observation was Centaur \cite{borras88}, which used Prolog 
to animate the operational semantics and typing judgments of
programming languages.  Traditional logic programming is, however,
limited to describing only {\em may} behavior judgments. For example,
using it, we are not able to prove that a given CCS process $P$ {\it cannot}
make a transition. Since this negative property is logically
equivalent to proving that $P$ is bisimilar to the null process $0$,
such systems cannot also capture bisimulation.

\parag{$(2)$ Model checking, must behavior} One way to account for
must behavior is to allow for the unfolding of specifications in both
positive and negative settings.  Proof
theoretic techniques that provided for such a treatment were developed
in the early 1990s \cite{girard92mail,schroeder-Heister93lics} and
extended in subsequent work \cite{mcdowell00tcs}. In the basic form,
these techniques require an unfolding until termination,
and are therefore applicable to {\em recursive definitions} that are
{\em noetherian}. Specifications that meet this
restriction and, hence, to which this method is applicable, include 
bisimulation for finite processes and many model checking problems. As
an example, bisimulation for finite CCS can be given an immediate and
declarative treatment using these techniques \cite{mcdowell03tcs}.

\parag{$(3)$ Theorem proving, infinite behavior} Reasoning about all
members of a domain or about possibly infinite executions requires the
addition of induction and co-induction to the above framework of
recursive definitions. Incorporating induction in proof theory
goes back to Gentzen.  The work in
\cite{mcdowell00tcs,momigliano03types,tiu04phd} provides induction and
co-induction rules associated with recursive relational specifications.
In such a setting, one can prove, for example, that (strong)
bisimulation in CCS is a congruence.

The systems that are to be specified and reasoned about often involve
terms that use names and binding.  An elegant way to treat such terms
is to encode them as $\lambda$-terms and equate them using the theory
of $\alpha$, $\beta$, and $\eta$-conversion.
The three stages discussed above need to be extended to treat
representations based on such terms. The manner in which this has been
done is illustrated next using the relational specification of the 
$\pi$-calculus \cite{milner99book}.

\parag{$(1)$ Logic programming, $\lambda$-tree syntax} Higher-order
generalizations of logic programming, such as {\em higher-order
hereditary Harrop formulas} \cite{miller91apal} and the dependently
typed LF \cite{harper93jacm}, adequately capture may behavior for
terms containing bindings.  In particular, the presence of
hypothetical and universal judgments supports the $\lambda$-tree
syntax \cite{miller00cl} approach to higher-order abstract syntax
\cite{pfenning88pldi}.  The logic programming languages \lp\
\cite{nadathur88iclp} and Twelf \cite{pfenning99cade} support such
syntax representations and can be used to provide simple
specifications of, for example, reachability in the $\pi$-calculus.

\parag{$(2)$ Model checking, $\nabla$-quantification} While the
notions of universal quantification and {\em generic judgment} are
often conflated, a satisfactory treatment of must behavior requires
splitting apart these concepts.  The $\nabla$-quantifier \cite{miller05tocl}
was introduced to encode generic judgments directly.  To illustrate
the need for this split, consider the formula $\forall w.\neg (\lambda
x.x = \lambda x.w)$.  If we think of $\lambda$-terms as denoting
abstracted syntax (terms modulo $\alpha$-conversion), this formula
should be provable (variable capture is not allowed in logically sound
substitution).  On the other hand, if we think of $\lambda$-terms as
describing functions, then the equation $\lambda y.t = \lambda y.s$ is
equivalent to $\forall y. t = s$.  But then our example formula is
equivalent to $\forall w.\neg \forall x.x = w$, which should not be
provable since it is not true in a model with a single element domain.
To think of $\lambda$-terms syntactically, we treat $\lambda
y.t = \lambda y.s$ as equivalent not to $\forall y. t = s$ but,
rather, to $\nabla y. t = s$.  Our example formula then becomes equivalent
to $\forall w.\neg \nabla x.x = w$, 
which {\it is} provable \cite{miller05tocl}. Using a representation based on
this new quantifier, the $\pi$-calculus process $(\nu x).[x=w].\bar w
x$ can be proved to be bisimilar to $0$.  Bedwyr \cite{baelde07cade}
is a model checker that treats such generic judgments.

\parag{$(3)$ Theorem proving, equality of generic judgments}
When there is only finite behavior, logics for recursive definitions
do not need the cut or initial rules, and, consequently, there is no
need to know when two judgments are the same. On the other hand, the
treatment of induction and co-induction relies on the ability to  
make such identifications: \eg, when carrying out an inductive argument
over natural numbers, one must be able to recognize when the case for
$i+1$ has been reduced to the case for $i$. This identity question is
complicated by the presence of the $\nabla$-quantifier: for example,
the proof search treatment of such quantifiers involves instantiation
with generic objects whose choice of name is arbitrary and this must
be factored into assessments of equality. The \LG proof system
\cite{tiu06lfmtp} provides a way to address this issue and uses this
to support inductive reasoning over recursive definitions. 
Using \LG encodings extended with co-induction (as described in this
paper), one can prove, for instance, that (open) 
bisimulation is a $\pi$-calculus congruence. 

The key observation underlying this paper is that logics
like \LG are still missing an ingredient that is important to many
reasoning tasks.
Within
these logics, the $\nabla$-quantifier can be used to control the
structure of terms relative to the generic judgments in which they
occur. However, these logics do not possess a complementary device for
simply and precisely characterizing such structure {\em within} the
logic. 
Consider, for example, the natural way to specify typing of
$\lambda$-terms in this setting \cite{gacek08lfmtp}. The
representation of $\lambda$-terms within this approach uses 
(meta-level) abstracted variables to encode object-level bound variables
and $\nabla$-bound variables (also called here {\em nominal
  constants}) to encode object-level free variables. 
Conceptually, the type specification 
uses recursion over the representation of $\lambda$-terms,
transforming abstracted variables into nominal constants, and building
a context that associates nominal constants with types.
Now suppose that the list $[\langle x_1,t_1\rangle,
  \ldots, \langle x_n,t_n\rangle]$ represents a particular context. The
semantics of the $\nabla$-quantifier ensures that each $x_i$ in this
list is a {\em unique} nominal constant. 
This property is important to the integrity of the type
assignment. Moreover, making it explicit
can also be important to the reasoning process; for example, a proof
of the uniqueness of type assignment would draw critically on this
fact. Unfortunately, \LG and related logics do not possess a succinct and
general way to express such a property.

This paper describes a way of realizing this missing feature, thereby
yielding a logic that represents a natural endpoint to this line of
development. The 
particular means for overcoming the deficiency is a relation between
terms
called a {\it nominal abstraction}. In its essence, nominal
abstraction is an extension of the equality relation between
terms that allows for the characterization also of occurrences of
nominal constants in such terms. Combining this relation with
definitions, we will, for instance, be able to specify a property of
the form
\[
\nabla x_1\cdots\nabla x_n.\ \cntx [\tup{x_1, t_1},\ldots,\tup{x_n,
    t_n}]
\]
which effectively asserts that {\sl cntx} is true of a list of type
assignments to $n$ distinct nominal constants. By exploiting the recursive
structure of definitions, {\sl cntx} can further be defined so that the
length of the list is arbitrary.
We integrate nominal abstraction
into a broader logical context that includes also the ability to
interpret definitions inductively and co-inductively. The 
naturalness of nominal abstraction is clear from the
modular way in which we are able to define this
extended logic and to prove it consistent. We present examples of
specification and 
reasoning to bring out the usefulness of the resulting logic,
focusing especially on the capabilities resulting from nominal
abstraction.\footnote{While there might appear to be a similarity
  between nominal abstraction and ``atom-abstraction'' in {\em nominal
    logic} \cite{pitts03ic} from this discussion, these two concepts
  are technically quite different and should not be
  confused. Section~\ref{ssec:nominal} contains a comparison between
  \logic and nominal logic that should make the differences clear.}

One of the features desired for the logic presented in this paper is that it
support the $\lambda$-tree approach to the treatment of syntax. As
discussed earlier in this section, such a treatment is typically based
on permitting 
$\lambda$-bindings into terms and using universal and hypothetical
judgments in analyzing these 
terms. Hypothetical judgments force an ``open-world''
assumption; in the setting of interest, we use them to assert
new properties of the constants that are introduced in treating
bound variables. However, our desire to be able to reason
inductively about predicate definitions provides a contradictory
tension: the definition of predicates must be fixed once and for all
in order to state induction principles. This tension is relieved in
the logic we describe by disallowing hypothetical judgments and
instead using lists (as illustrated above) to implicitly encode
contexts needed in syntactic analyses. This approach is demonstrated
in greater detail through the examples in
Section~\ref{ssec:stlc-example}. 

The rest of this paper is structured as follows. We develop a logic
called \logic, that is a rather rich logic, in the next three
sections. Section~\ref{sec:logic}  presents the rules for the core
fragment of \logic that is inherited from
\LG. Section~\ref{sec:nominal-abstraction} introduces
the nominal abstraction
relation with its associated inference rules.  Finally,
Section~\ref{sec:definitions} completes the framework by adding
the mechanism of recursive definitions together with the possibility
of interpreting these inductively or co-inductively. 
A central technical result of this paper is the cut-elimination theorem
for \logic, which is presented in Section~\ref{sec:meta-theory}: an
immediate consequence of this theorem is the consistency of \logic.
Section~\ref{sec:pattern-form} introduces a more flexible and
suggestive style for recursive definitions that allows one to directly
define generic judgments: such definitions allow for the use of
``$\nabla$ in the head.''  We show that this style of definition can
be accounted for by using the nominal abstraction predicate.
Section~\ref{sec:examples} presents a collection of examples that
illustrate the expressiveness of nominal abstraction in \logic; a
reader who is interested in seeing motivating examples first might
peruse this section before digesting the detailed proofs in the
earlier sections.
Section~\ref{sec:related} compares the development in this paper
with recent
related work on specification and reasoning techniques. 

This paper extends the conference paper \cite{gacek08lics} in two
important ways.  First, nominal abstraction is used here as a more
general and modular method for obtaining the benefits of allowing
$\nabla$-quantification in the ``heads of definitions.''  Second, the
modularity provided by nominal abstraction is exploited to allow
recursive definitions to be read inductively and co-inductively.  The
logic in \cite{gacek08lics} was also called \logic: this name is reused
here for a richer logic.
The logic developed in this paper has been implemented in the Abella
system \cite{gacek08ijcar}. Abella has been used successfully in
formalizing the proofs of theorems in a number of areas
\cite{gacek08lfmtp}. 



\section{A Logic with Generic Quantification}
\label{sec:logic}

The core logic underlying \logic is obtained by extending an
intuitionistic and predicative 
subset of Church's Simple Theory of Types with a treatment of generic
judgments. The encoding of generic judgments is based on the
quantifier called $\nabla$ (pronounced nabla) introduced by Miller and
Tiu \cite{miller05tocl} and further includes the structural rules
associated with this quantifier in the logic \LG described by Tiu
\cite{tiu06lfmtp}. 
While it is 
possible to develop a classical variant of \logic as well,  we do not
follow that path here, observing simply that the choice between an
intuitionistic and a classical interpretation can lead to
interesting differences in the meaning of specifications written in
the logic. For example, it has been shown that the 
specification of bisimulation for the $\pi$-calculus within this 
logic corresponds to {\em open bisimulation} under an intuitionistic
reading and to {\em late bisimulation} under a classical reading
\cite{tiu04fguc}.

\subsection{The basic syntax}

Following Church \cite{church40}, terms are constructed from constants
and variables using abstraction and application. All
terms are assigned types using a monomorphic typing system; these
types also constrain the set of well-formed expressions in the
expected way. 
The collection of types includes $o$, a type that corresponds to 
propositions. Well-formed terms of this type are also called
formulas. We assume that $o$ does not appear in the argument
types of any nonlogical constant. 
Two terms are considered to be equal if one can be obtained from the
other by a sequence of applications of the $\alpha$-, $\beta$- and
$\eta$-conversion rules, \ie, the $\lambda$-conversion
rules. This notion of
equality is henceforth assumed implicitly wherever there is a need to
compare terms. 
Logic is 
introduced by including special constants representing the
propositional connectives $\top$, $\bot$, $\land$, $\lor$, $\supset$
and, for every type $\tau$ that does not contain $o$, the constants
$\forall_\tau$ and $\exists_\tau$ of type $(\tau \rightarrow o)
\rightarrow o$.  The binary propositional connectives are written as
usual in infix form and the expressions $\forall_\tau x. B$ and
$\exists_\tau x. B$ abbreviate the formulas $\forall_\tau \lambda x.B$
and $\exists_\tau \lambda x.B$, respectively.  Type subscripts will be
omitted from quantified formulas when they can be inferred from the
context or are not important to the discussion. We also use a
shorthand for iterated quantification: if ${\cal Q}$ is a quantifier,
the expression ${\cal Q}x_1,\ldots,x_n.P$ will abbreviate ${\cal
Q}x_1\ldots{\cal Q}x_n.P$.

The usual inference rules for the universal quantifier can be seen as
equating it to the conjunction of all of its instances: that is, this
quantifier is treated extensionally.
There are several situations  where one
wishes to treat an expression such as ``$B(x)$ holds for all
$x$'' as a statement about the existence of a uniform argument for
every instance rather than the truth of a particular property for each
instance \cite{miller05tocl}; 
such situations typically arise when one is reasoning about the
binding structure of formal objects represented using the
$\lambda${\em -tree syntax} \cite{miller00cl} version of {\em 
higher-order abstract syntax} \cite{pfenning88pldi}.
The $\nabla$-quantifier serves to encode judgments that have this kind
of a ``generic'' property associated with them. Syntactically, this
quantifier corresponds to 
including a constant $\nabla_\tau$ of type $(\tau \rightarrow o)
\rightarrow o$ for each type $\tau$ not containing $o$.\footnote{
We may choose to allow $\nabla$-quantification at fewer types in
particular applications; such a restriction may be 
  useful in adequacy arguments for reasons we discuss later.}
As with the
other quantifiers, $\nabla_\tau x. B$ abbreviates $\nabla_\tau \lambda
x. B$ and the type subscripts are often suppressed for readability.

\subsection{Generic judgments and $\nabla$-quantification}

Towards understanding the $\nabla$-quantifier, let us consider the
rule for typing abstractions in the simply-typed $\lambda$-calculus as
an example of something that we might want to encode within
\logic. This rule has the form
\begin{equation*}
\infer[x \notin \mathop{dom}(\Gamma)]
      {\Gamma \vdash (\lambda x\!:\!\alpha. t) : \alpha \to \beta}
      {\Gamma, x\!:\!\alpha \vdash t : \beta}
\end{equation*}
In the conclusion of this rule, the variable $x$ is bound and its
scope is clearly delimited by the abstraction that binds it. It
appears that $x$ is free in the premise of the rule, but it
is in fact implicitly bound over the judgment whose subcomponents,
specifically $\Gamma$, also
constrain its identity. One way to precisely encode this rule in a
meta-logic is to
introduce an explicit quantifier over $x$ in the upper
judgment; in a proof search setting, the encoding of the rule can then
be understood as one that moves a term level binding to a formula
level binding. However,
the quantifier that is used must have special properties. First, it
should enforce a property of genericity on proofs: we want the
associated typing judgment to have a derivation that is independent of
the choice of term for $x$. Second, we should be able to assume and to
use the property that instantiation terms chosen for $x$ are distinct
from other terms appearing in the judgment, in particular, in $\Gamma$.

Neither the existential nor the universal quantifier have quite the
characteristics needed for $x$ in the encoding task considered. Miller and
Tiu \cite{miller05tocl} therefore introduced the
$\nabla$-quantifier for this purpose. Using this quantifier, the
typing rule can be represented by a formula like
$\forall \Gamma, t, \alpha, \beta.
  (\nabla x. (\Gamma, x\!:\!\alpha \vdash t\, x : \beta)) \supset
  (\Gamma \vdash (\lambda x\!:\!\alpha. t\, x) : \alpha \to \beta)$
where $t$ has a higher-order type which allows its dependency on $x$
to be made explicit.
The inference rules associated with the
$\nabla$-quantifier are designed to ensure the adequacy of such an
encoding: the formula $\nabla x.F$, also called a {\em generic
  judgment}, must be established by deriving $F$ assuming $x$ to be a
completely generic variable and in deriving $\nabla x \nabla y. F$ it
is assumed that the instantiations for $x$ and $y$ are distinct. In  
the logic \logic, we shall assume two further ``structural'' properties
for the $\nabla$-quantifier which flow naturally from the application
domains of interest. First, we shall allow for $\nabla${\em
  -strengthening}, \ie, we will take $\nabla x. F$ and $F$ to be
equivalent if $x$ does not appear in $F$. Second, we shall take the
relative order of $\nabla$-quantifiers to be irrelevant, \ie, we shall
permit a $\nabla${\em -exchange} principle; the formulas $\nabla
x\nabla y. F$ and $\nabla y \nabla x.F$ will be considered to be
equivalent. These assumptions facilitate a simplification of the
inference rules, allowing us to realize generic judgments through a special
kind of constants called {\em nominal constants}.

\subsection{A sequent calculus presentation of the core logic}

\begin{figure*}[t]
\[
\begin{array}{ccc}
\infer[id]{\Sigma : \Gamma, B \vdash B'}{B \approx B'} &
\infer[\cut]{\Sigma : \Gamma, \Delta \vdash C}
            {\Sigma : \Gamma \vdash B & \Sigma : B, \Delta \vdash C} &
\infer[\cL]{\Sigma : \Gamma, B \vdash C}
           {\Sigma : \Gamma, B, B \vdash C}
\\[8pt]
\infer[\botL]{\Sigma :\Gamma,\bot \vdash C}{} &
\infer[\lorL]{\Sigma :\Gamma,B\lor D \vdash C}
                       {\Sigma :\Gamma,B\vdash C & \Sigma:\Gamma,D\vdash C} &
\infer[\lorR,i\in\{1,2\}]{\Sigma : \Gamma \vdash B_1 \lor B_2}
                                   {\Sigma : \Gamma \vdash B_i}
\\[8pt]
\infer[\topR]{\Sigma : \Gamma \vdash \top}{} &
\infer[\landL,i\in\{1,2\}]{\Sigma : \Gamma, B_1 \land B_2
  \vdash C}{\Sigma : \Gamma, B_i \vdash C} &
\infer[\landR]{\Sigma : \Gamma \vdash B \land C}{\Sigma :
  \Gamma \vdash B & \Sigma : \Gamma \vdash C}
\end{array}
\]
\[
\begin{array}{cc}
\infer[\supsetL]{\Sigma : \Gamma, B \supset D \vdash C}
      {\Sigma : \Gamma \vdash B & \Sigma : \Gamma, D \vdash C} &
\infer[\supsetR]{\Sigma : \Gamma \vdash B \supset C}
      {\Sigma : \Gamma, B \vdash C}
\\[8pt]
\infer[\forallL]{\Sigma : \Gamma, \forall_\tau x.B \vdash C}
      {\Sigma, \mathcal{K}, \mathcal{C} \vdash t : \tau &
       \Sigma : \Gamma, B[t/x] \vdash C} &
\infer[\begin{array}{l}\\\forallR,h\notin\Sigma,\\\supp(B)=\{\vec{c}\}\end{array}]
      {\Sigma : \Gamma \vdash \forall x.B}
      {\Sigma, h : \Gamma \vdash B[h\ \vec{c}/x]}
\\[8pt]
\infer[\nablaL,a\notin \supp(B)]
      {\Sigma : \Gamma, \nabla x. B \vdash C}
      {\Sigma : \Gamma, B[a/x] \vdash C} &
\infer[\nablaR,a\notin \supp(B)]
      {\Sigma : \Gamma \vdash \nabla x.B}
      {\Sigma : \Gamma \vdash B[a/x]}
\\[8pt]
\infer[\begin{array}{l}\\\existsL,h\notin\Sigma,\\\supp(B)=\{\vec{c}\}\end{array}]
      {\Sigma : \Gamma, \exists x. B \vdash C}
      {\Sigma, h : \Gamma, B[h\; \vec{c}/x] \vdash C} &
\infer[\existsR]{\Sigma : \Gamma \vdash \exists_\tau x.B}
      {\Sigma, \mathcal{K}, \mathcal{C} \vdash t:\tau &
       \Sigma : \Gamma \vdash B[t/x]}
\end{array}
\]
\caption{The core rules of \logic}
\label{fig:core-rules}
\end{figure*}

The logic \logic assumes
that the collection of constants is partitioned into the set
$\mathcal{C}$ of nominal constants and the set $\mathcal{K}$ of 
usual, non-nominal constants.
We assume the set $\mathcal{C}$ contains an infinite number of nominal
constants for each type at which $\nabla$ quantification is permitted.
We define the {\it support} of a term (or
formula), written $\supp(t)$, as the set of nominal constants
appearing in it.
A permutation of nominal constants is a type-preserving bijection $\pi$ from
$\mathcal{C}$ to $\mathcal{C}$ such that $\{ x\ |\ \pi(x) \neq x\}$ is
finite.  The application of a permutation $\pi$ to a term $t$, denoted
by $\pi.t$, is defined as follows:
\[
\begin{array}{l@{\qquad\qquad }l}
\pi.a = \pi(a), \mbox{ if $a \in \mathcal{C}$} &
\pi.c = c, \mbox{ if $c\notin \mathcal{C}$ is atomic} \\
\pi.(\lambda x.M) = \lambda x.(\pi.M) &
\pi.(M\; N) = (\pi.M)\; (\pi.N)
\end{array}
\]
We extend the notion of equality between terms to encompass also
the application of permutations to nominal constants appearing in
them. Specifically, the relation $B \approx B'$ holds if
there is a permutation $\pi$ such that $B$ $\lambda$-converts to
$\pi.B'$.  Since $\lambda$-convertibility is an equivalence
relation and permutations are invertible and composable, it follows
that $\approx$ is an equivalence relation.

The rules defining the core of \logic are presented in Figure 
\ref{fig:core-rules}. Sequents in this logic have the form $\Sigma :
\Gamma \vdash C$ where $\Gamma$ is a multiset and the signature
$\Sigma$ contains all the free variables of $\Gamma$ and $C$. In
keeping with our restriction on quantification, we assume that $o$
does not appear in the type of any variable in $\Sigma$.  The
expression $B[t/x]$ in the quantifier rules denotes the
capture-avoiding substitution of $t$ for $x$ in the formula $B$. 
In the
$\nabla\mathcal{L}$ and $\nabla\mathcal{R}$ rules, $a$ denotes a
nominal constant of an appropriate type. In the $\exists\mathcal{L}$
and $\forall\mathcal{R}$ rule we use raising \cite{miller92jsc} to
encode the dependency of the quantified variable on the support of
$B$; the expression $(h\ \vec{c})$ in which $h$ is a fresh
eigenvariable is used in these two rules to denote the
(curried) application of $h$ to the constants appearing in the
sequence $\vec{c}$. The
$\forall\mathcal{L}$ and $\exists\mathcal{R}$ rules make use of
judgments of the form $\Sigma, \mathcal{K}, \mathcal{C} \vdash t :
\tau$. These judgments enforce the requirement that the expression
$t$ instantiating the quantifier in the rule is a well-formed term of
type $\tau$ constructed from the eigenvariables in $\Sigma$ and the
constants in ${\cal K} \cup {\cal C}$.
Notice that in contrast the $\forall\mathcal{R}$ and
$\exists\mathcal{L}$ rules seem to allow for a dependency on only a
restricted set of nominal constants.  This asymmetry is not, however,
significant: a consequence of Corollary~\ref{cor:extend} in
Section~\ref{sec:meta-theory} is that the 
dependency expressed through raising in the latter rules can be
extended to any number of nominal constants that are not in the
relevant support set without affecting the provability of sequents.

Equality modulo $\lambda$-conversion is built into the rules in
Figure~\ref{fig:core-rules}, and also into later extensions of
this logic, in a fundamental way: in particular, proofs are preserved
under the replacement of formulas in sequents by ones to which they
$\lambda$-convert.  A more involved observation
is that we can replace a formula $B$ in a sequent by another formula
$B'$ such that $B\approx B'$ without affecting the provability of the
sequent or even the very structure of the proof. As a particular
example, if $a$ and $b$ are nominal constants, then the following
three sequents are all derivable: $P\ a \vdash P\ a$, $P\ b \vdash P\
b$, and $P\ a \vdash P\ b$. The last of these examples makes clear
that nominal constants represent implicit quantification whose scope
is limited to {\em individual formulas} in a sequent rather than
ranging over the entire sequent. For the core logic,
this observation follows from the form of the $id$ rule and the fact
that permutations distribute over logical structure. We shall prove
this property explicitly for the full logic in Section~\ref{sec:meta-theory}.

The treatment of $\nabla$-quantification via nominal constants also
validates the $\nabla$-exchange and $\nabla$-strengthening principles
discussed earlier. It is interesting to note that the latter principle 
implies that every type at which one is  
willing to use $\nabla$-quantification is non-empty and, in fact,
contains an unbounded number of members.  For example, the formula
$\exists_\tau x.\top$ is always provable, even if there are no closed
terms of type $\tau$ because this formula is equivalent to
$\nabla_\tau y.\exists_\tau x.\top$, which is provable.
Similarly, for any given $n\geq 1$, the following formula
is provable
\[
\exists_\tau x_1\ldots\exists_\tau x_n.
 \left[\bigwedge_{1\leq i,j\leq n, i\not= j} x_i \not= x_j\right].
\]



\section{Characterizing Occurrences of Nominal Constants}
\label{sec:nominal-abstraction}

We are interested in adding to our logic the capability of
characterizing occurrences of nominal constants within terms and also
of analyzing the structure of terms with respect to such
occurrences. For example, we may want to define a predicate called
{\em name} that holds of a term exactly when that term is a nominal
constant. Similarly, we might need to identify a binary relation
called {\em fresh} that holds between two terms just in the case that
the first term is a nominal constant that does not occur in the second
term. Towards supporting such possibilities, we define in this section
a special binary relation called {\it nominal abstraction} and then
present proof rules that incorporate an understanding of this relation
into the logic. A formalization of these ideas requires a careful
treatment of substitution. In particular, this operation must be
defined to respect the intended formula level scope of nominal
constants. We begin our discussion with an elaboration of this aspect.

\subsection{Substitutions and their interaction with nominal constants}

The following definition reiterates a common view of substitutions 
in logical contexts. 

\begin{defn}\label{subst}
A substitution is a type preserving mapping from variables
to terms that is the identity at all but a finite number of variables.
The domain of a substitution is the set of variables that are
not mapped to themselves and its range is the
set of terms resulting from applying it to the variables in its
domain.  We write a substitution as $\{t_1/x_1,\ldots,t_n/x_n\}$
where $x_1,\ldots,x_n$ is a list of variables that contains the
domain of the substitution and $t_1,\ldots,t_n$ is the value of the
map on these variables. The support of a substitution $\theta$,
written as $\supp(\theta)$, is the set of nominal constants that appear in
the range of $\theta$. The restriction of a substitution $\theta$ to
the set of variables $\Sigma$, written as $\theta \restrict
\Sigma$, is a mapping that is like $\theta$ on the variables in
$\Sigma$ and the identity everywhere else.  
\end{defn}

A substitution essentially calls for the replacement of
variables by their associated terms in any context to which it is
applied. A complicating factor is that we will want to consider
substitutions in which nominal constants appear in the terms that are
to replace  particular variables. Such a substitution will typically
be determined relative to one 
formula in a sequent but may then have to be applied to other formulas
in the same sequent. In doing this, we have to take into account the
fact that the scopes of the implicit quantifiers over nominal
constants are restricted to individual formulas. Thus, the logically
correct application of a substitution should be accompanied by a
renaming of these nominal constants in the term being substituted into so as to
ensure that they are not confused with the ones appearing in
the range of the substitution. For example, consider the formula
$p\  a\  x$ where $a$ is a nominal constant and $x$ is a
variable; this formula is intended to be equivalent to $\nabla a.p\ 
a\  x$. If we were to substitute $f\  a$ for $x$ naively into it,
we would obtain the formula $p\  a\  (f\  a))$. However, this
results in an unintended capture of a nominal constant by an
(implicit) quantifier as a result of a substitution. To carry out the
substitution in a way that avoids such capture, we should first rename
the nominal constant $a$ in $p\  a \  x$ to some other nominal
constant $b$ and then apply the substitution to produce the formula
$p\  b\  (f \  a)$. 

\begin{defn}\label{ncasubst}
The ordinary application of a substitution $\theta$ to a term $B$ is
denoted by $B[\theta]$ and corresponds to the replacement of the
variables in $B$ by the terms that $\theta$ maps them to, making sure,
as usual, to avoid accidental binding of the variables appearing in
the range of $\theta$. More precisely, if $\theta = \{t_1/x_1,\ldots,
t_n/x_n\}$, then $B[\theta]$ is the term $(\lambda x_1\ldots\lambda x_n.B)\;
t_1\; \ldots\; t_n$; this term is, of course, considered to be equal
to any other term to which it $\lambda$-converts.  By contrast, 
the {\em nominal capture avoiding application} of $\theta$ to $B$ is
written as $B\cas{\theta}$ and is defined as follows. Assuming that
$\pi$ is a permutation of nominal constants that maps those appearing
in $\supp(B)$ to ones not appearing in $\supp(\theta)$, let $B' =
\pi.B$. Then $B\cas{\theta} = B'[\theta]$.
\end{defn}

The notation $B[\theta]$ generalizes the one used in the quantifier
rules in Figure~\ref{fig:core-rules}.  This ordinary notion of
substitution is needed to define such rules and it is used in the
proof theory.  As we will see in Section~\ref{sec:meta-theory},
however, it is nominal capture avoiding substitution that is the
logically correct notion of substitution for \G since it preserves the
provability of sequents.
For this reason, when we speak of the application of a substitution in
an unqualified way, we shall mean the nominal capture avoiding form of
this notion. It is interesting to note that as the treatment of syntax
becomes richer and more abstract, the natural notions of 
equality of expressions and of substitution also change.  
When the syntax of terms is encoded as trees, term equality is
tree equality and substitution corresponds to ``grafting.''  When
syntax involves binding operators (as in first-order formulas or
$\lambda$-terms), then it is natural for equality to become
$\lambda$-convertibility and for substitutions to be
``capture-avoiding'' in the usual sense. Here, we have introduced into 
syntax the additional 
notion of nominal constants, for which we need to upgrade equality to
the $\approx$-relation and substitution to the one which avoids the
capture of nominal constants.

The definition of the nominal capture avoiding application of a
substitution is ambiguous in that we do not uniquely specify the
permutation to be used.  We resolve this ambiguity by deeming as
acceptable {\it any} permutation that avoids conflicts. As a special
instance of the lemma below, we see that for any given formula $B$ and
substitution $\theta$, 
all the possible values for $B\cas{\theta}$ are equivalent modulo the
$\approx$ relation. Moreover, as we show in
Section~\ref{sec:meta-theory}, formulas that are equivalent under 
$\approx$ are interchangeable in the contexts of proofs. 

\begin{lemma}
\label{lem:approx-cas}
If $t \approx t'$ then $t\cas{\theta} \approx t'\cas{\theta}$.
\end{lemma}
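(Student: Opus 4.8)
The plan is to reduce the claim to a single statement about the freedom in choosing the renaming permutation in the definition of nominal capture avoiding substitution, and then to settle that statement by exhibiting an explicit permutation.

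First I would unfold the definitions. Write $t\cas{\theta} = (\pi.t)[\theta]$ and $t'\cas{\theta} = (\pi'.t')[\theta]$, where $\pi$ and $\pi'$ are permutations carrying $\supp(t)$ and $\supp(t')$, respectively, to nominal constants outside $\supp(\theta)$. From $t \approx t'$ fix a permutation $\rho$ with $t$ $\lambda$-convertible to $\rho.t'$. Since $\lambda$-conversion is preserved both by the action of a permutation and by ordinary substitution $[\theta]$, the term $(\pi.t)[\theta]$ is $\lambda$-convertible to $((\pi\circ\rho).t')[\theta]$, whence $t\cas{\theta} \approx ((\pi\circ\rho).t')[\theta]$. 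I would then check that $\pi\circ\rho$ is itself a legal choice of renaming for $t'$, i.e. that it sends $\supp(t')$ outside $\supp(\theta)$: indeed $(\pi\circ\rho)(\supp(t')) = \pi(\supp(\rho.t')) = \pi(\supp(t))$, which is disjoint from $\supp(\theta)$ by the choice of $\pi$. Here I use that $\supp$ is invariant under $\lambda$-conversion and that permutations commute with $\supp$.

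This reduces everything to the following sub-claim: if $\pi_1$ and $\pi_2$ both carry $\supp(B)$ outside $\supp(\theta)$, then $(\pi_1.B)[\theta] \approx (\pi_2.B)[\theta]$; applying it with $B = t'$, $\pi_1 = \pi\circ\rho$, $\pi_2 = \pi'$ and using transitivity of $\approx$ finishes the proof. To prove the sub-claim I set $C = \pi_2.B$ and $\mu = \pi_1\circ\pi_2^{-1}$, so that $\pi_1.B = \mu.C$ and the goal becomes $(\mu.C)[\theta] \approx C[\theta]$. By hypothesis both $\supp(C) = \pi_2(\supp(B))$ and $\mu(\supp(C)) = \pi_1(\supp(B))$ are disjoint from $\supp(\theta)$. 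The key step is to produce a single permutation $\nu$ witnessing $(\mu.C)[\theta] \approx C[\theta]$: I would choose $\nu$ to agree with $\mu$ on $\supp(C)$ and with the identity on $\supp(\theta)$. Such a $\nu$ exists because the two disjointness conditions just noted make these constraints mutually consistent, and because each type carries infinitely many nominal constants, so the resulting partial, type-preserving injection extends to a finite-support bijection. Using the readily verified identity $\nu.(C[\theta]) = (\nu.C)[\nu.\theta]$, where $\nu$ is applied to each term in the range of $\theta$ and which holds because permutations distribute over the abstractions and applications defining $[\theta]$, the choice of $\nu$ gives $\nu.C = \mu.C$ and $\nu.\theta = \theta$, hence $\nu.(C[\theta]) = (\mu.C)[\theta]$ and therefore $(\mu.C)[\theta] \approx C[\theta]$.

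I expect the construction of $\nu$ to be the only real obstacle: one must verify that demanding $\nu$ behave like $\mu$ on the constants coming from the term while fixing those coming from the substitution is consistent, which is exactly where the freshness built into the definition of $\cas{\cdot}$ (the disjointness of $\supp(C)$ and $\mu(\supp(C))$ from $\supp(\theta)$) is used. The remaining ingredients, namely that $\supp$ is $\lambda$-conversion invariant, that permutations commute with $\supp$ and with $\lambda$-conversion, and the distributivity of $\nu$ over $[\theta]$, are routine. I would also remark that taking $t = t'$ recovers, as promised in the surrounding text, the well-definedness of $B\cas{\theta}$ up to $\approx$.
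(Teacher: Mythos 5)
Your proof is correct, and at its core it constructs the very same permutation as the paper: matching the paper's $\pi_1,\pi_2,\pi_3$ with your $\rho,\pi,\pi'$, the paper's witnessing permutation is defined to be $\pi_2.\pi_1.\pi_3^{-1}$ on $\supp(\pi_3.t')$ and the identity on $\supp(\theta)$, which is exactly your $\nu$, and both arguments invoke the same two disjointness facts to see that this finite partial injection is consistent and extends to a permutation. The genuine difference is in the decomposition and the verification. The paper proves in one pass, by structural induction on $t'$ (with an inner induction for the variable case), that $(\pi_2.t)[\theta]$ $\lambda$-converts to $\pi.((\pi_3.t')[\theta])$. You instead (i) absorb the permutation $\rho$ witnessing $t\approx t'$ into the choice of admissible renaming, thereby reducing the lemma to the well-definedness statement that any two admissible renamings of the same term yield $\approx$-equivalent results, and (ii) discharge that statement algebraically via the identities $\nu.(C[\theta]) = (\nu.C)[\nu.\theta]$, $\nu.C=\mu.C$, and $\nu.\theta=\theta$, so that all structural inductions are quarantined inside routine commutation facts about permutations. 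Your route is somewhat more modular: it isolates as an explicit intermediate claim the well-definedness of $\cas{\cdot}$ up to $\approx$, which the paper only extracts afterward as a special instance of the lemma. One caveat is shared by both arguments: the step $\supp(\rho.t')=\supp(t)$ (the paper's $\supp(\pi_2.\pi_1.t')=\supp(\pi_2.t)$) uses invariance of support under $\lambda$-conversion, which is legitimate only when support is read off the $\lambda$-normal form, since $\beta$-reduction can erase nominal constants; you flag this assumption explicitly, whereas the paper leaves it implicit.
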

\begin{proof}
Let $t$ be $\lambda$-convertible to $\pi_1.t'$, let $t\cas{\theta} =
(\pi_2.t)[\theta]$ where $\supp(\pi_2.t) \cap \supp(\theta) =
\emptyset$, and let $t'\cas{\theta}$ be $\lambda$-convertible to
$(\pi_3.t')[\theta]$ where $\supp(\pi_3.t')\cap \supp(\theta) =
\emptyset$. Then we define a function $\pi$ partially by the following
rules:
\begin{enumerate}
\item $\pi(c) = \pi_2.\pi_1.\pi_3^{-1}(c)$ if $c \in \supp(\pi_3.t')$ and
\item $\pi(c) = c$ if $c \in \supp(\theta)$.
\end{enumerate}
Since $\supp(\pi_3.t') \cap \supp(\theta) = \emptyset$, these rules
are not contradictory, \ie, this (partial) function is well-defined.
The range of the first rule is $\supp(\pi_2.\pi_1.\pi_3^{-1}.\pi_3.t')
= \supp(\pi_2.\pi_1.t') = \supp(\pi_2.t)$ which is disjoint from the
range of the second rule, $\supp(\theta)$. Since the mapping in each
rule is determined by a permutation, these rules together define a
one-to-one partial mapping that can be extended to a bijection on
$\mathcal{C}$. We take any such extension to be the complete
definition of $\pi$ that must therefore be a permutation.

To prove that $t\cas{\theta} \approx t'\cas{\theta}$ it suffices to
show that if $t$ is $\lambda$-convertible to $\pi_1.t'$ then
$(\pi_2.t)[\theta]$ is $\lambda$-convertible to
$\pi.((\pi_3.t')[\theta])$. We will prove this by induction on the
structure of $t'$. Permutations and substitutions distribute
over the structure of terms, thus the cases for when $t'$ is an
abstraction or application follow directly from the induction
hypothesis.
If $t'$ is a nominal constant $c$ then
$(\pi_2.t)[\theta]$ must be $\lambda$-convertible to
$(\pi_2.\pi_1.c)[\theta] = \pi_2.\pi_1.c$. Also,
$\pi.((\pi_3.t')[\theta])$ must be $\lambda$-convertible to
$\pi.\pi_3.c$. Further, in this case the first rule for $\pi$ applies
which means $\pi.\pi_3.c = \pi_2.\pi_1.\pi_3^{-1}.\pi_3.c =
\pi_2.\pi_1.c$. Thus $(\pi_2.t)[\theta]$ is
$\lambda$-convertible to $\pi.((\pi_3.t')[\theta])$. Finally, suppose
$t'$ is a variable $x$. In this case $t$ must be $\lambda$-convertible
to $x$ so that we must show $x[\theta]$ $\lambda$-converts to
$\pi.(x[\theta])$. If $x$ does not have a binding in $\theta$ then
both terms are equal. Alternatively, if $x[\theta] = s$ then $\pi.s =
s$ follows from an inner induction on $s$ and the second rule for
$\pi$.
Thus $(\pi_2.t)[\theta]$ $\lambda$-converts to
$\pi.((\pi_3.t')[\theta])$, as is required.
\end{proof}

We shall need to consider the composition of substitutions later in
this section. The definition of this notion must also pay attention to
the presence of nominal constants. 

\begin{defn}\label{nascomp}
Given a substitution $\theta$ and a permutation $\pi$ of nominal
constants, let $\pi.\theta$ denote 
the substitution that is obtained by replacing each $t/x$ in $\theta$
with $(\pi.t)/x$. Given any two substitutions $\theta$ and $\rho$, let
$\theta \circ \rho$ denote the substitution that is such that
$B[\theta\circ \rho] = B[\theta][\rho]$. In this context, the {\em
  nominal capture   avoiding composition} of $\theta$ and $\rho$ is
written as $\theta\bullet\rho$ and defined as follows. Let $\pi$ be a
permutation of nominal constants such that 
$\supp(\pi.\theta)$ is disjoint from $\supp(\rho)$. Then
$\theta\bullet\rho = (\pi.\theta)\circ \rho$.
\end{defn}\label{substequiv}
The notation $\theta \circ \rho$ in the above definition represents
the usual composition of $\theta$ and $\rho$ and can, in fact, be
given in an explicit form based on these substitutions. Thus, $\theta
\bullet \rho$ can also be presented in an explicit form. Notice that our
definition of nominal capture avoiding composition is, once again,
ambiguous because it does not fix the permutation to be used,
accepting instead any one that satisfies the constraints. However, as
before, this ambiguity is harmless. To understand this, we first
extend the notion of equivalence under permutations to substitutions.
\begin{defn}
Two substitutions $\theta$ and $\rho$ are considered to be permutation
equivalent, written $\theta \approx \rho$, if and only if there is a
permutation of nominal constants $\pi$ such that $\theta =
\pi.\rho$. This notion of equivalence may also be parameterized by a
set of variables $\Sigma$ as follows: $\theta \approx_\Sigma \rho$
just in the case that $\theta \restrict \Sigma \approx \rho \restrict
\Sigma$. 
\end{defn}
It is easy to see that all possible choices for $\theta \bullet \rho$
are permutation equivalent and that if $\varphi_1 \approx \varphi_2$
then $B\cas{\varphi_1} \approx B\cas{\varphi_2}$ for any term $B$. Thus,
if our focus is on provability, the ambiguity in
Definition~\ref{nascomp} is inconsequential by a result to be
established in Section~\ref{sec:meta-theory}. As a further observation,
note that $B\cas{\theta\bullet\rho} \approx B\cas{\theta}\cas{\rho}$
for any $B$. Hence our notion of nominal capture avoiding composition
of substitutions is sensible.

The composition operation can be used to define an ordering
relation between substitutions:
\begin{defn}\label{nasordering}
Given two substitutions $\rho$ and $\theta$, we say $\rho$ is {\em
  less general than} $\theta$, denoted by $\rho \leq \theta$, if and
only if there exists a $\sigma$ such that $\rho \approx
\theta\bullet\sigma$. This relation can also be parametrized by a
set of variables: $\rho$ is less general than $\theta$
relative to $\Sigma$, written as $\rho \leq_\Sigma \theta$, if and
only if $\rho
\restrict \Sigma \leq \theta \restrict \Sigma$.
\end{defn}
The notion of generality between substitutions that is based on
nominal capture avoiding composition has a different flavor from that
based on the traditional form of substitution composition. For
example, if $a$ is a nominal constant, the substitution $\{a/x\}$ is 
strictly less general than $\{a/x, y' a/y\}$ relative to $\Sigma$ for any
$\Sigma$ which contains $x$ and $y$. To see this, note that we can
compose the latter substitution with $\{(\lambda z.y)/y'\}$ to obtain
the former, but the naive attempt to compose the former with
$\{y'a/y\}$ yields $\{b/x, y'a/y\}$ where $b$ is a nominal constant
distinct from $a$. In fact, the ``most general'' solution relative to $\Sigma$
containing $\{a/x\}$ will be $\{a/x\}\cup \{z'a/z \mid
z\in\Sigma\backslash\{x\}\}$.

\subsection{Nominal Abstraction} 

The nominal abstraction relation allows implicit formula level
bindings represented by nominal constants to be moved into explicit
abstractions over terms.  The following notation is useful for
defining this relationship.

\begin{notation}
Let $t$ be a term, let $c_1,\ldots,c_n$ be distinct nominal constants that
possibly occur in $t$, and let $y_1,\ldots,y_n$ be distinct variables
not occurring in $t$ and such that, for $1 \leq i \leq n$, $y_i$ and
$c_i$ have the same type. Then we write $\lambda c_1
\ldots\lambda c_n. t$ to denote the term $\lambda y_1 \ldots \lambda
y_n . t'$ where $t'$ is the term obtained from $t$ by replacing
$c_i$ by $y_i$ for $1\leq i\leq n$. 
\end{notation}

There is an ambiguity in the notation introduced above in that 
the choice of variables $y_1,\ldots,y_n$ is not fixed. However, this
ambiguity is harmless: the terms that are produced by acceptable
choices are all equivalent under a renaming of bound variables.

\begin{defn}\label{nominal-abstraction}
Let $n\ge 0$ and 
let $s$ and $t$ be terms of type $\tau_1 \to \cdots \to \tau_n \to
\tau$ and $\tau$, respectively; notice, in particular, that $s$ takes 
$n$ arguments to yield a term of the same type as $t$.
Then the expression $s \unrhd t$ is a formula that is referred to as a
nominal abstraction of degree $n$ or simply as a nominal abstraction. The
symbol $\unrhd$ is used here in an overloaded way in that the degree
of the nominal abstraction it participates in can vary.
The nominal abstraction $s \unrhd t$ of degree $n$ is said to hold just in
the case that $s$ $\lambda$-converts to $\lambda c_1\ldots c_n.t$ for
some nominal constants $c_1,\ldots,c_n$.
\end{defn}

Clearly, nominal abstraction of degree $0$ is the same as equality
between terms based on $\lambda$-conversion, and we will therefore use
$=$  to denote this relation in that situation. In the more general
case, 
the term on the left of the operator serves as a pattern for isolating
occurrences of nominal constants.
For example, if $p$ is a binary constructor and $c_1$ and $c_2$ are
nominal constants, then the nominal abstractions of the following
first row hold while those of the second do not.
 \begin{align*}
\lambda x. x &\unrhd c_1 &
\lambda x. p\ x\ c_2 &\unrhd p\ c_1\ c_2 &
\lambda x. \lambda y. p\ x\ y &\unrhd p\ c_1\ c_2 \\
\lambda x. x &\not\mathrel\unrhd p\ c_1\ c_2 &
\lambda x. p\ x\ c_2 &\not\mathrel\unrhd p\ c_2\ c_1 &
\lambda x. \lambda y. p\ x\ y &\not\mathrel\unrhd p\ c_1\ c_1
\end{align*}

The symbol $\unrhd$ corresponds, at the moment, to a
mathematical relation that holds between pairs of terms as 
explicated by Definition~\ref{nominal-abstraction}. We now
overload this symbol by treating it also as a binary
predicate symbol of \logic. In the next subsection we shall add
inference rules to make the mathematical understanding of 
$\unrhd$ coincide with its syntactic use as a predicate in
sequents. It is, of course, necessary to be able to determine when
we mean to use $\unrhd$ in the mathematical sense and when as a
logical symbol. When we write an expression such as $s\unrhd t$
without qualification, this should be read as a logical formula 
whereas if we say that ``$s\unrhd t$ holds'' then we are referring to
the abstract relation from 
Definition~\ref{nominal-abstraction}. We might also sometimes use an 
expression such as ``$(s\unrhd t)\cas{\theta}$
holds.'' In this case, we first treat $s \unrhd t$ as a formula to
which we apply the substitution $\theta$ in a nominal capture avoiding
way to get a (syntactic) expression of the form $s'\unrhd t'$. We then
read $\unrhd$ in the mathematical sense, interpreting the overall
expression as the assertion that ``$s'\unrhd t'$ holds.'' Note in this
context that 
$s \unrhd t$ constitutes a 
single formula when read syntactically and hence the expression
$(s\unrhd t)\cas{\theta}$ is, in general, {\it not} equivalent to 
the expression $s\cas{\theta}\unrhd t\cas{\theta}$. 

In the proof-theoretic setting, nominal abstraction will be used with
terms that contain free occurrences of variables for which
substitutions can be made. The following definition is relevant to
this situation.

\begin{defn}\label{nasolution}
A substitution $\theta$ is said to be a solution to the nominal
abstraction $s \unrhd t$ just in the case that $(s\unrhd t)\cas{\theta}$ holds.  
\end{defn}

Solutions to a nominal abstraction can be used to provide rich 
characterizations of the structures of terms. For example, consider
the nominal abstraction 
$(\lambda x.\fresh x T) \unrhd S$ in which $T$ and $S$ are
variables and {\sl fresh} is a binary predicate symbol.  Any solution
to this problem requires that $S$ be
substituted for by a term of the form $\fresh a R$ where $a$
is a nominal constant and $R$ is a term in which $a$ does not appear,
\ie, $a$ must be ``fresh'' to $R$.

An important property of solutions to a nominal abstraction is that
these are preserved under permutations to nominal constants. We
establish this fact in the lemma below; this lemma will be used later
in showing the stability of the provability of sequents with
respect to the replacement of formulas by ones they are equivalent to
modulo the $\approx$ relation.

\begin{lemma}
\label{lem:na-approx}
Suppose $(s\unrhd t) \approx (s'\unrhd t')$. Then $s\unrhd t$ and
$s'\unrhd t'$ have exactly the same solutions. In particular,
$s\unrhd t$ holds if and only if $s'\unrhd t'$ holds.
\end{lemma}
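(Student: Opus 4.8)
The plan is to reduce the claim about solutions to the single assertion that the property of \emph{holding} is stable under the relation $\approx$ on nominal abstraction formulas, and to prove that assertion by commuting a permutation of nominal constants with the special abstraction notation $\lambda c_1\ldots c_n. t$. First I would unpack the hypothesis. Since $\unrhd$ is a non-nominal constant and permutations distribute over application and abstraction, $(s\unrhd t)\approx(s'\unrhd t')$ means precisely that there is a permutation $\pi$ such that $s$ $\lambda$-converts to $\pi.s'$ and $t$ $\lambda$-converts to $\pi.t'$.

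The core step is to show that whenever $(u\unrhd v)\approx(u'\unrhd v')$ and $u\unrhd v$ holds, then $u'\unrhd v'$ holds. Suppose the common degree is $n$, so $u$ $\lambda$-converts to $\lambda c_1\ldots c_n. v$ for some distinct nominal constants $c_1,\ldots,c_n$, and let $\pi$ be such that $u$ $\lambda$-converts to $\pi.u'$ and $v$ to $\pi.v'$. Then $\pi.u'$ $\lambda$-converts to $\lambda c_1\ldots c_n. v$, and since $\pi^{-1}$ distributes over $\lambda$-conversion we obtain that $u'$ $\lambda$-converts to $\pi^{-1}.(\lambda c_1\ldots c_n. v)$. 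The crux is the identity
\[
\pi^{-1}.(\lambda c_1\ldots c_n. v) = \lambda d_1\ldots d_n.(\pi^{-1}.v),
\qquad d_i = \pi^{-1}(c_i).
\]
The $d_i$ are distinct nominal constants because $\pi^{-1}$ is a bijection, and $\pi^{-1}.v$ $\lambda$-converts to $\pi^{-1}.\pi.v' = v'$. Combining these facts exhibits $u'$ as $\lambda$-converting to $\lambda d_1\ldots d_n. v'$, which is exactly the assertion that $u'\unrhd v'$ holds.

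With the core step in hand I would dispatch the solutions claim. Given a solution $\theta$ to $s\unrhd t$, i.e.\ $(s\unrhd t)\cas{\theta}$ holds, Lemma~\ref{lem:approx-cas} applied to the formula $s\unrhd t$ yields $(s\unrhd t)\cas{\theta}\approx(s'\unrhd t')\cas{\theta}$. Each side is again a nominal abstraction formula of the form $u\unrhd v$, since one permutation followed by one ordinary substitution both distribute through the constant $\unrhd$. Hence the core step transfers holding from the left-hand formula to the right-hand one, showing that $\theta$ is also a solution to $s'\unrhd t'$; symmetry gives the converse, so the two nominal abstractions have the same solution set. Taking $\theta$ to be the identity substitution recovers the ``in particular'' clause.

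The main obstacle is establishing the displayed identity rigorously. The two delicate points are that the bound variables $y_1,\ldots,y_n$ introduced by the abstraction notation are ordinary variables and so are fixed by $\pi$, and that pushing $\pi^{-1}$ through $v$ must not manufacture an occurrence of some $d_i$ at a position that did not already carry $c_i$ — which is ruled out precisely because $\pi^{-1}$ is injective, so $\pi^{-1}(e)=d_i=\pi^{-1}(c_i)$ forces $e=c_i$. I would prove the identity by induction on the structure of $v$, using that both permutation application and the constant-to-variable replacement underlying the abstraction notation distribute over abstraction and application, with the nominal-constant case handled by the injectivity observation just noted.
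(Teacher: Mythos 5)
Your proposal is correct and follows essentially the same route as the paper's proof: both first establish the ``in particular'' clause by commuting a permutation with the nominal-constant abstraction $\lambda c_1\ldots c_n.\,v$ via structural induction (your use of $\pi^{-1}$ rather than $\pi$ is immaterial), and both then lift this to arbitrary solutions by applying Lemma~\ref{lem:approx-cas} to $(s\unrhd t)\cas{\theta}$ and $(s'\unrhd t')\cas{\theta}$. Your explicit treatment of the injectivity point in the displayed identity is a welcome elaboration of what the paper leaves as an ``inner induction,'' but it is the same argument.
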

\begin{proof}
We prove the particular result first. It suffices to show it in
the forward direction since $\approx$ is symmetric. Let $\pi$ be a
permutation such that the expression $s'\unrhd t'$ $\lambda$-converts to
$\pi.(s\unrhd t)$. Now suppose $s \unrhd t$ holds since $s$
$\lambda$-converts to $\lambda\vec{c}.t$. Then an inner induction on
$t'$ shows that $s'$
$\lambda$-converts to $\lambda(\pi.\vec{c}).t'$ where $\pi.\vec{c}$ is
the result of applying $\pi$ to each element in the sequence $\vec{c}$.
Thus $s' \unrhd t'$ holds.

For the general result it again suffices to show it in one direction,
\ie, that all the solutions of $s\unrhd t$ are solutions to $s'\unrhd
t'$. Let $\theta$ be a substitution such that $(s\unrhd
t)\cas{\theta}$ holds. By Lemma~\ref{lem:approx-cas}, $(s\unrhd
t)\cas{\theta} \approx (s'\unrhd t')\cas{\theta}$. When the
substitutions are carried out, this relation has the same form as the
particular result from the first half of this proof, and thus
$(s'\unrhd t')\cas{\theta}$ holds.
\end{proof}

\subsection{Proof rules for nominal abstraction}

\begin{figure}[t]
\begin{align*}
\infer[\unrhdL]{\Sigma : \Gamma, s \unrhd t \vdash C}
{\left\{\Sigma\theta : \Gamma\cas{\theta} \vdash C\cas{\theta} \;|\;
  \hbox{$\theta$ is a solution to $(s \unrhd t)$}
  \right\}_\theta}
&&
\infer[\unrhdR,\ \hbox{$s \unrhd t$ holds}]
{\Sigma : \Gamma \vdash s \unrhd t}
{}
\end{align*}
\caption{Nominal abstraction rules}
\label{fig:na-rules}
\bigskip
\[\infer[\unrhdL_{\CSNAS}]
       {\Sigma : \Gamma, s \unrhd t \vdash C}
       {\left\{\Sigma\theta : \Gamma\cas{\theta} \vdash C\cas{\theta}
            \;|\;
          \theta \in \CSNAS(\Sigma, s, t)
        \right\}_\theta}
\]
\caption{A variant of $\unrhdL$ based on \CSNAS}
\label{fig:csnas}
\end{figure}

We now add the left and right introduction rules for $\unrhd$ 
that are shown in Figure~\ref{fig:na-rules} to link its use as a
predicate symbol to its mathematical interpretation.
The expression $\Sigma \theta$ in the $\unrhdL$ rule denotes the
application of a substitution $\theta=\{t_1/x_1,\ldots,t_n/x_n\}$ to
the signature $\Sigma$ that is defined to be the signature that
results when removing 
from $\Sigma$ the variables $\{x_1,\ldots,x_n\}$ and then adding
every variable that is free in any term in $\{t_1,\ldots,t_n\}$.
Notice also that in the same inference rule the operator 
$\cas{\theta}$ is applied to a multiset of formulas in
the natural way: 
$\Gamma\cas{\theta}=\{B\cas{\theta}\;|\; B\in\Gamma\}$.
Note that the $\unrhdL$ rule has
an {\it a priori} unspecified number of premises that 
depends on the number of substitutions that are solutions to the relevant
nominal abstraction. 
If $s \unrhd t$ expresses an unsatisfiable constraint, meaning that it
has no solutions, then the premise of $\unrhdL$ is empty and the rule
provides an immediate proof of its conclusion.

The $\unrhdL$ and $\unrhdR$ rules capture nicely the intended
interpretation of nominal abstraction. However, there is an obstacle
to using the former rule in derivations: this rule has an infinite
number of premises any time the nominal abstraction $s \unrhd t$ has a
solution. We can overcome this difficulty by describing a rule that
includes only a few of these premises but in such way that their
provability ensures the provability of all the other premises.  Since
the provability of $\Gamma \vdash C$ implies the provability of
$\Gamma\cas{\theta} \vdash C\cas{\theta}$ for any $\theta$ (a property
established formally in Section~\ref{sec:meta-theory}), if the first
sequent is a premise of an occurrence of the $\unrhdL$ rule, the
second does not need to be used as a premise of that same rule
occurrence.  Thus, we can limit the set of premises to be considered
if we can identify with any given nominal abstraction a (possibly
finite) set of solutions from which any other solution can be obtained
through composition with a suitable substitution. The following
definition formalizes the idea of such a ``covering set.'' 

\begin{defn}\label{csnas}
A {\em complete set of nominal abstraction solutions} (\CSNAS) of $s$
and $t$ on $\Sigma$
is a set $S$ of substitutions such
that
\begin{enumerate}
\item each $\theta \in S$ is a solution to $s \unrhd t$, and
\item for every solution $\rho$ to $s \unrhd t$, there exists a
$\theta \in S$ such that $\rho \leq_\Sigma \theta$.
\end{enumerate}
We denote any such set by $\CSNAS(\Sigma, s, t)$.
\end{defn}
Using this definition we present an alternative version of $\unrhdL$
in Figure~\ref{fig:csnas}. Note that if we can find a finite complete
set of nominal abstraction solutions then the number of premises to
this rule will be finite.

\begin{theorem}\label{th:csnas}
The rules $\unrhdL$ and $\unrhdL_{\CSNAS}$ are inter-admissible.
\end{theorem}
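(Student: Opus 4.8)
The plan is to prove inter-admissibility by showing each rule can be derived using the other, which amounts to checking that the two sets of premises are equivalent for provability purposes. The essential tool is the monotonicity property flagged in the text: if $\Sigma : \Gamma \vdash C$ is provable, then so is $\Sigma\sigma : \Gamma\cas{\sigma} \vdash C\cas{\sigma}$ for any substitution $\sigma$ (established formally in Section~\ref{sec:meta-theory}). I would state this as the lemma I rely on and then argue in two directions.

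First I would show that $\unrhdL_{\CSNAS}$ is admissible given $\unrhdL$. This direction is the easy one: since each $\theta \in \CSNAS(\Sigma, s, t)$ is by definition a solution to $s \unrhd t$ (clause~1 of Definition~\ref{csnas}), every premise of $\unrhdL_{\CSNAS}$ is already among the premises of $\unrhdL$. Thus any derivation of the conclusion via $\unrhdL_{\CSNAS}$ uses a subset of the premises that $\unrhdL$ would require, and since $\unrhdL$ permits using any sufficient collection of its premises, the conclusion follows. More carefully, I would observe that a proof using $\unrhdL_{\CSNAS}$ can be turned into one using $\unrhdL$ by supplying, for each solution $\rho$ not in the \CSNAS, a derivation obtained from the corresponding \CSNAS-premise by monotonicity.

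The main work is the other direction: deriving $\unrhdL$ from $\unrhdL_{\CSNAS}$. Here I would assume I have derivations of all the \CSNAS-premises $\Sigma\theta : \Gamma\cas{\theta} \vdash C\cas{\theta}$ for $\theta \in \CSNAS(\Sigma, s, t)$, and I must produce a derivation of each $\unrhdL$-premise $\Sigma\rho : \Gamma\cas{\rho} \vdash C\cas{\rho}$ for an arbitrary solution $\rho$. By clause~2 of Definition~\ref{csnas}, there is a $\theta$ in the \CSNAS with $\rho \leq_\Sigma \theta$, so by Definition~\ref{nasordering} there is a $\sigma$ with $\rho \approx_\Sigma \theta\bullet\sigma$. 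I would then apply monotonicity to the $\theta$-premise with the substitution $\sigma$, obtaining a proof of $\Sigma(\theta\bullet\sigma) : \Gamma\cas{\theta\bullet\sigma} \vdash C\cas{\theta\bullet\sigma}$, and rewrite this into the desired $\rho$-premise using the facts recorded after Definition~\ref{nascomp}, namely that $B\cas{\theta\bullet\sigma} \approx B\cas{\theta}\cas{\sigma}$ and that permutation-equivalent substitutions yield $\approx$-equivalent results ($\varphi_1 \approx \varphi_2$ implies $B\cas{\varphi_1} \approx B\cas{\varphi_2}$), together with the stability of provability under $\approx$.

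The hard part will be the bookkeeping around the restriction to $\Sigma$ in $\leq_\Sigma$ and $\approx_\Sigma$: the generality relation only controls the behavior of the substitutions on the signature variables, whereas the sequents $\Gamma\cas{\rho}$ and $C\cas{\rho}$ only mention variables in $\Sigma$ anyway, so I must argue that agreement modulo $\approx$ on $\Sigma$ suffices to transport provability. I expect this to go through because the free variables of $\Gamma$ and $C$ all lie in $\Sigma$, so only $\rho\restrict\Sigma$ and $(\theta\bullet\sigma)\restrict\Sigma$ matter when computing $\Gamma\cas{\rho}$ and $\Gamma\cas{\theta\bullet\sigma}$; once this is pinned down, the $\approx$-stability lemma from Section~\ref{sec:meta-theory} closes the gap. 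I would also need to confirm that the signature $\Sigma\rho$ produced matches (up to the relevant equivalence) the one obtained from the \CSNAS-derivation after applying $\sigma$, which follows from the explicit description of how substitution acts on signatures given just before Definition~\ref{csnas}.
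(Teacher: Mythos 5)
Your proposal is correct and takes essentially the same approach as the paper's proof: one direction rests on the observation that the premises of $\unrhdL_{\CSNAS}$ form a subset of the premises of $\unrhdL$, and the other direction combines clause~2 of Definition~\ref{csnas} with the substitution lemma and the stability of provability under $\approx$ (dropping the $\restrict\Sigma$ restriction because all eigenvariables of the sequent lie in $\Sigma$), which is exactly the paper's argument. The only blemish is expositional rather than mathematical: your labels for the two directions are partly swapped, and the clause ``since $\unrhdL$ permits using any sufficient collection of its premises'' is false as stated (the rule demands one premise for every solution), but the ``more carefully'' construction you give immediately afterwards---supplying the missing premises from a \CSNAS-premise via monotonicity---is precisely the correct repair.
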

\begin{proof}
Suppose we have the following arbitrary instance of $\unrhdL$ in a
derivation: 
\begin{equation*}
\infer
 [\unrhdL]
 {\Sigma : \Gamma, s \unrhd t \vdash C}
 {\left\{
\Sigma\theta : \Gamma\cas{\theta} \vdash C\cas{\theta} \;|\;
     \hbox{$\theta$ is a solution to $(s \unrhd t)$}
\right\}_\theta}
\end{equation*}
This rule can be replaced with a use of $\unrhdL_{\CSNAS}$ instead if
we could be certain that, for each $\rho \in \CSNAS(\Sigma,s,t)$, it is
the case that $\Sigma\rho : \Gamma\cas{\rho} \vdash
C\cas{\rho}$ is included in the set of premises of the shown rule
instance. But this must be the case: by the 
definition  of $\CSNAS$, each such $\rho$ is a solution to $s \unrhd
t$.

In the other direction, suppose we have the following arbitrary
instance of $\unrhdL_{\CSNAS}$.
\begin{equation*}
\infer
 [\unrhdL_{\CSNAS}]
 {\Sigma : \Gamma, s \unrhd t \vdash C}
 {\left\{
\Sigma\theta : \Gamma\cas{\theta} \vdash C\cas{\theta}
      \;|\;
    \theta \in \CSNAS(\Sigma, s, t)
\right\}_\theta}
\end{equation*}
To replace this rule with a use of the $\unrhdL$ rule
instead, we need to be able to construct a
derivation of $\Sigma\rho : \Gamma\cas{\rho} \vdash C\cas{\rho}$ for
each $\rho$ that is a solution to $s \unrhd t$. By the definition of
$\CSNAS$, we know that for any such $\rho$ there exists a $\theta \in
\CSNAS(\Sigma,s,t)$ such that $\rho \leq_\Sigma \theta$, \ie, such
that there 
exists a $\sigma$ for which $\rho\restrict\Sigma \approx
(\theta\restrict\Sigma) \bullet \sigma$. Since we are considering the
application of these substitutions to a sequent all of whose
eigenvariables are contained in $\Sigma$, we can drop the restriction
on the substitutions and suppose that $\rho \approx \theta \bullet
\sigma$. Now, we shall show in Section~\ref{sec:meta-theory} that if a
sequent has a derivation then the result of applying a substitution to
it in a nominal capture-avoiding way produces a sequent that also has
a derivation. Using this observation, it follows that
$\Sigma\theta\sigma : \Gamma\cas{\theta}\cas{\sigma} \vdash
C\cas{\theta}\cas{\sigma}$  has a proof. But this sequent is
permutation equivalent to $\Sigma\rho : \Gamma\cas{\rho} \vdash
C\cas{\rho}$ which must, again by a result established explicitly in
Section~\ref{sec:meta-theory}, also have a proof.
\end{proof}

Theorem~\ref{th:csnas} allows us to choose which of the left rules we
wish to consider in any given context. We shall assume the $\unrhdL$
rule in the formal treatment in the rest of this paper, leaving the
use of the $\unrhdL_{\CSNAS}$ rule to practical applications of the
logic.

\subsection{Computing complete sets of nominal abstraction solutions} 

For the $\unrhdL_{CSNAS}$ rule to be useful, we need an effective way
to compute restricted complete sets of nominal abstraction
solutions. We show here that the task of finding such complete sets of
solutions can be reduced to that of finding complete sets of unifiers
(\CSU) for higher-order unification problems \cite{huet75tcs}. In the
straightforward approach to finding a solution to a nominal
abstraction $s \unrhd t$, we would first identify a substitution
$\theta$ that we apply to $s \unrhd t$ to get $s' \unrhd t'$ and we
would subsequently look for nominal constants to abstract from $t'$ to
get $s'$.  To relate this problem to the usual notion of unification,
we would like to invert this order: in particular, we would like to
consider all possible ways of abstracting over nominal constants first
and only later think of applying substitutions to make the terms
equal. The difficulty with this second approach is that we do not know
which nominal constants might appear in $t'$ until after the
substitution is applied. However, there is a way around this
problem. Given the nominal abstraction $s \unrhd t$ of degree $n$, we
first consider substitutions for the variables occurring in it that
introduce $n$ new nominal constants in a completely general way.  Then
we consider all possible ways of abstracting over the nominal
constants appearing in the altered form of $t$ and, for each of these
cases, we look for a complete set of unifiers. 

The idea described above is formalized in the following definition and
associated theorem. We use the notation $\CSU(s,t)$ in them to denote
an arbitrary but fixed selection of a complete set of unifiers for
the terms $s$ and $t$.

\begin{defn}\label{def:s}
Let $s$ and $t$ be terms of type $\tau_1 \to \ldots \to \tau_n
\to \tau$ and $\tau$, respectively. Let $c_1,\ldots,c_n$ be $n$
distinct nominal constants disjoint from $\supp(s\unrhd t)$ such that,
for $1 \leq i \leq n$, $c_i$ has the type $\tau_i$. Let $\Sigma$ be a
set of variables and for each $h \in \Sigma$ of type $\tau'$, let
$h'$ be a distinct variable not in $\Sigma$ that has type
$\tau_1\to \ldots\to\tau_n\to \tau'$. Let $\sigma = \{h'\ c_1\ \ldots\
c_n/h \mid h \in \Sigma\}$ and let $s' = s[\sigma]$ and $t' =
t[\sigma]$. Let
\begin{equation*}
C = \bigcup_{\vec{a}} \CSU(\lambda \vec{b}.s', \lambda \vec{b}. \lambda\vec{a}.t')
\end{equation*}
where $\vec{a}= a_1,\ldots,a_n$ ranges over all selections of $n$
distinct nominal constants from $\supp(t)\cup \{\vec{c}\}$ such that,
for $1 \leq i \leq n$,
$a_i$ has type $\tau_i$ and $\vec{b}$ is some corresponding listing of
all the nominal constants in $s'$ and $t'$ that are not included in
$\vec{a}$. 
Then we define
\begin{equation*}
S(\Sigma, s, t) = \{ \sigma \bullet \rho \mid \rho \in C \}
\end{equation*}
\end{defn}

The use of the substitution $\sigma$ above represents
another instance of the application of the general technique of
raising that allows  
certain variables (the $h$ variables in this definition) whose
substitution instances might depend on certain nominal constants
($c_1,\ldots,c_n$ here) to be replaced by new variables of higher type
(the $h'$ variables) whose substitution instances are not allowed to 
depend on those nominal constants. This technique was previously used
in the $\existsL$ and $\forallR$ rules presented in
Section~\ref{sec:logic}.  

An important observation concerning Definition~\ref{def:s} is that it
requires us to consider all possible (ordered) selections
$a_1,\ldots,a_n$ of distinct nominal constants from $\supp(t)\cup
\{\vec{c}\}$. The set of such selections is potentially large,
having in it at least $n!$ members. However, in the uses that
we have seen of \logic in reasoning tasks, $n$ is typically small,
often either $1$ or $2$. Moreover, in these reasoning applications,
the cardinality of the set $\supp(t)\cup \{\vec{c}\}$ is also usually
small.  

\begin{theorem}
$S(\Sigma, s, t)$ is a complete set of nominal abstraction solutions
for $s\unrhd t$ on $\Sigma$.
\end{theorem}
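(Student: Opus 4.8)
The plan is to prove that $S(\Sigma, s, t)$ satisfies both conditions of Definition~\ref{csnas}: that every element is a solution, and that every solution factors through some element via the $\leq_\Sigma$ ordering. The proof will hinge on understanding precisely how the raising substitution $\sigma$ together with the unification problem $\CSU(\lambda\vec{b}.s', \lambda\vec{b}.\lambda\vec{a}.t')$ encodes the condition that $s$ $\lambda$-converts to $\lambda c_1\ldots c_n.t$ after a substitution is applied. The key conceptual bridge, which I would establish as a preliminary observation, is that a substitution $\theta$ solves $s\unrhd t$ precisely when, for the altered terms $s',t'$ obtained by raising, there is some selection $\vec{a}$ of $n$ nominal constants from $\supp(t)\cup\{\vec{c}\}$ and some unifier making $\lambda\vec{b}.s'$ equal to $\lambda\vec{b}.\lambda\vec{a}.t'$. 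The role of $\vec{b}$ is to bind the remaining nominal constants so that unification treats them rigidly rather than allowing variables to capture them.

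\medskip
For soundness (condition 1), I would take an arbitrary $\sigma\bullet\rho$ with $\rho\in\CSU(\lambda\vec{b}.s',\lambda\vec{b}.\lambda\vec{a}.t')$ for some selection $\vec{a}$, and show directly that $(s\unrhd t)\cas{\sigma\bullet\rho}$ holds. Since $\rho$ unifies $\lambda\vec{b}.s'$ with $\lambda\vec{b}.\lambda\vec{a}.t'$, applying $\rho$ and stripping the outer $\vec{b}$ abstractions gives that $s'[\rho]$ equals $\lambda\vec{a}.t'[\rho]$, which says exactly that $s[\sigma][\rho]$ is $\lambda\vec{a}.(t[\sigma][\rho])$, i.e.\ $s\unrhd t$ holds after the composed substitution. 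Here I would use the observations from Definition~\ref{nascomp} and the surrounding text, in particular that $B\cas{\sigma\bullet\rho}\approx B\cas{\sigma}\cas{\rho}$, together with Lemma~\ref{lem:na-approx} to pass freely between permutation-equivalent formulas when checking that the abstraction holds.

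\medskip
For completeness (condition 2), I would start from an arbitrary solution $\rho_0$ to $s\unrhd t$, so $s\cas{\rho_0}$ $\lambda$-converts to $\lambda\vec{d}.(t\cas{\rho_0})$ for some nominal constants $\vec{d}$. The goal is to exhibit a $\theta\in S(\Sigma,s,t)$ with $\rho_0\leq_\Sigma\theta$. The crucial step is to recognize that $\rho_0$, restricted to $\Sigma$, must factor through the raising substitution $\sigma$: because $\sigma$ raises each eigenvariable over the fresh constants $\vec{c}$ in the most general way, any solution's behaviour on $\Sigma$ can be recovered as $\sigma$ composed with a further substitution $\rho_1$. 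Then the constants $\vec{d}$ that are abstracted must, after the factoring, correspond to a selection $\vec{a}$ from $\supp(t)\cup\{\vec{c}\}$, which identifies which unification problem in the union defining $C$ to look at. Once the correct $\vec{a}$ is fixed, $\rho_1$ becomes a unifier of $\lambda\vec{b}.s'$ and $\lambda\vec{b}.\lambda\vec{a}.t'$, so by the defining property of $\CSU$ it is less general than some $\rho\in\CSU(\lambda\vec{b}.s',\lambda\vec{b}.\lambda\vec{a}.t')$. Composing with $\sigma$ and appealing to the ordering of Definition~\ref{nasordering} then yields $\rho_0\leq_\Sigma\sigma\bullet\rho\in S(\Sigma,s,t)$.

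\medskip
The main obstacle I anticipate is the completeness direction, and within it the step that identifies the abstracted constants $\vec{d}$ with an admissible selection $\vec{a}$ from $\supp(t)\cup\{\vec{c}\}$. The subtlety is that $\rho_0$ may abstract over nominal constants that were introduced by the substitution itself rather than already present, which is exactly why the raising step with the fresh $\vec{c}$ is built into the construction; I would need to argue carefully that after factoring through $\sigma$, every abstracted constant is permutation-equivalent to one lying in $\supp(t)\cup\{\vec{c}\}$, so that the finite union over selections $\vec{a}$ genuinely covers all cases. Handling this will require tracking permutations explicitly and leaning on Lemma~\ref{lem:na-approx} and Lemma~\ref{lem:approx-cas} to absorb the renaming freedom, together with the remark following Definition~\ref{substequiv} that all choices in the nominal capture avoiding composition are permutation equivalent and hence interchangeable for the purposes of the $\leq_\Sigma$ relation.
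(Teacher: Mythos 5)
Your overall plan coincides with the paper's own proof: soundness by unwinding the defining property of $\CSU$, completeness by factoring an arbitrary solution through the raising substitution $\sigma$ and then invoking $\CSU$-completeness. The soundness half of your proposal is essentially the paper's argument and is fine. The problem is in the completeness half, at exactly the step you yourself flag as the ``main obstacle'': you defer it with ``I would need to argue carefully,'' but that step is where the entire mathematical content of the theorem sits, and the order of operations you propose --- factor through $\sigma$ first, then absorb renamings via permutation equivalence afterwards --- does not work as stated. If the solution $\rho_0$ abstracts nominal constants that $\rho_0$ itself introduced (the paper calls these the \emph{important} constants), then factoring before renaming fails in two ways: the assignment $\rho_1 = \{\lambda\vec{c}.r/h' \mid r/h \in \rho_0\}$ can wrongly capture occurrences of the $c_i$ already present in the range of $\rho_0$, and, worse, the important constants survive into the range of $\rho_1$, so the constants abstracted on the right-hand side of $(s'\unrhd t')\cas{\rho_1}$ lie in $\supp(\rho_1)$ and hence outside every admissible selection $\vec{a}$ drawn from $\supp(t)\cup\{\vec{c}\}$. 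No unification problem in the union defining $C$ then matches, and the appeal to $\CSU$-completeness has nothing to attach to.

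The missing idea, which is the crux of the paper's proof, is a counting-plus-renaming step performed \emph{before} the factoring. Since the nominal abstraction has degree $n$, exactly $n$ constants are abstracted on the right-hand side, so there are at most $n$ important constants; and since $\vec{c} = c_1,\ldots,c_n$ are $n$ fresh constants, there exists a permutation $\pi'$ mapping the important constants of $\rho_0$ into $\{\vec{c}\}$. One first replaces $\rho_0$ by $\theta' = \pi'.\rho_0$ --- harmless for the conclusion, because $\leq_\Sigma$ is insensitive to permutation equivalence --- and only then factors $\theta' = \sigma\bullet\rho'$ with $\rho' = \{\lambda\vec{c}.r/h' \mid r/h \in \theta'\}$. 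Now the abstraction $\lambda\vec{c}$ strips precisely the important constants out of the range, so $\rho'$ contains none of them, the abstracted constants necessarily come from $\supp(t)\cup\{\vec{c}\}$, a legitimate selection $\vec{a}$ exists, and $\rho'$ is an honest unifier of $\lambda\vec{b}.s'$ and $\lambda\vec{b}.\lambda\vec{a}.t'$ to which $\CSU$-completeness applies. A smaller symptom of the same issue: you write that $s\cas{\rho_0}$ $\lambda$-converts to $\lambda\vec{d}.(t\cas{\rho_0})$, but the paper is explicit that $(s\unrhd t)\cas{\rho_0}$ is \emph{not} the same as $s\cas{\rho_0}\unrhd t\cas{\rho_0}$; the renaming of nominal constants must be performed consistently on the formula as a whole, which is precisely why the permutation bookkeeping has to be done globally and up front rather than locally and after the fact.
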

\begin{proof}
First note that $\supp(\sigma) \cap \supp(s\unrhd t) = \emptyset$ and
thus $(s\unrhd t)\cas{\theta}$ is equal to $(s' \unrhd t')$.
Now we must show that every element of $S(\Sigma, s, t)$ is a
solution to $s \unrhd t$. Let $\sigma\bullet\rho \in S(\Sigma, s, t)$
be an arbitrary element where $\sigma$ is as in Definition~\ref{def:s},
$\rho$ is from $\CSU(\lambda\vec{b}.s',
\lambda\vec{b}.\lambda\vec{a}.t')$, and $s' = s[\sigma]$ and $t' =
t[\sigma]$.  By the definition of $\CSU$ we know $(\lambda\vec{b}.s' =
\lambda\vec{b}.\lambda\vec{a}.t')[\rho]$. This means $(s' =
\lambda\vec{a}.t')\cas{\rho}$ holds and thus $(s' \unrhd
t')\cas{\rho}$ holds. Rewriting $s'$ and $t'$ in terms of $s$ and $t$ this
means $(s \unrhd t)\cas{\sigma}\cas{\rho}$. Thus $\sigma\bullet\rho$
is a solution to $s\unrhd t$.

In the other direction, we must show that if $\theta$ is a solution to $s
\unrhd t$ then there exists $\sigma\bullet\rho \in S(\Sigma, s, t)$
such that $\theta \le_\Sigma \sigma\bullet\rho$. Let $\theta$ be a
solution to $s\unrhd t$. Then we know $(s\unrhd t)\cas{\theta}$ holds.
The substitution $\theta$ may introduce some nominal constants which
are abstracted out of the right-hand side when determining equality, so
let us call these the {\em important} nominal constants. Let $\sigma =
\{h'\ c_1\ \ldots\ c_n/h \mid h \in \Sigma\}$ be as in
Definition~\ref{def:s} and let $\pi'$ be a permutation which maps the
important nominal constants of $\theta$ to nominal constants from
$c_1, \ldots, c_n$. This is possible since $n$ nominal constants are
abstract from the right-hand side and thus there are at most $n$
important nominal constants. Then let $\theta' = \pi'.\theta$, so that
$(s\unrhd t)\cas{\theta'}$ holds and it suffices to show that $\theta'
\le_\Sigma \sigma\bullet\rho$. Note that all we have done at this
point is to rename the important nominal constants of $\theta$ so that
they match those introduced by $\sigma$. Now we define $\rho' = \{
\lambda c_1\ldots\lambda c_n.r / h' \mid r / h \in \theta'\}$ so that
$\theta' = \sigma\bullet\rho'$. Thus $(s\unrhd
t)\cas{\sigma}\cas{\rho'}$ holds. By construction, $\sigma$ shares no
nominal constants with $s$ and $t$, thus we know $(s'\unrhd
t')\cas{\rho'}$ where $s' = s[\sigma]$ and $t' = t[\sigma]$. Also by
construction, $\rho'$ contains no important nominal constants and
thus $(s' = \lambda\vec{a}. t')\cas{\rho}$ holds for some nominal
constants $\vec{a}$ taken from $\supp(t) \cup \{\vec{c}\}$. If we let
$\vec{b}$ be a listing of all nominal constants in $s'$ and $t'$ but
not in $\vec{a}$, then $(\lambda\vec{b}. s' =
\lambda\vec{b}.\lambda\vec{a}. t')\cas{\rho}$ holds. At this point the inner
equality has no nominal constants and thus the substitution $\rho$ can
be applied without renaming: $(\lambda\vec{b}. s' =
\lambda\vec{b}.\lambda\vec{a}. t')[\rho']$ holds. By the definition of
$\CSU$, there must be a $\rho \in \CSU(\lambda\vec{b}.s',
\lambda\vec{b}.\lambda\vec{a}.t')$ such that $\rho' \le \rho$. Thus
$\sigma\bullet\rho' \le_\Sigma \sigma\bullet\rho$ as desired.
\end{proof}



\section{Definitions, Induction, and Co-induction}
\label{sec:definitions}

\begin{figure}[t]
\begin{center}
$\infer[\defL]
      {\Sigma : \Gamma, p\ \vec{t} \vdash C}
      {\Sigma : \Gamma, B\ p\ \vec{t} \vdash C}
\hspace{1in}
\infer[\defR]
      {\Sigma : \Gamma \vdash p\ \vec{t}}
      {\Sigma : \Gamma \vdash B\ p\ \vec{t}}
$
\end{center}
\caption{Introduction rules for atoms whose predicate is defined as $\forall
  \vec{x}.~p\ \vec{x} \triangleq B\ p\ \vec{x}$}
\label{fig:defrules}
\end{figure}

The sequent calculus rules presented in Figure~\ref{fig:core-rules}
treat atomic judgments as fixed, unanalyzed objects.
We now add the capability of defining such judgments by means of 
formulas, possibly involving other predicates. In particular, we shall
assume that we are given a 
fixed, finite set of \emph{clauses} of the
form $\forall \vec{x}.~p\ \vec{x} \triangleq B\ p\ \vec{x}$ where $p$
is a predicate constant that takes a number of arguments equal to the
length of $\vec{x}$. Such a clause is said to define $p$ and the
entire collection of clauses is called a {\em
  definition}. The expression $B$, called the {\em body} of the
clause, must be a term that does not contain $p$ or
any of the variables in $\vec{x}$ and must have a type such that 
$B\ p\ \vec{x}$ has type $o$.  Definitions are also restricted so that
a predicate is defined by at most one clause.
The intended interpretation of a clause $\forall \vec{x}.~p\ \vec{x}
\triangleq B\ p\ \vec{x}$ is that the atomic
formula $p\ \vec{t}$, where $\vec{t}$ is a list of terms of the same
length and type as the variables in $\vec{x}$, is true if and only if
$B\ p\ \vec{t}$ is true. 
This interpretation is realized by adding to the calculus the rules
$\defL$ and $\defR$ shown in Figure~\ref{fig:defrules} for unfolding
predicates on the left and the right of sequents using their defining
clauses.

Definitions can have a recursive structure. In particular, the
predicate $p$ can appear free in the body $B\ p\ \vec{x}$ of a clause
of the form $\forall \vec{x}.~p\ \vec{x} \triangleq B\ p\ \vec{x}$. 
A fixed-point interpretation is intended for definitions with clauses
that are recursive in this way. Additional restrictions  
are needed to ensure that fixed points actually exist in this setting
and that their use is compatible with the embedding logic. 
Two particular constraints suffice for this purpose.
First, the body of a
clause must not contain any nominal constants. This restriction
can be justified from another perspective as well: as we see in
Section~\ref{sec:meta-theory}, it helps in establishing that $\approx$
is a provability preserving equivalence between formulas. Second, definitions
should be {\em stratified} so that clauses, such as $a\triangleq
(a\supset \bot)$, in which a predicate has a negative dependency on
itself, are forbidden.  While such stratification can be enforced in
different ways, we use a simple approach to doing this in this
paper. This approach is based on associating with each predicate $p$
a natural number that is called its {\em level} and that is denoted
by $\lvl(p)$.  This measure is then extended to arbitrary formulas by
the following definition.
\begin{defn}
Given an assignment of levels to predicates, the function $\lvl$ is
extended to all formulas in $\lambda$-normal form as follows:
\begin{enumerate}
\item $\lvl(p\ \bar{t}) = \lvl(p)$
\item $\lvl(\bot) = \lvl(\top) = \lvl(s\unrhd t) = 0$
\item $\lvl(B\land C) = \lvl(B\lor C) = \max(\lvl(B),\lvl(C))$
\item $\lvl(B\supset C) = \max(\lvl(B)+1,\lvl(C))$
\item $\lvl(\forall x.B) = \lvl(\nabla x.B) = \lvl(\exists x.B) =
\lvl(B)$
\end{enumerate}
In general, the level of a formula $B$, written as $\lvl(B)$, is the
level of its $\lambda$-normal form.
\end{defn}

A definition is {\em stratified} if we can assign levels to predicates
in such a way that $\lvl(B\ p\ \vec{x}) \leq \lvl(p)$ for each clause
$\forall \vec{x}.~p\ \vec{x} \triangleq B\ p\ \vec{x}$ in that
definition.

\begin{figure}[t]
\begin{center}
$\infer[\IL]
      {\Sigma : \Gamma, p\; \vec{t} \vdash C}
      {\vec{x} : B\; S\; \vec{x} \vdash S\; \vec{x} \qquad
       \Sigma : \Gamma, S\; \vec{t} \vdash C}$\\[5pt]
provided $p$ is defined as $\forall  \vec{x}.~ p\  \vec{x}
\mueq  B\ p\  \vec{x}$ and $S$ is a term that has the same type as $p$
and does not contain nominal constants
      \\[15pt]
$\infer[\CIR]
      {\Sigma : \Gamma \vdash p\; \vec{t}}
      {\Sigma : \Gamma \vdash S\; \vec{t} \qquad
       \vec{x} : S\; \vec{x} \vdash B\; S\; \vec{x}}
$\\[5pt]
provided $p$ is defined as $\forall  \vec{x}.~ p\  \vec{x}
\nueq  B\ p\  \vec{x}$ and $S$ is a term that has the same type as $p$
and does not contain nominal constants
\end{center}
\caption{The induction left and co-induction right rules}
\label{fig:indandcoind}
\end{figure}

The $\defL$ and $\defR$ rules do not discriminate between any of the
fixed points of a definition. 
We now allow for the selection of least and greatest fixed
points so as to support inductive and co-inductive definitions of
predicates. Specifically, we
denote an inductive clause by $\forall \vec{x}.~ p\ \vec{x} \mueq
B\ p\ \vec{x}$ and a co-inductive one by $\forall \vec{x}.~ p\ \vec{x}
\nueq B\ p\ \vec{x}$.  As a refinement of the earlier restriction on
definitions, a predicate may have at most one defining clause that is
designated to be inductive, co-inductive or neither. The $\defL$ and
$\defR$ rules may be used with clauses in any one of these
forms. Clauses that are inductive admit additionally the left rule
$\IL$ shown in Figure~\ref{fig:indandcoind}. This rule is based on the
observation that the least fixed point of a monotone operator is the
intersection of all its pre-fixed points; intuitively, anything that
follows from any pre-fixed point should then also follow from the
least fixed point. In a proof search setting, the term corresponding
to the schema variable $S$ in this rule functions like the induction
hypothesis and is accordingly called the invariant of the
induction. Clauses that are co-inductive, on the other hand, admit the
right rule $\CIR$ also presented in Figure~\ref{fig:indandcoind}. This
rule reflects the fact that the greatest fixed point of a monotone
operator is the union of all the post-fixed points; any member of such
a post-fixed point must therefore also be a member of the greatest
fixed point. The substitution that is used for $S$ in this rule is
called the co-invariant or the simulation of the co-induction. Just
like the restriction on the body of clauses, in both $\IL$ and $\CIR$,
the (co-)invariant $S$ must not contain any nominal constants.

As a simple illustration of the use of these rules, consider the
clause $p \mueq p$. The desired
inductive reading of this clause
implies that $p$ must be false. In a proof-theoretic setting, we would
therefore expect that the sequent $\cdot : p \vdash \bot$ can be
proved. This can, in fact, be done by using $\IL$ with the invariant
$S = \bot$. On the other hand, consider the clause $q
\nueq q$. The co-inductive reading intended here implies that $q$ must
be true. The logic \logic satisfies this expectation: the 
sequent $\cdot : \cdot \vdash q$ can be proved using $\CIR$ with the
co-invariant $S = \top$.

The addition of inductive and co-inductive forms of clauses and
the mixing of these forms in one setting requires a stronger
stratification condition to guarantee consistency.
One condition that suffices and that is also practically acceptable is the
following that is taken from \cite{tiu09corr}: in a clause of any of the
forms $\forall \vec{x}.~ p\ \vec{x} 
\triangleq B\ p\ \vec{x}$, $\forall \vec{x}.~ p\ \vec{x} \mueq B\ p\
\vec{x}$ or $\forall \vec{x}.~ p\ \vec{x} \nueq B\ p\ \vec{x}$,
it must be that $\lvl(B\ (\lambda\vec{x}.\top)\ \vec{x}) < \lvl(p)$. This
disallows any mutual recursion between clauses, a restriction
which can easily be overcome by merging mutually recursive clauses
into a single clause.
We henceforth assume that all
definitions satisfy all three conditions described for them in this
section.
Corollary \ref{consistency} in
Section~\ref{sec:meta-theory} establishes the consistency of the logic 
under these restrictions.



\section{Some Properties of the Logic}
\label{sec:meta-theory}

We have now described the logic \logic completely: in
particular, its proof rules consist of the ones in
Figures~\ref{fig:core-rules}, \ref{fig:na-rules}, \ref{fig:defrules}
and \ref{fig:indandcoind}. This logic combines and extends the
features of several logics such as $\FOLDN$ \cite{mcdowell00tcs},
\foldnb \cite{miller05tocl}, $LG^\omega$ \cite{tiu08lgext} and
Linc$^-$  \cite{tiu09corr}. The relationship to Linc$^-$ is
of special interest to us below: \logic is a conservative
extension to this logic that is obtained by adding a treatment of the
$\nabla$ quantifier and the associated nominal constants and by
generalizing the proof rules pertaining to equality to ones dealing
with nominal abstraction. This correspondence will allow the proof of
the critical meta-theoretic property of cut-elimination for
Linc$^-$ to be lifted to \logic.

We shall actually establish three main properties of \logic in this
section.  First, we shall show that the provability of a sequent is
unaffected by the application of permutations of nominal constants to
formulas in the sequent.  This property consolidates our understanding
that nominal constants are quantified implicitly at the formula level;
such quantification also renders irrelevant the particular names chosen
for such constants. Second, we show that the application of substitution
in a nominal capture-avoiding way preserves provability; by contrast,
ordinary application of substitution does not have this property.
Finally, we show that the $\cut$ rule can be
dispensed with from the logic without changing the set of provable
sequents. This implies that the left and right rules of the logic are
balanced and, moreover, that the logic is consistent. This is the main
result of this section and its proof uses the earlier two results
together with the argument for cut-elimination for Linc$^-$.

Several of our arguments will be based on induction on the heights
of proofs. This measure is defined formally below. Notice that
the height of a proof can be an infinite ordinal because the $\unrhdL$
rule can have an infinite number of premises. Thus, we will be using 
a transfinite form of induction.

\begin{defn}
\label{def:ht}
The {\em height} of a derivation $\Pi$, denoted by $\htf(\Pi)$, is $1$
if $\Pi$ has no premise derivations and is the least upper bound of
$\{\htf(\Pi_i)+1\}_{i\in\mathcal{I}}$ if $\Pi$ has the premise
derivations $\{\Pi_i\}_{i\in\mathcal{I}}$ where $\mathcal{I}$ is some
index set. Note that the typing derivations in the rules $\forallL$
and $\existsR$ are not considered premise derivations in this sense.
\end{defn}

Many proof systems, such as Linc$^-$, include a weakening rule that
allows formulas to be dropped (reading proofs bottom-up) from the
left-hand sides of sequents. 
While \logic does not include such a rule directly, its effect is
captured in a strong sense as we show in the lemma below. Two proofs
are to be understood here and elsewhere as having the same structure
if they are isomorphic as trees, if the same rules appear at
corresponding places within them and if these rules pertain to
formulas that can be obtained one from the other via a renaming of
eigenvariables and nominal constants.

\begin{lemma}
\label{lem:proof-weak}
Let $\Pi$ be a proof of $\Sigma : \Gamma \vdash B$ and let $\Delta$ be
a multiset of formulas whose eigenvariables are contained in $\Sigma$.
Then there exists a proof of $\Sigma : \Delta, \Gamma \vdash B$ which
has the same structure as $\Pi$. In particular $\htf(\Pi) =
\htf(\Pi')$ and $\Pi$ and $\Pi'$ end with the same rule application.
\end{lemma}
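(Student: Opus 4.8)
The plan is to prove this weakening lemma by a straightforward transfinite induction on the height of the proof $\Pi$, proceeding by case analysis on the last rule applied in $\Pi$. The guiding intuition is that formulas in $\Delta$ are merely carried along passively: none of the inference rules inspects or consumes formulas that are not explicitly named in the rule, so I can thread $\Delta$ through the entire derivation, inserting it into the left-hand side of every sequent while leaving the rule applications otherwise untouched. Since the claim asserts that the resulting $\Pi'$ has the \emph{same structure} as $\Pi$ (same tree shape, same rules, same height, matching up to renaming of eigenvariables and nominal constants), I must be careful to produce exactly the corresponding rule at each node rather than, say, appealing to an admissible weakening that might alter the shape.

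\medskip\noindent\emph{Structure of the induction.}\quad For the base cases, the rules $id$, $\topR$, $\botL$, and $\unrhdR$ have no premises; in each of these the conclusion $\Sigma : \Delta, \Gamma \vdash B$ is derivable by the very same rule, since adding formulas to the left cannot invalidate any of the side conditions (for $id$ the relation $B \approx B'$ is unaffected; for $\unrhdR$ the condition ``$s \unrhd t$ holds'' is independent of the context; $\botL$ and $\topR$ have no relevant side conditions). For the inductive step, I would split the rules into three groups. The \textbf{purely context-carrying} connective and structural rules ($\lorL$, $\lorR$, $\landL$, $\landR$, $\supsetL$, $\supsetR$, $\cL$, $\cut$, $\defL$, $\defR$) each decompose a formula already present in $\Gamma$ (or on the right), leaving a residual context to which the induction hypothesis applies directly; I add $\Delta$ to each premise, invoke the hypothesis to rebuild the premise derivations with the same structure, and reapply the identical rule. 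The \textbf{quantifier and $\nabla$} rules ($\forallL$, $\forallR$, $\existsL$, $\existsR$, $\nablaL$, $\nablaR$) require attention to side conditions: for $\forallR$ and $\existsL$ the fresh eigenvariable $h$ must avoid $\Sigma$, which it already does, and the raising over $\supp(B)$ is unaffected by $\Delta$; for $\nablaL$/$\nablaR$ the chosen nominal constant $a$ must lie outside $\supp(B)$, and I may keep the same $a$ (adjusting via the renaming clause of ``same structure'' if $a$ happens to occur in $\Delta$). Crucially, the hypothesis that $\Delta$'s eigenvariables are contained in $\Sigma$ guarantees that $\Delta$ remains well-formed as the signature grows in these rules.

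\medskip\noindent\emph{The main obstacle} will be the $\unrhdL$ rule, precisely because it is the source of the transfinite heights and because its premises apply substitutions to the \emph{entire} left context. In an instance of $\unrhdL$ with conclusion $\Sigma : \Gamma, s \unrhd t \vdash C$, each premise has the form $\Sigma\theta : \Gamma\cas{\theta} \vdash C\cas{\theta}$ indexed by solutions $\theta$ of $s \unrhd t$. To weaken by $\Delta$, the corresponding $\unrhdL$ instance in $\Pi'$ must have premises $\Sigma\theta : \Delta\cas{\theta}, \Gamma\cas{\theta} \vdash C\cas{\theta}$ over the \emph{same} index set of solutions. I would apply the induction hypothesis at each premise with the weakening multiset taken to be $\Delta\cas{\theta}$ rather than $\Delta$ itself; this is legitimate because the eigenvariables of $\Delta\cas{\theta}$ are contained in $\Sigma\theta$ (nominal capture-avoiding substitution sends variables of $\Sigma$ to terms over $\Sigma\theta$, so it cannot introduce eigenvariables outside $\Sigma\theta$). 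Since the index set of solutions is identical and each premise derivation retains its height and shape by the hypothesis, the reassembled $\unrhdL$ node has the same (possibly infinite) branching structure, and the least-upper-bound computation of Definition~\ref{def:ht} yields the same height. The induction-rule $\IL$ and co-induction-rule $\CIR$ need one extra remark: their auxiliary premises $\vec{x} : B\,S\,\vec{x} \vdash S\,\vec{x}$ (respectively $\vec{x} : S\,\vec{x} \vdash B\,S\,\vec{x}$) carry a \emph{fixed} context determined by the invariant $S$ and must be left untouched, so I weaken only the principal premise $\Sigma : \Gamma, S\,\vec{t} \vdash C$ (respectively $\Sigma : \Gamma \vdash S\,\vec{t}$) by $\Delta$, which the hypothesis handles directly. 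Verifying that the transfinite least-upper-bound of heights is genuinely preserved across the $\unrhdL$ branching — and not merely bounded — is the one point I would state explicitly rather than dismiss as routine.
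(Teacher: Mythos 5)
Your proof is correct and takes essentially the same approach as the paper: the paper's own proof of this lemma simply states that it follows ``by an easy induction on $\htf(\Pi)$'' and omits all details, and your case analysis---including the key $\unrhdL$ case, where the weakening multiset must become $\Delta\cas{\theta}$ in each premise, and the $\IL$/$\CIR$ cases, where the invariant premise is left untouched---is precisely how those details would be filled in. The only slip is in the $\cut$ case, whose left-hand context is split multiplicatively between the two premises: adding $\Delta$ to \emph{each} premise would yield $\Delta, \Delta, \Gamma \vdash B$ in the conclusion, so $\Delta$ should be added to exactly one premise.
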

\begin{proof}
The lemma can be proved by an easy induction on $\htf(\Pi)$. We omit
the details.
\end{proof}

The following lemma shows a strong form of the preservation of
provability under permutations of nominal constants appearing in
formulas, the first of our mentioned results. 

\begin{lemma}
\label{lem:proof-perm}
Let $\Pi$ be a proof of $\Sigma : B_1, \ldots, B_n \vdash B_0$ and let
$B_i \approx B_i'$ for $i \in \{0,1,\ldots,n\}$. Then there exists a
proof $\Pi'$ of $\Sigma : B_1', \ldots, B_n' \vdash B_0'$ which has
the same structure as $\Pi$. In particular $\htf(\Pi) = \htf(\Pi')$
and $\Pi$ and $\Pi'$ end with the same rule application.
\end{lemma}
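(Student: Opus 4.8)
The plan is to proceed by transfinite induction on $\htf(\Pi)$, following the recipe already used in Lemma~\ref{lem:proof-weak}: I will examine the last rule applied in $\Pi$ and, in each case, construct a corresponding rule application yielding $\Sigma : B_1', \ldots, B_n' \vdash B_0'$ whose premises are again related to those of $\Pi$ by $\approx$, so that the induction hypothesis applies to them. The central technical fact I will lean on is that permutations of nominal constants distribute over all the logical structure of formulas (over $\land$, $\lor$, $\supset$, the quantifiers, and over application and abstraction), a property that is implicit in the definition of $\pi.t$ and was already invoked informally in the discussion following Figure~\ref{fig:core-rules}. Concretely, if $B_i \approx B_i'$ via permutations $\pi_i$, then applying a single common permutation (or compatible permutations) lets me rewrite each principal formula and its immediate subformulas coherently.

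First I would dispose of the axiomatic and structural base cases. For the $id$ rule, the premise is a side condition $B \approx B'$; since $\approx$ is transitive (it is an equivalence relation, as noted after the definition of $\approx$), the renamed sequent is again an instance of $id$. The rules $\botL$ and $\topR$ are immediate, and $\cL$ is handled by a trivial appeal to the induction hypothesis. For the propositional connective rules ($\lorL$, $\lorR$, $\landL$, $\landR$, $\supsetL$, $\supsetR$) and for $\cut$, the argument is uniform: because permutations commute with the connectives, the $\approx$-relation between the principal formulas induces $\approx$-relations between the corresponding subformulas in the premises, so I apply the induction hypothesis to each premise derivation and reassemble the same rule below. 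The proof produced has the same height and ends with the same rule, as the statement requires.

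The cases demanding genuine care are the quantifier rules, the $\nabla$ rules, and the two nominal-abstraction rules, precisely because these interact with nominal constants and with $\supp(\cdot)$. For $\forallR$ and $\existsL$, the raising uses the sequence $\vec{c}$ determined by $\supp(B)$; when $B \approx B'$ the support sets $\supp(B)$ and $\supp(B')$ are related by the permutation witnessing $B \approx B'$, so I would choose the permuted sequence $\pi.\vec{c}$ and a renamed eigenvariable, checking that $B[h\,\vec{c}/x] \approx B'[h\,(\pi.\vec{c})/x]$. For $\nablaL$ and $\nablaR$, the side condition $a \notin \supp(B)$ can be re-established by picking a suitably fresh nominal constant and composing with the witnessing permutation. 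The hardest step, which I expect to be the main obstacle, is the $\unrhdL$ rule: its premise set is indexed by \emph{all} solutions $\theta$ to $s \unrhd t$, and I must show that renaming $s \unrhd t$ to an $\approx$-equivalent $s' \unrhd t'$ induces a bijection on premises that respects $\approx$. This is exactly where Lemma~\ref{lem:na-approx} becomes essential: it guarantees that $s \unrhd t$ and $s' \unrhd t'$ have the same solutions, so the two premise sets are in correspondence, and Lemma~\ref{lem:approx-cas} ensures that $\Gamma\cas{\theta} \approx \Gamma'\cas{\theta}$ and $C\cas{\theta} \approx C'\cas{\theta}$, putting each matched pair of premises in the scope of the induction hypothesis.

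For $\unrhdR$, the side condition is simply that $s \unrhd t$ holds, and Lemma~\ref{lem:na-approx} again gives that $s' \unrhd t'$ holds, so the renamed sequent is also an instance of $\unrhdR$. The $\defL$, $\defR$, $\IL$, and $\CIR$ rules are comparatively benign: since the body $B\,p\,\vec{x}$ of a definition and the (co-)invariant $S$ contain no nominal constants, permutations act trivially on them, and a permutation relating $p\,\vec{t}$ to $p\,\vec{t'}$ restricts to one relating the unfolded or invariant-substituted premises. The least-upper-bound clause in the definition of $\htf$ guarantees that heights are preserved through the $\unrhdL$ case even when the index set is infinite, since the correspondence of premises is a bijection. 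Assembling these cases completes the transfinite induction and yields a proof $\Pi'$ of identical structure, as claimed.
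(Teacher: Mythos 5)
Your proposal is correct and follows essentially the same route as the paper's proof: transfinite induction on $\htf(\Pi)$ with a case analysis on the last rule, distribution of permutations over the connectives and quantifiers, Lemma~\ref{lem:na-approx} together with Lemma~\ref{lem:approx-cas} for the $\unrhdL$ and $\unrhdR$ cases, a fresh nominal constant for the $\nabla$ rules, and the no-nominal-constants provisos to dispose of $\defL$, $\defR$, $\IL$, and $\CIR$. The only differences are presentational (e.g., you spell out the permuted raising sequence for $\forallR$/$\existsL$ where the paper simply notes that permutations move under quantifiers, and the paper explicitly notes that $\cut$ and $\cL$ extend the permutation by the identity or by replication), so there is nothing substantive to add.
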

\begin{proof}
The proof is by induction on $\htf(\Pi)$ and proceeds specifically by
considering the last rule used in $\Pi$. When this is a left rule, we
shall assume without loss of generality that it operates on $B_n$.

The argument is easy to provide when the last rule in $\Pi$ is one of $\botL$
or $\topR$. If this rule is an $id$, \ie, if $\Pi$ is of the form
\begin{equation*}
\infer[id]{\Sigma : B_1, \ldots, B_n \vdash B_0}
             {B_j \approx B_0}
\end{equation*}
then, since $\approx$ is an equivalence relation, it must be the case
that $B_j' \approx B_0'$. Thus, we can let
$\Pi'$ be the derivation 
\begin{equation*}
\infer[id]{\Sigma : B_1', \ldots, B_n' \vdash B_0'}
          {B_j' \approx B_0'}
\end{equation*}
If the last rule is a $\unrhdR$ applied to a nominal abstraction $s
\unrhd t$ then the result follows immediately from
Lemma~\ref{lem:na-approx}.

In the remaining cases 
we shall show that the last rule in $\Pi$ can
also have $\Sigma : B_1', \ldots, B_n' \vdash B_0'$ as a conclusion
with the premises in this application of the rule being related via
permutations in the way required by the lemma to the premises of the
rule application in $\Pi$. The lemma then follows from the induction
hypothesis. 

In the case when the last rule in $\Pi$ pertains to a binary
connective---\ie, when the rule is one of $\lorL$, $\lorR$, $\landL$,
$\landR$, $\supsetL$ or $\supsetR$---the desired conclusion follows
naturally from the observation that permutations distribute over the
connective. The proof can be similarly completed when a
$\existsL$, $\existsR$, $\forallL$ or $\forallR$ rule ends the
derivation, once we have noted that the application of permutations can
be moved under the $\exists$ and $\forall$ quantifiers. For the
$\cut$ and $\cL$ rules, we have to show that permutations
can be extended to include the newly introduced formula in the upper
sequent(s). This is easy: for the $\cut$ rule we use the identity
permutation and for $\cL$ we replicate the permutation used to obtain
$B_n'$ from $B_n$.
 
The two remaining rules from the core logic are $\nablaL$ and
$\nablaR$. The argument in these cases are similar and we consider
only the later in detail. In this case, the last rule in $\Pi$ is of
the form 
\begin{equation*}
\infer[\nablaR]
      {\Sigma : B_1, \ldots, B_n \vdash \nabla x. C}
      {\Sigma : B_1, \ldots, B_n \vdash C[a/x]}
\end{equation*}
where $a
\notin \supp(C)$. Obviously, $B_0' = \nabla x.C'$ for some $C'$ such
that $C \approx C'$. Let $d$ be a nominal constant such that $d
\notin 
\supp(C)$ and $d \notin \supp(C')$. Such a constant must exist since
both sets are finite. Then $C[a/x] \approx C[d/x] \approx C'[d/x]$. 
Thus the following
\begin{equation*}
\infer[\nablaR]
      {\Sigma : B_1', \ldots, B_n' \vdash \nabla x. C'}
      {\Sigma : B_1', \ldots, B_n' \vdash C'[d/x]}
\end{equation*}
is also an instance of the $\nablaR$ rule and its upper sequent has
the desired form.

When the last rule in $\Pi$ is $\unrhdL$, it has has the structure
\begin{equation*}
\infer[\unrhdL]
      {\Sigma : B_1, \ldots, s \unrhd t \vdash B_0}
      {\left\{ \Sigma\theta : B_1\cas{\theta}, \ldots,
                 B_{n-1}\cas{\theta} \vdash B_0\cas{\theta} \;|\;
                 \theta\ \mbox{is a solution to}\ s \unrhd t
       \right\}}
\end{equation*}
Here we know that $B_n'$ is a nominal abstraction $s'\unrhd t'$ that,
by Lemma~\ref{lem:na-approx}, has the same solutions as $s \unrhd
t$. Further, by Lemma~\ref{lem:approx-cas}, $B_i\cas{\theta} \approx
B_i'\cas{\theta}$ for any substitution $\theta$. Thus 
\begin{equation*}
\infer[\unrhdL]
      {\Sigma : B_1', \ldots, s'\unrhd t' \vdash B_0'}
      {\left\{ \Sigma\theta : B_1'\cas{\theta}, \ldots,
                B_{n-1}'\cas{\theta} \vdash
                B_0'\cas{\theta} \;|\;
          \theta\ \mbox{is a solution to}\ s' \unrhd t' 
       \right\}}
\end{equation*}
is also an instance of the $\unrhdL$ rule and its upper sequents have
the required property.

The arguments for the rules $\defL$ and $\defR$ are similar and we
therefore only consider the case for the former rule in detail. Here,
$B_n$ must be of the form $p\; \vec{t}$ where $p$ is a predicate
symbol and the upper sequent must be identical to the lower one except
for the fact that $B_n$ is replaced by a formula of the form $B\ p\;
\vec{t}$ where $B$ contains no nominal constants. Further, $B_n'$ is
of the form $p\; \vec{s}$ where $p\;\vec{t} \approx p\;\vec{s}$.
From this it follows
that $B\ p\; \vec{t} \approx B\ p\; \vec{s}$ and hence that $\Sigma:
B_1',\ldots,B_n' \vdash B_0'$ can be the lower sequent of a rule whose
upper sequent is related in the desired way via permutations to the
upper sequent of the last rule in $\Pi$.

The only remaining rules to consider are $\IL$ and $\CIR$. Once again,
the arguments in these cases are similar and we therefore consider
only the case for $\IL$ in detail. Here, $\Pi$ ends with a rule
of the form 
\begin{equation*}
\infer[\IL]
      {\Sigma : B_1, \ldots, p\; \vec{t} \vdash B_0}
      {\vec{x} : B\; S\; \vec{x} \vdash S\; \vec{x} \qquad
       \Sigma : B_1, \ldots, S\; \vec{t} \vdash B_0}
\end{equation*}
where $p$ is a predicate symbol defined by a clause of the form
$\forall \vec{x}.~p\;\vec{x} \mueq B\ p\;\vec{x}$
and $S$ 
contains no nominal constants. Now, $B_n'$ must be of the form
$p\;\vec{r}$ where $p\;\vec{t} 
\approx p\;\vec{r}$. Noting the proviso on $S$, it follows that $S\;
\vec{t} \approx S\;\vec{r}$. But then the following
\begin{equation*}
\infer[\IL]
      {\Sigma : B_1', \ldots, p\; \vec{r} \vdash B_0'}
      {\vec{x} : B\; S\; \vec{x} \vdash S\; \vec{x} \qquad
       \Sigma : B_1', \ldots, S\; \vec{r} \vdash B_0'}
\end{equation*}
is also an instance of the $\IL$ rule and its upper sequents are
related in the manner needed to those of the $\IL$ rule used in $\Pi$.
\end{proof}

Several rules in \logic require the selection of eigenvariables and
nominal constants. Lemma~\ref{lem:proof-perm} shows that we obtain
what is essentially the same proof regardless of how we choose nominal
constants in such rules so long as the local non-occurrence conditions
are satisfied. A similar observation with regard to the choice of
eigenvariables is also easily verified. We shall therefore identify
below proofs that differ only in the choices of eigenvariables and
nominal constants.

We now turn to the second of our desired results, the preservation of
provability under substitutions.

\begin{lemma}
\label{lem:proof-subst}
Let $\Pi$ be a proof of $\Sigma : \Gamma \vdash C$ and let $\theta$ be
a substitution. Then there is a proof $\Pi'$ of $\Sigma\theta :
\Gamma\cas{\theta} \vdash C\cas{\theta}$ such that $\htf(\Pi') \leq
\htf(\Pi)$. 
\end{lemma}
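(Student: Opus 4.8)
The plan is to induct on $\htf(\Pi)$ and to perform a case analysis on the last rule applied in $\Pi$, invoking the induction hypothesis for \emph{arbitrary} substitutions applied to strictly shorter derivations. Because $\nabla$-capture-avoiding substitution and permutations both distribute over the logical connectives, the cases for $\botL$, $\topR$, $\lorL$, $\lorR$, $\landL$, $\landR$, $\supsetL$, $\supsetR$, $\cut$ and $\cL$ are routine: I would apply the induction hypothesis to each premise (using $\theta$, together with the identity substitution for the freshly introduced $\cut$ formula), form the multiset $\Gamma\cas{\theta}$ pointwise, and reapply the same rule, observing that the height cannot increase. For the $id$ rule, closure under $\approx$ is exactly Lemma~\ref{lem:approx-cas}: from $B\approx B'$ we get $B\cas{\theta}\approx B'\cas{\theta}$, so the substituted sequent is again an instance of $id$. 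For $\unrhdR$, Lemma~\ref{lem:na-approx} guarantees that $(s\unrhd t)\cas{\theta}$ still holds, so the rule reapplies with an empty premise set.

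The rules whose side conditions mention nominal constants need more care. For $\forallL$ and $\existsR$ I would substitute into the witness $t$ to obtain $t\cas{\theta}$, which remains well typed over $\Sigma\theta$ since its nominal constants lie in $\mathcal{C}$, apply the induction hypothesis to the remaining premise, and reassemble; because these rules may depend on \emph{all} of $\mathcal{C}$, any growth of the support under $\theta$ is harmless. For $\nablaL$ and $\nablaR$ I would first use Lemma~\ref{lem:proof-perm} to rename the nominal constant $a$ introduced by the rule to some $a'\notin\supp(\theta)$; then $a'\notin\supp(B)$ forces $a'\notin\supp(B\cas{\theta})$, and since $\cas{\theta}$ commutes with the instantiation $[a'/x]$, the induction hypothesis on the premise followed by a reapplication of the rule with $a'$ closes the case. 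The rules $\defL$, $\defR$, $\IL$ and $\CIR$ exploit the standing restriction that clause bodies and (co-)invariants contain no nominal constants: consequently $\theta$ interacts with the unfolded body $B\,p\,\vec{t}$ (respectively with $S\,\vec{t}$) exactly as it does with $\vec{t}$, the auxiliary premise $\vec{x} : B\,S\,\vec{x}\vdash S\,\vec{x}$ of $\IL$ (and its dual for $\CIR$) is untouched by $\theta$, and a single appeal to the induction hypothesis on the principal premise suffices.

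Two cases carry the real weight. The first is $\unrhdL$, whose premises are indexed by the solutions of $s\unrhd t$; here I would \emph{not} invoke the induction hypothesis at all. Writing $(s\unrhd t)\cas{\theta}$ as $s'\unrhd t'$, each solution $\rho$ of $s'\unrhd t'$ yields, via the identity $(s\unrhd t)\cas{\theta\bullet\rho}\approx((s\unrhd t)\cas{\theta})\cas{\rho}$ together with Lemma~\ref{lem:na-approx}, that $\theta\bullet\rho$ is a solution of $s\unrhd t$. Hence $\Pi$ already contains a premise derivation of $\Sigma(\theta\bullet\rho):\Gamma\cas{\theta\bullet\rho}\vdash C\cas{\theta\bullet\rho}$; since $\Gamma\cas{\theta\bullet\rho}\approx\Gamma\cas{\theta}\cas{\rho}$ and likewise for $C$, Lemma~\ref{lem:proof-perm} converts it into a derivation of the $\rho$-premise of the substituted rule of the \emph{same} height. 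Collecting these over all $\rho$ and reapplying $\unrhdL$ produces $\Pi'$, and because each new premise height equals that of an existing premise of $\Pi$, the bound $\htf(\Pi')\le\htf(\Pi)$ holds.

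The hardest case, and the one I expect to be the main obstacle, is $\forallR$ (and symmetrically $\existsL$). The difficulty is that these rules raise the fresh eigenvariable $h$ over \emph{exactly} the support $\{\vec{c}\}=\supp(B)$, whereas $\cas{\theta}$ can enlarge the support: when $\theta$ substitutes terms carrying nominal constants one has $\supp(B\cas{\theta})=\{\vec{c},\vec{g}\}$ for some new constants $\vec{g}\subseteq\supp(\theta)$, and the rule applied to the conclusion must raise over all of $\vec{c},\vec{g}$, while a naive appeal to the induction hypothesis still raises $h$ over $\vec{c}$ alone. I would resolve this by folding the required re-raising into one substitution applied at lower height: after using Lemma~\ref{lem:proof-perm} to take $\vec{c}$ disjoint from $\supp(\theta)$, let $h'$ be fresh of the appropriate higher type and set $\theta^{+}=\theta\cup\{(\lambda\vec{c}.\,h'\,\vec{c}\,\vec{g})/h\}$, where $\lambda\vec{c}.(\cdot)$ abstracts the constants $\vec{c}$. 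Since the premise derivation has height strictly below $\htf(\Pi)$, the induction hypothesis applies to it with $\theta^{+}$ and yields a derivation of $\Sigma\theta,h':\Gamma\cas{\theta}\vdash(B\cas{\theta})[h'\,\vec{c}\,\vec{g}/x]$ of height strictly below $\htf(\Pi)$; reapplying $\forallR$ with $h'$ raised over $\supp(B\cas{\theta})$ then gives the result within the height bound. The points that must be checked carefully are that $\Gamma\cas{\theta^{+}}=\Gamma\cas{\theta}$ (as $h$ is fresh) and that $(B[h\,\vec{c}/x])\cas{\theta^{+}}$ reduces to $(B\cas{\theta})[h'\,\vec{c}\,\vec{g}/x]$ by $\beta$-contracting $(\lambda\vec{c}.\,h'\,\vec{c}\,\vec{g})\,\vec{c}$. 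I would emphasize that because the induction hypothesis is invoked strictly below $\htf(\Pi)$, this argument needs no prior raising-extension result such as Corollary~\ref{cor:extend}, thereby avoiding any circularity.
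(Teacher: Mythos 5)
Your overall strategy coincides with the paper's proof: induction on $\htf(\Pi)$ with case analysis on the last rule, reusing the premise derivations of $\unrhdL$ directly rather than invoking the induction hypothesis (via the observation that $\theta\bullet\rho$ solves $s\unrhd t$ whenever $\rho$ solves $(s\unrhd t)\cas{\theta}$), and, for $\forallR$/$\existsL$, folding the required re-raising of the eigenvariable into an extended substitution passed to the induction hypothesis on the shorter premise derivation. However, your treatment of the $\forallR$ case contains a step that fails in an identifiable situation. You assert that $\supp(B\cas{\theta})=\{\vec{c},\vec{g}\}$ for some $\vec{g}\subseteq\supp(\theta)$ and you raise $h'$ over $\vec{c},\vec{g}$. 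But nominal capture-avoiding substitution can \emph{erase} nominal constants: if $B = p\,(y\,c)$ with $c\in\supp(B)$ and $\theta=\{(\lambda z.d)/y\}$, then $B\cas{\theta}=p\,d$ and $c\notin\supp(B\cas{\theta})$. In such a case no $\vec{g}$ satisfies your equation, $\{\vec{c}\}\cup\{\vec{g}\}$ strictly contains $\supp(B\cas{\theta})$, and the final inference you construct---whose premise instantiates $x$ by $h'\,\vec{c}\,\vec{g}$---is not an instance of $\forallR$, because that rule's side condition demands raising over \emph{exactly} the support of the quantified formula. Raising over a superset is precisely what $\forallR^*$ of Corollary~\ref{cor:extend} licenses, and, as you yourself note, that corollary is derived from the present lemma and so cannot be invoked here.

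The repair is small and lands you on the paper's actual construction: let $\{\vec{a}\}=\supp((\forall x.B)\cas{\theta})$ and set $\theta^{+}=\theta\cup\{(\lambda\vec{c}.\,h'\,\vec{a})/h\}$; the same $\beta$-contraction argument then produces the premise $\Sigma\theta,h':\Gamma\cas{\theta}\vdash(B\cas{\theta})[h'\,\vec{a}/x]$, to which $\forallR$ applies verbatim, with the analogous adjustment for $\existsL$. A related imprecision occurs in your $\existsR$/$\forallL$ cases: the new witness must be $(\pi.t)[\theta]$, where $\pi$ is the \emph{same} permutation used to move $\supp(B[t/x])$ away from $\supp(\theta)$; computing $t\cas{\theta}$ independently of $B$ may rename nominal constants shared between $B$ and $t$ inconsistently, so that the substituted premise no longer has the form $(B\cas{\theta})[t'/x]$. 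With these two adjustments your argument is, in all essentials, identical to the one in the paper.
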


\begin{proof}
We show how to transform the proof $\Pi$ into a proof $\Pi'$ for the
modified sequent. The transformation is by recursion on $\htf(\Pi)$,
the critical part of it being a consideration of the last rule in
$\Pi$. The transformation is, in fact, straightforward in all cases
other than when this rule is $\unrhdL$, $\forallR$, $\existsL$,
$\existsR$, 
$\forallL$, $\IL$ and $\CIR$. In these cases, we simply apply the
substitution in a nominal capture avoiding way to the lower and any
possible upper sequents of the rule. It is easy to see that the resulting
structure is still an instance of the same rule and its upper sequents
are guaranteed to have proofs (of suitable heights) by induction.

Suppose that the last rule in $\Pi$ is an $\unrhdL$, \ie, it is of the form 
\begin{equation*}
\infer[\unrhdL]{\Sigma : \Gamma, s\unrhd t \vdash C}
      {\left\{\Sigma\rho : \Gamma\cas{\rho} \vdash C\cas{\rho} \;|\;
        \rho\ \mbox{is a solution to}\ s \unrhd t\right\}}
\end{equation*}
Then the following
\begin{equation*}
\infer[\unrhdL]{\Sigma\theta : \Gamma\cas{\theta},
                (s\unrhd t)\cas{\theta} \vdash C\cas{\theta}}
      {\left\{
        \Sigma(\theta\bullet\rho') : \Gamma\cas{\theta \bullet\rho'}
                 \vdash C\cas{\theta \bullet \rho'}
                 \;|\; \rho'\ \mbox{is a solution to}\ (s \unrhd
                t)\cas{\theta}
      \right\}}
\end{equation*}
is also an $\unrhdL$ rule. Noting that if $\rho'$ is a solution to
$(s\unrhd t)\cas{\theta}$, then $\theta\bullet \rho'$ is a solution to
$s\unrhd t$, we see that the upper sequents of this rule are contained
in the upper sequents of the rule in $\Pi$. It follows that we can
construct a proof of the lower sequent whose height is less than or
equal to that of $\Pi$. 

The argument is similar in the cases when the last rule in $\Pi$ is a
$\forallR$ or a $\existsL$ so we consider only the former in
detail. In this case the rule has the form
\begin{equation*}
\infer[\forallR]
      {\Sigma : \Gamma \vdash \forall x.B}
      {\Sigma, h : \Gamma \vdash B[h\;\vec{c}/x]}
\end{equation*}
where $\{\vec{c}\} = \supp(\forall x.B)$. Let $\{\vec{a}\} =
\supp((\forall x. B)\cas{\theta})$. Further, let $h'$ be a new variable
name. We assume without loss of generality that neither $h$ nor $h'$
appear in the domain or range of $\theta$. Letting $\rho = \theta \cup
\{\lambda\vec{c}.h'\;\vec{a}/h\}$, consider the structure
\begin{equation*}
\infer[]
      {\Sigma\theta : \Gamma\cas{\theta} \vdash (\forall x.B)\cas{\theta}}
      {(\Sigma, h)\rho :
               \Gamma\cas{\rho} \vdash B[h\;\vec{c}/x]\cas{\rho}}
\end{equation*}
The upper sequent here is equivalent under $\lambda$-conversion to
$\Sigma\theta, h' : \Gamma\cas{\theta} \vdash (B\cas{\theta})[h'\;
  \vec{a}/x]$ so this structure is, in  fact, also an instance of the
$\forallR$ rule. Moreover, its upper sequent is obtained via
substitution from the upper sequent of the rule in $\Pi$. The lemma
then follows by induction.

The arguments for the cases when the last rule is an $\existsR$ or an
$\forallL$ are similar and so we provide it explicitly only for the
former. In this case, we have the rule
\begin{equation*}
\infer[\existsR]{\Sigma : \Gamma \vdash \exists_\tau x.B}
      {\Sigma, \mathcal{K}, \mathcal{C} \vdash t:\tau &
       \Sigma : \Gamma \vdash B[t/x]}
\end{equation*}
ending $\Pi$. Let $\pi$ be a permutation such that
$\supp(\pi.(B[t/x])) \cap \supp(\theta) = \emptyset$. We assume
without loss of generality that $x$ does not appear in the domain or
range of $\theta$. Then consider the structure
\begin{equation*}
\infer
      {\Sigma\theta : \Gamma\cas{\theta} \vdash
                      (\exists_\tau x.B)\cas{\theta}}
      {\Sigma\theta, \mathcal{K}, \mathcal{C} \vdash
         (\pi.t)[\theta]:\tau &
       \Sigma\theta : \Gamma\cas{\theta} \vdash
         (\pi.B)[\theta][(\pi.t)[\theta]/x]}
\end{equation*}
The typing derivation here is well-formed since permutations and
substitutions are type preserving. Additionally, $\supp(B) \subseteq
\supp(B[t/x])$ implies $\supp(\pi.B)\cap \supp(\theta) = \emptyset$,
and so the conclusion of the lower sequent is equivalent to
$\exists_\tau x. (\pi.B)[\theta]$. Thus this structure is an instance
of the $\existsR$ rule. The term $(\pi.B)[\theta][(\pi.t)[\theta]/x]$
is equal to $(\pi.(B[t/x]))[\theta]$ which is equivalent to
$(B[t/x])\cas{\theta}$. Thus the upper right sequent is obtained via
substitution from the upper right sequent of the rule in $\Pi$. The
lemma then follows by induction.

The only remaining cases for the last rule are $\IL$ and $\CIR$. The
arguments in these cases are, yet again, similar and it suffices to
make only the former explicit. In this case, the end of $\Pi$ has the form
\begin{equation*}
\infer[\IL]
      {\Sigma : \Gamma, p\; \vec{t} \vdash C}
      {\vec{x} : B\; S\; \vec{x} \vdash S\; \vec{x} &
       \Sigma : \Gamma, S\; \vec{t} \vdash C}
\end{equation*}
But then the following
\begin{equation*}
\infer[]
      {\Sigma\theta : \Gamma\cas{\theta}, (p\; \vec{t})\cas{\theta}
                      \vdash C\cas{\theta}}
      {\vec{x} : B\; S\; \vec{x} \vdash S\; \vec{x} &
       \Sigma\theta : \Gamma\cas{\theta}, (S\;
         \vec{t})\cas{\theta} \vdash C\cas{\theta}}
\end{equation*}
is also an instance of the $\IL$ rule. Moreover, the same proof as in
$\Pi$ can be used for the left upper sequent and the right upper
sequent has the requisite form for using the induction hypothesis. 
\end{proof}

The proof of Lemma~\ref{lem:proof-subst} effectively defines a
transformation of a derivation $\Pi$ based on a substitution
$\theta$. We shall use the notation $\Pi\cas{\theta}$ to denote the
transformed derivation. Note that $\htf{(\Pi\cas{\theta})}$ can be less
than $\htf{(\Pi)}$. This may happen because the transformed version of a
$\unrhdL$ rule can have fewer upper sequents.

\begin{cor}
\label{cor:extend}
The following rules are admissible.
\begin{equation*}
\infer[\forallR^*]
      {\Sigma : \Gamma \vdash \forall x.B}
      {\Sigma, h : \Gamma \vdash B[h\;\vec{a}/x]}
\hspace{3cm}
\infer[\existsL^*]
      {\Sigma : \Gamma, \exists x.B \vdash C}
      {\Sigma, h : \Gamma, B[h\;\vec{a}/x] \vdash C}
\end{equation*}
where $h \notin \Sigma$ and $\vec{a}$ is any listing of distinct
nominal constants which contains $\supp(B)$.
\end{cor}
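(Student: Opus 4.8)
The plan is to derive the starred rules from the ordinary $\forallR$ and $\existsL$ rules of Figure~\ref{fig:core-rules} together with the substitution lemma just proved, Lemma~\ref{lem:proof-subst}. Both rules have the same shape, so I would treat $\forallR^*$ in detail and remark that $\existsL^*$ is analogous. The essential observation is that $\forallR^*$ differs from $\forallR$ only in that the dependency of the raised variable $h$ is expressed over an arbitrary enumeration $\vec{a}$ of distinct nominal constants that merely \emph{contains} $\supp(B)$, rather than over exactly $\supp(B) = \{\vec{c}\}$. So the task reduces to showing that, given a proof of the premise $\Sigma, h : \Gamma \vdash B[h\;\vec{a}/x]$, one can construct a proof of the conclusion $\Sigma : \Gamma \vdash \forall x.B$, which one would normally obtain via $\forallR$ from a proof of $\Sigma, h : \Gamma \vdash B[h\;\vec{c}/x]$.

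\smallskip
The bridge between these two premises is a substitution. Write $\vec{a}$ as an enumeration that splits into the constants $\vec{c} = \supp(B)$ together with the ``extra'' constants $\vec{e}$ that appear in $\vec{a}$ but not in $\supp(B)$. First I would apply Lemma~\ref{lem:proof-subst} to the given premise proof, using the substitution $\{\lambda\vec{a}.\,h'\;\vec{c}\,/\,h\}$ that sends the higher-type variable $h$ to one whose dependency is projected back onto only the constants in $\vec{c}$. Since $\Gamma$ and $B$ have their nominal constants confined as the signature conditions require, and since $h'\;\vec{c}$ does not depend on the extra constants $\vec{e}$, the term $B[h\;\vec{a}/x]$ becomes $\lambda$-convertible to $B[h'\;\vec{c}/x]$ after this substitution, while $\Gamma$, which does not contain $h$, is left unchanged. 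The output of the lemma is therefore a proof of $\Sigma, h' : \Gamma \vdash B[h'\;\vec{c}/x]$ of no greater height. Applying the ordinary $\forallR$ rule to this sequent, whose side condition $\supp(B) = \{\vec{c}\}$ is now met exactly, yields the desired proof of $\Sigma : \Gamma \vdash \forall x.B$.

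\smallskip
The main obstacle I anticipate is bookkeeping around the nominal-capture-avoiding application $\cas{\cdot}$ versus ordinary substitution $[\cdot]$ in Lemma~\ref{lem:proof-subst}: the lemma produces $\Gamma\cas{\theta}$ and $C\cas{\theta}$ for the chosen $\theta$, and I must check that applying my projecting substitution in the nominal-capture-avoiding sense does not disturb the constants $\vec{a}$ that I am trying to track, and in particular that the extra constants $\vec{e}$ are genuinely eliminated rather than merely renamed. This is exactly the kind of reasoning carried out in the $\forallR$ case of Lemma~\ref{lem:proof-subst}'s own proof, where the substitution $\rho = \theta \cup \{\lambda\vec{c}.h'\;\vec{a}/h\}$ appears, so the argument can be patterned directly on it. The symmetric case $\existsL^*$ follows by the same construction applied to the $\existsL$ rule, with the raised variable occurring on the left-hand side of the sequent instead of the right. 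Since Lemma~\ref{lem:proof-subst} guarantees the height does not increase, the resulting rules are genuinely admissible, completing the argument.
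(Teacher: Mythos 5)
Your proposal is correct and takes essentially the same route as the paper's own proof: the identical projecting substitution $\{\lambda\vec{a}.h'\;\vec{c}/h\}$ (which contains no nominal constants precisely because $\vec{c}$ is contained in the abstracted $\vec{a}$) is fed into Lemma~\ref{lem:proof-subst}, and the ordinary $\forallR$ rule is then applied, with $\existsL^*$ handled analogously. The one bookkeeping step you flag but leave implicit is made explicit in the paper via Lemma~\ref{lem:proof-perm}: since $\cas{\cdot}$ is defined only up to $\approx$, the transformed derivation ends in $\Sigma, h' : \Gamma' \vdash B'$ with $\Gamma' \approx \Gamma$ and $B' \approx B[h'\;\vec{c}/x]$, and that lemma is what restores the sequent to exactly the premise required by $\forallR$.
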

\begin{proof}
Let $\Pi$ be a derivation for $\Gamma \vdash B[h\;\vec{a}/x]$, let $h'$
be a variable that does not appear in $\Pi$, and let $\{\vec{c}\} =
\supp(B)$. By Lemma~\ref{lem:proof-subst}, $\Pi\cas{\lambda\vec{a}.h'\
  \vec{c} / h}$ is a valid derivation. Since $\vec{a}$ contains
$\vec{c}$, no nominal constants appear in the substitution
$\{\lambda\vec{a}.h'\  \vec{c} / h\}$. It can now be seen that the
last sequent in  $\Pi\cas{\lambda\vec{a}.h'\;\vec{c} / h}$ has the
form $\Sigma, h' : \Gamma' \vdash B'$ where $B' \approx
B[h'\;\vec{c}/h]$ and $\Gamma'$ results from replacing 
some of the formulas in $\Gamma$ by ones that they are equivalent to under
$\approx$. But then, 
by Lemma~\ref{lem:proof-perm}, there must be a derivation for $\Sigma,
h' : \Gamma \vdash B[h'\;\vec{c}/h]$. Using a $\forallR$ rule below this we
get a derivation for $\Sigma : \Gamma \vdash \forall x. B$, verifying
the admissibility of $\forallR^*$. The argument for $\existsL^*$ is
analogous.
\end{proof}

We now turn to the main result of this section, the redundancy from a
provability perspective of the $\cut$ rule in \logic. The usual
approach to proving such a property is to define a set of
transformations called cut reductions on derivations that leave the
end sequent unchanged but that have the effect of pushing occurrences
of $\cut$ up the proof tree to the leaves where they can be
immediately eliminated. The difficult part of such a proof is showing
that these cut reductions always terminate. In simpler sequent
calculi such as the one for first-order logic, this argument can be
based on an uncomplicated measure such as the size of the cut formula.
However, the presence of definitions in a logic like \logic renders
this measure inadequate. For example, the following is a natural way
to define a cut reduction between a $\defL$ and a $\defR$ rule that
work on the cut formula: 
\begin{equation*}
\infer[\cut]{\Sigma : \Gamma, \Delta \vdash C}{
  \infer[\defR]
        {\Sigma : \Gamma \vdash p\;\vec{t}}
        {\deduce{\Sigma : \Gamma \vdash B\ p\;\vec{t}}{\Pi'}} &
  \infer[\defL]
        {\Sigma : p\;\vec{t}, \Delta \vdash C}
        {\deduce{\Sigma : B\ p\;\vec{t}, \Delta \vdash C}{\Pi''}}}
\hspace{0.5cm}
\raisebox{1.5ex}{$\Rightarrow$}
\hspace{0.5cm}
\infer[\cut]{\Sigma : \Gamma, \Delta \vdash C}{
  {\deduce{\Sigma : \Gamma \vdash B\ p\;\vec{t}}{\Pi'}} &
  {\deduce{\Sigma : B\ p\;\vec{t}, \Delta \vdash C}{\Pi''}}}
\end{equation*}
Notice that $B\ p\;\vec{t}$, the cut formula in the new cut introduced
by this transformation, could be more complex than $p\;\vec{t}$, the
old cut formula. 
To overcome this difficulty, a more complicated argument based on the
idea of reducibility in the style of Tait \cite{tait67jsl} is often
used. Tiu and Momigliano
\cite{tiu09corr} in fact formulate a notion of parametric
reducibility for derivations that is based on the Girard's proof of
strong normalizability for System F \cite{girard89book} and that works
in the presence of the induction and co-induction rules for
definitions. Our proof makes extensive use of this notion and the
associated argument structure.

\begin{theorem}\label{th:cut-elim}
The $\cut$ rule can be eliminated from \logic without affecting the
provability relation.
\end{theorem}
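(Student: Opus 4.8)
The plan is to prove cut-elimination for \logic by reducing it, as far as possible, to the existing cut-elimination argument for Linc$^-$ due to Tiu and Momigliano \cite{tiu09corr}. The key observation, already emphasized in the introduction to this section, is that \logic is a conservative extension of Linc$^-$ obtained by adding the $\nabla$-quantifier together with nominal constants, and by replacing the equality rules with the nominal abstraction rules. Accordingly, the strategy is to adapt their notion of parametric reducibility of derivations, in the style of Tait and Girard, to our setting, and to verify that the new features do not disturb the termination of cut reduction.

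Concretely, I would proceed as follows. First, I would define the set of cut reduction steps on \logic derivations. These include the standard essential reductions (the logical connective cases), the commutative reductions that permute a $\cut$ past a rule acting on a side formula, and---crucially---the reduction between $\defR$/$\defL$ on a defined atom shown in the excerpt, together with the reductions governing the interaction of $\cut$ with $\IL$ and $\CIR$. The genuinely new cases are those where the cut formula is a nominal abstraction $s \unrhd t$, introduced on the right by $\unrhdR$ and eliminated on the left by $\unrhdL$: here the essential reduction must use the fact that $\unrhdR$ fires only when $s\unrhd t$ \emph{holds}, i.e. the identity substitution is among the solutions, so that one of the premises of the $\unrhdL$ instance is already a derivation of the desired end sequent. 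I would also need reductions permuting $\cut$ through $\nablaL$, $\nablaR$, and $\unrhdL$; for $\unrhdL$ the subtlety is that the premises live over substituted sequents, so the cut must be pushed into each premise and its side derivation transported by the substitution, which is exactly what Lemma~\ref{lem:proof-subst} licenses (via the transformation $\Pi\cas{\theta}$).

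Next, I would define parametric reducibility for \logic derivations by transcribing the Tiu--Momigliano definition, with two adjustments. The measure controlling the induction is the \emph{level} $\lvl$ assigned to predicates, lifted to formulas by the definition in Section~\ref{sec:definitions}; the stratification conditions guarantee this measure behaves well under $\defL$/$\defR$, $\IL$, and $\CIR$ unfolding, which is what tames the problematic growth of cut-formula complexity illustrated by the $\defR$/$\defL$ reduction in the excerpt. The adjustments are that reducibility must be stable under the new rules and under the operations on nominal constants; here the two already-established results, Lemma~\ref{lem:proof-perm} (provability is preserved, with the same proof structure, under permutations of nominal constants) and Lemma~\ref{lem:proof-subst} (provability is preserved under nominal capture-avoiding substitution, without increasing proof height), are precisely the closure properties needed to carry reducibility through the $\unrhdL$, $\nablaL$, $\nablaR$, $\forallR$, and $\existsL$ cases. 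I would then prove the standard sequence of lemmas: that every reducible derivation can be normalized, that reducibility is closed under each rule of \logic (the main induction), and hence that every derivation is reducible and therefore reduces to a cut-free one.

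The hard part, I expect, will be the closure-under-$\unrhdL$ case of the reducibility argument. Because $\unrhdL$ may have infinitely many premises and, on reduction, redistributes the cut across all of them after applying substitutions to the side derivation, one must show that reducibility of the (transfinitely many) premise derivations yields reducibility of the conclusion, while respecting the transfinite height measure of Definition~\ref{def:ht}. This forces the reducibility induction to be genuinely transfinite and requires checking that the substitution transformation $\Pi\cas{\theta}$ preserves reducibility and does not increase the controlling measure---so that the candidate-of-reducibility argument remains well-founded even though proof height can be an infinite ordinal. Verifying that $\lvl$ together with the ordinal height gives a well-founded order on which the whole Tait--Girard argument rests, and that the nominal-abstraction rules slot into this order without breaking stratification, is where the bulk of the technical care will lie; the remaining cases are either routine or direct transcriptions of the Linc$^-$ proof.
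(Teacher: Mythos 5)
Your proposal is correct and follows essentially the same route as the paper: both extend the Tiu--Momigliano parametric reducibility argument for Linc$^-$ to \logic, using Lemma~\ref{lem:proof-perm} and Lemma~\ref{lem:proof-subst} as the transport properties for nominal constants, handling the essential $\unrhdR$/$\unrhdL$ cut by observing that the identity substitution is a solution (so the needed premise $\Pi_\epsilon$ is already present, then weakening), and commuting cuts past $\unrhdL$ by pushing the side derivation through each premise as $\Pi'\cas{\theta}$. The only point the paper makes that you omit is the adjustment to the $id$ reductions forced by identifying formulas up to $\approx$, which is handled by the same two lemmas plus Lemma~\ref{lem:proof-weak}.
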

\begin{proof}
The relationship between \logic and the logic Linc$^-$
treated by Tiu and Momigliano can be understood as follows: Linc$^-$
does not treat the $\nabla$ quantifier and therefore has no rules for
it. Consequently, it does not have nominal constants, it does not
use raising over nominal constants in the rules $\forallR$ and
$\existsL$, it has no need to consider permutations in the $id$ (or
initial) rule and has equality rules in place of nominal abstraction
rules. The rules in \logic other than the ones for $\nabla$, including
the ones for definitions, induction, and co-induction, are essentially 
identical to the ones in Linc$^-$ except for the additional attention
to nominal constants.

Tiu and Momigliano's proof can be extended to \logic in a fairly
direct way since the addition of nominal constants and their
treatment in the rules is quite modular and does not create any new
complexities for the reduction rules. The main issues in realizing this
extension is building in the idea of identity under permutations of
nominal constants and lifting the Linc$^-$ notion of 
substitution on terms, sequents, and derivations to a form that
avoids capture of nominal constants. The machinery for doing this has
already been developed in Lemmas~\ref{lem:proof-perm} and
\ref{lem:proof-subst}. In the rest of this proof we
assume a familiarity with the argument for cut-elimination for Linc$^-$
and discuss only the changes to the cut reductions of Linc$^-$ to
accommodate the differences. 

The $id$ rule in \logic identifies formulas which are equivalent
under $\approx$ which is more permissive than equality under
$\lambda$-convertability that is used in the Linc$^-$ initial
rule. Correspondingly, we have to
be a bit more careful about the cut reductions associated with the
$id$ (initial) rule. For example, consider the following reduction:
\begin{equation*}
\infer[\cut]{\Sigma : B, \Gamma, \Delta \vdash C}{
  \infer[id]
        {\Sigma : \Gamma, B \vdash B'}
        {B \approx B'} &
  \deduce{\Sigma : B', \Delta \vdash C}
         {\Pi'}}
\hspace{1cm}
\raisebox{1.5ex}{$\Rightarrow$}
\hspace{1cm}
\deduce{\Sigma : B', \Delta \vdash C}{\Pi'}
\end{equation*}
This reduction has not preserved the end sequent. However, we know $B
\approx B'$ and so we can now use Lemma~\ref{lem:proof-perm} to
replace $\Pi'$ with a derivation of $\Sigma : B, \Delta \vdash C$.
Then we can use Lemma~\ref{lem:proof-weak} to produce a derivation of
$\Sigma : B, \Gamma, \Delta \vdash C$ as desired. The changes to the
cut reduction when $id$ applies to the right upper sequent of the
$\cut$ rule are similar.

The $\forallR$ and $\existsL$ rules of \logic extend the corresponding
rules of Linc$^-$ by raising over nominal constants in the support of
the quantified formula. The $\forallL$ and $\existsR$ rules of \logic
also extend the corresponding rules in Linc$^-$ by allowing
instantiations which contain nominal constants. Despite these changes,
the cut reductions involving these quantifier rules remain unchanged
for \logic except for the treatment of essential cuts that involve an
interaction between $\forallR$ and $\forallL$ and, similarly, between
$\existsR$ and $\existsL$. The first of these is treated as follows:
\begin{equation*}
\infer[\cut]{\Sigma : \Gamma, \Delta \vdash C}{
  \infer[\forallR]
        {\Sigma : \Gamma \vdash \forall x.B}
        {\deduce{\Sigma, h : \Gamma \vdash B[h\;\vec{c}/x]}{\Pi'}} &
  \infer[\forallL]
        {\Sigma : \Delta, \forall x.B \vdash C}
        {\deduce{\Sigma : \Delta, B[t/x] \vdash C}{\Pi''}}}
\hspace{0.25cm}
\raisebox{1.5ex}{$\Rightarrow$}
\hspace{0.25cm}
\infer[\cut]{\Sigma : \Gamma, \Delta \vdash C}{
  {\deduce{\Sigma : \Gamma \vdash B[t/x]}{\Pi'\cas{\lambda\vec{c}.t/h}}} &
  {\deduce{\Sigma : \Delta, B[t/x] \vdash C}{\Pi''}}}
\end{equation*}
The existence of the derivation $\Pi'\cas{\lambda \vec{c}. t/h}$ (with
height at most that of $\Pi'$) is guaranteed by
Lemma~\ref{lem:proof-subst}. The end sequent of this derivation is
$\Sigma : \Gamma\cas{\lambda \vec{c}. t/h} \vdash B[h\
\vec{c}/x]\cas{\lambda \vec{c}. t/h}$. However,
$\Gamma\cas{\lambda\vec{c}.t/h}\approx \Gamma$ because $h$ is new to
$\Gamma$ and $B[h\;\vec{c}/x]\cas{\lambda \vec{c}. t/h} \approx
B[t/x]$ because $\{\vec{c}\} = \supp(B)$ and so $\lambda \vec{c}. t$ has
no nominal constants in common with $\supp(B)$. Thus, by
Lemma~\ref{lem:proof-perm} and by an abuse of notation, we may
consider $\Pi'\cas{\lambda\vec{c}./h}$ to also be a derivation of
$\Sigma : \Gamma \vdash B[t/x]$. The reduction for a cut involving an
interaction between an $\existsR$ and an $\existsL$ rule is analogous.

The logic \logic extends the equality rules in Linc$^-$ to treat the
more general case of nominal abstraction. Our notion of nominal
capture-avoiding substitution correspondingly generalizes the Linc$^-$
notion of substitution, and we have shown in
Lemma~\ref{lem:proof-subst} that this preserves provability. Thus the
reductions for nominal abstraction are the same as for equality,
except that we use nominal capture-avoiding substitution in place of regular
substitution. For example, the essential cut involving an interaction
between an $\unrhdR$ and an $\unrhdL$ rule is treated as follows:
\begin{equation*}
\infer[\cut]{\Sigma : \Gamma, \Delta \vdash C}{
  \infer[\unrhdR]
        {\Sigma : \Gamma \vdash s\unrhd t}
        {} &
  \infer[\unrhdL]
        {\Sigma : \Delta, s\unrhd t \vdash C}
        {\left\{\raisebox{-1.5ex}{
          \deduce{\Sigma\theta : \Delta\cas{\theta} \vdash C\cas{\theta}}
                 {\Pi_\theta}
         }\right\}}}
\hspace{1cm}
\raisebox{1.5ex}{$\Rightarrow$}
\hspace{1cm}
\deduce{\Sigma : \Delta \vdash C}{\Pi_\epsilon}
\end{equation*}
Here we know $s\unrhd t$ holds and thus $\epsilon$, the identity
substitution, is a solution to
this nominal abstraction. Therefore we have the derivation
$\Pi_\epsilon$ as needed. We can then apply Lemma~\ref{lem:proof-weak}
to weaken this derivation to one for $\Sigma : \Gamma, \Delta \vdash
C$. For the other cuts involving nominal abstraction, we make use of
the fact proved in Lemma~\ref{lem:proof-subst} that nominal capturing
avoiding substitution preserves provability. This allows us to commute
other rules with $\unrhdL$. For example, consider the following occurrence
of a cut where the upper right derivation uses an $\unrhdL$ on a formula
different from the cut formula:
\begin{equation*}
\infer[\cut]{\Sigma : \Gamma, \Delta, s\unrhd t \vdash C}{
  \deduce{\Sigma : \Gamma \vdash B}{\Pi'} &
  \infer[\unrhdL]
        {\Sigma : B, \Delta, s\unrhd t \vdash C}
        {\left\{\raisebox{-1.5ex}{
          \deduce{\Sigma\theta :
                  B\cas{\theta}, \Delta\cas{\theta} \vdash C\cas{\theta}}
                 {\Pi_\theta}
         }\right\}}}
\end{equation*}
Cut reduction produces from this the following derivation:
\begin{equation*}
\infer[\unrhdL]
      {\hspace{2.8cm}\Sigma : \Gamma, \Delta, s\unrhd t \vdash C\hspace{2.8cm}}
      {\left\{\raisebox{-3ex}{
        \infer[\cut]
              {\Sigma\theta : \Gamma\cas{\theta}, \Delta\cas{\theta} \vdash
                C\cas{\theta}}
              {\deduce{\Sigma\theta : \Gamma\cas{\theta} \vdash B\cas{\theta}}
                      {\Pi'\cas{\theta}} &
              \deduce{\Sigma\theta : B\cas{\theta}, \Delta\cas{\theta} \vdash
                C\cas{\theta}}
               {\Pi_\theta}}
       }\right\}\hspace{0.6cm}}
\end{equation*}

Finally, \logic has new rules for treating the $\nabla$-quantifier.
The only reduction rule which deals specifically with either the
$\nablaL$ or $\nablaR$ rule is the essential cut between both rules
which is treated as follows:
\begin{equation*}
\infer[\cut]{\Sigma : \Gamma, \Delta \vdash C}{
  \infer[\nablaR]
        {\Sigma : \Gamma \vdash \nabla x.B}
        {\deduce{\Sigma : \Gamma \vdash B[a/x]}{\Pi'}} &
  \infer[\nablaL]
        {\Sigma : \nabla x.B, \Delta \vdash C}
        {\deduce{\Sigma : B[a/x], \Delta \vdash C}{\Pi''}}}
\hspace{0.25cm}
\raisebox{1.5ex}{$\Rightarrow$}
\hspace{0.25cm}
\infer[\cut]{\Sigma : \Gamma, \Delta \vdash C}
      {\deduce{\Sigma : \Gamma \vdash B[a/x]}{\Pi'} &
       \deduce{\Sigma : B[a/x], \Delta \vdash C}{\Pi''}}.
\end{equation*}

With these changes, the cut-elimination argument for Linc$^-$
extends to \logic, \ie, \logic admits cut-elimination.

\end{proof}

The consistency of \logic is an easy consequence of
Theorem~\ref{th:cut-elim}. 

\begin{cor}\label{consistency}
The logic \logic is consistent, \ie, not all sequents are provable in
it. 
\end{cor}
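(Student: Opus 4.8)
The plan is to prove consistency by exhibiting a single sequent that is not provable, the natural candidate being $\cdot : \cdot \vdash \bot$ with an empty signature and an empty antecedent. First I would invoke Theorem~\ref{th:cut-elim}: if this sequent were provable at all, it would admit a cut-free proof. It therefore suffices to show that no cut-free derivation can have $\cdot : \cdot \vdash \bot$ as its conclusion, from which it follows that $\bot$ is underivable and hence that not every sequent is provable.

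The argument proceeds by inspecting the last rule of such a purported cut-free derivation. Because the antecedent is empty, no left rule can be the last rule applied: the rules $id$, $\cL$, $\botL$, $\unrhdL$, $\defL$, $\IL$, together with the left introduction rules $\lorL$, $\landL$, $\supsetL$, $\forallL$, $\nablaL$ and $\existsL$, all require at least one formula of an appropriate shape to occur in the antecedent of their conclusion. Because the succedent is $\bot$, which is neither $\top$, nor a conjunction, disjunction, implication, universally, existentially, or $\nabla$-quantified formula, nor an atom with a defined head, nor a nominal abstraction, none of the right rules $\topR$, $\landR$, $\lorR$, $\supsetR$, $\forallR$, $\nablaR$, $\existsR$, $\defR$, $\CIR$ or $\unrhdR$ can produce it either. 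Since $\cut$ is excluded by cut-elimination, no rule whatsoever can conclude $\cdot : \cdot \vdash \bot$, so this sequent is unprovable and the logic is consistent.

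The one point that warrants care, and the only place where the argument is more than bookkeeping, is that two rules can close a branch with no premises at all: $\botL$, and an instance of $\unrhdL$ in which the nominal abstraction $s \unrhd t$ has no solutions, so that its premise set is empty. One might worry that such a premise-free rule could terminate a derivation vacuously and thereby evade the case analysis. However, both of these are \emph{left} rules, and so each still requires the relevant formula (respectively $\bot$, or a nominal abstraction) to occur in the antecedent of its conclusion; with an empty antecedent neither is applicable. Once this subtlety is dispatched, the remaining case analysis is entirely routine, and the corollary follows.
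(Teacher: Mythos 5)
Your proof is correct and follows essentially the same route as the paper: invoke Theorem~\ref{th:cut-elim} and then observe that $\cdot : \cdot \vdash \bot$ admits no cut-free derivation. The paper states this second step in one sentence without justification, whereas you carry out the full case analysis on the last rule (including the genuinely subtle point that a premise-free $\unrhdL$ instance is still a left rule and so needs a nominal abstraction in the antecedent), which is exactly the bookkeeping the paper leaves implicit.
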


\begin{proof} The sequent $\vdash \bot$ has no cut-free proof and,
  hence, no proof in \logic.
\end{proof}

The cut-elimination theorem is important for more reasons than showing
the consistency of \logic. As one example, using the cut-rule in
constructing proofs in \logic involves the invention of relevant cut
formulas that function as {\em lemmas}. Thus, knowing that this kind
of creative step is not essential is helpful in designing automatic 
theorem provers that are both practical and complete. 



\section{A Pattern-Based Form for Definitions}
\label{sec:pattern-form}

When presenting a definition for a predicate, it is often convenient
to write this as a collection of clauses whose applicability is also
constrained by patterns appearing in the head. For example, in logics
that support equality but not nominal abstraction, list membership
may be defined by the two pattern based clauses shown below.
\begin{equation*}
\member X (X::L) \triangleq \top \hspace{2cm}
\member X (Y::L) \triangleq \member X L
\end{equation*}
These logics also include rules for directly treating definitions
presented in this way. In understanding these rules, use may be made
of the translation of the extended form of definitions to a version
that does not use patterns in the head and in which there is at most
one clause for each predicate. For example, the definition of the list
membership predicate would be translated to the following form:
\begin{equation*}
\member X K \triangleq (\exists L.~ K = (X :: L)) \lor
 (\exists Y \exists L.~ K = (Y :: L) \land \member X L)
\end{equation*}
The treatment of patterns and multiple clauses can now be understood
in terms of the rules for definitions using a single clause and the
rules for equality, disjunction, and existential quantification.

In the logic \logic, the notion of equality has been generalized to
that of nominal abstraction. This allows us also to expand the
pattern-based form of definitions to use nominal abstraction in
determining the selection of clauses. By doing this, we would allow
the head of a clausal definition to describe not only the term
structure of the arguments, but also to place restrictions on the
occurrences of nominal constants in these arguments. 
For example, suppose we want to describe the contexts in typing
judgments by lists of the form $\tup{c_1,T_1} :: \tup{c_2,T_2} ::
\ldots :: nil$ with the further proviso that each $c_i$ is a distinct
nominal constant. We will allow this to be done by using the following
pattern-based form of definition for the predicate $\cntx$:
\begin{equation*}
\cntx nil \triangleq \top \hspace{2cm}
(\nabla x. \cntx (\tup{x, T} :: L)) \triangleq \cntx L
\end{equation*}
Intuitively, the $\nabla$ quantifier in the head of the second clause
imposes the requirement that, to match it, the argument of $\cntx$
should have the form $\tup{x, T} :: L$ where $x$ is a nominal constant
that does not occur in either $T$ or $L$. To understand this
interpretation, we could think of the earlier definition of {\sl cntx}
as corresponding to the following one that does not use patterns or
multiple clauses:
\begin{equation*}
\cntx K \triangleq (K = nil) \lor
(\exists T \exists L.~ (\lambda x . \tup{x, T} :: L) \unrhd K \land \cntx
L)
\end{equation*}
Our objective in the rest of this section is to develop machinery
for allowing the extended form of definitions to be used directly. We
do this by presenting its 
syntax formally, by describing rules that allow us to employ such
definitions and, finally, by justifying the new rules by means of a
translation of the kind indicated above. 

\begin{defn}
A {\em pattern-based definition} is a finite collection of clauses of
the form
\[\forall \vec{x}.(\nabla \vec{z}. p\ \vec{t}) \triangleq
B\ p\ \vec{x}\] where $\vec{t}$ is a sequence of terms that do not
have occurrences of nominal constants in them, $p$ is a constant such
that $p\ \vec{t}$ is of type $o$ and $B$ is a term devoid of 
occurrences of $p$, $\vec{x}$ and nominal constants and such that 
$B\ p\ \vec{t}$ is of type $o$. Further, we expect such a collection
of clauses 
to satisfy a stratification condition: there must exist an assignment
of levels to predicate symbols such that for any clause $\forall
\vec{x}.(\nabla \vec{z}. p\ \vec{t}) \triangleq B\ p\ \vec{x}$
occurring in the set, assuming $p$ has arity $n$, it is the case that
$\lvl(B\ (\lambda \vec{x}.\top)\ \vec{x}) < \lvl(p)$.
Notice that we allow the collection to contain more than one clause
for any given predicate symbol.
\end{defn}

\begin{figure}[t]
\begin{center}
$\infer[\defR^p]
      {\Sigma : \Gamma \vdash p\; \vec{s}}
      {\Sigma : \Gamma \vdash (B\; p\; \vec{x})[\theta]}$\\[5pt]
for any clause $\forall \vec{x}.(\nabla \vec{z}. p\ \vec{t}) \triangleq
B\ p\ \vec{x}$ in $\cal D$ and any $\theta$\\ such that
$range(\theta)\cap \Sigma = \emptyset$ and $(\lambda \vec{z} . p\ \vec{t})[\theta]
                        \unrhd p\ \vec{s}$ holds\\[20pt]
$\infer[\defL^p]
      {\Sigma : \Gamma, p\; \vec{s} \vdash C}
      {\left\{
        \begin{tabular}{l|l}
         $\Sigma\theta : \Gamma\cas{\theta}, (B\; p\;
         \vec{x})\cas{\theta} \vdash C\cas{\theta}$ &
           $\forall \vec{x}.(\nabla \vec{z}. p\ \vec{t}) \triangleq
                              B\ p\ \vec{x} \in {\cal D}$ and \\
        &
           $\theta$ is a solution to $((\lambda \vec{z} . p\ \vec{t}) \unrhd p\
                        \vec{s})$
       \end{tabular}
              \right\}
      }
$
\end{center}
\caption{Introduction rules for a pattern-based definition $\cal D$}
\label{fig:patterndefrules}
\end{figure}

The logical rules for treating pattern-based definitions are presented
in Figure~\ref{fig:patterndefrules}. These rules encode the
idea of matching an instance of a predicate with the head of a
particular clause and then replacing the predicate with the
corresponding clause body. The kind of matching involved is made
precise through the construction of a nominal abstraction after
replacing the $\nabla$ quantifiers in the head of the clause by
abstractions. The right rule embodies the fact that it is enough if
an instance of any one clause can be used in this way to yield a
successful proof. In this rule, the substitution $\theta$ that results
from the matching must be applied in a nominal capture avoiding way to
the body. However, since $B$ does not contain nominal constants,
the ordinary application of the substitution also suffices. 
To accord with the treatment in the right rule, the left rule
must consider all possible ways in which an instance of an atomic
assumption  $p\ \vec{s}$ can be matched by a clause and must show that
a proof can be constructed in each such case. 

The soundness of these rules is the content of the following theorem
whose proof also makes explicit the intended interpretation of the
pattern-based form of definitions.

\begin{theorem}
The pattern-based form of definitions and the associated proof rules
do not add any new power to the logic. In particular, the $\defL^p$
and $\defR^p$ rules are admissible under the intended interpretation
via translation of the pattern-based form of
definitions. 
\end{theorem}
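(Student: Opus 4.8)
The plan is to make the translation precise and then verify the two rules in turn, with the left rule carrying essentially all of the difficulty. Given a predicate $p$ whose pattern-based clauses are $\forall \vec{x}_i.(\nabla \vec{z}_i.\ p\ \vec{t}_i) \triangleq B_i\ p\ \vec{x}_i$ for $i = 1,\ldots,k$, I would translate them into the single ordinary clause
\[
\forall \vec{w}.\ p\ \vec{w} \triangleq \bigvee_{i=1}^{k} \exists \vec{x}_i.\ \bigl((\lambda \vec{z}_i.\ p\ \vec{t}_i) \unrhd p\ \vec{w}\bigr) \land (B_i\ p\ \vec{x}_i),
\]
generalizing the {\sl cntx} example. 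A preliminary step is to confirm that this is a legitimate \logic definition: the body contains neither $\vec{w}$ nor $p$ outside the allowed positions, and since $\lvl((\lambda\vec{z}_i.\ p\ \vec{t}_i)\unrhd p\ \vec{w}) = 0$ the level of the translated body is $\max_i \lvl(B_i\ p\ \vec{x}_i)$, which the pattern-definition stratification condition keeps strictly below $\lvl(p)$; hence the translated definition is stratified.

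For $\defR^p$, suppose its premise $\Sigma : \Gamma \vdash (B\ p\ \vec{x})[\theta]$ is derivable, where for the chosen clause $(\lambda\vec{z}.\ p\ \vec{t})[\theta] \unrhd p\ \vec{s}$ holds and $\mathit{range}(\theta)\cap\Sigma = \emptyset$. I would unfold $p\ \vec{s}$ with $\defR$, select the disjunct for that clause with $\lorR$, and instantiate the existential block with $\existsR$ using the terms $\theta$ assigns to $\vec{x}$. A $\landR$ then splits the goal into the nominal abstraction $(\lambda\vec{z}.\ p\ \vec{t})[\theta]\unrhd p\ \vec{s}$, closed by $\unrhdR$ precisely because this abstraction holds, and the body goal $(B\ p\ \vec{x})[\theta]$, which is the assumed premise. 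Because $B$ contains no nominal constants, ordinary and nominal-capture-avoiding substitution agree on it, so no subtlety arises from $\theta$ carrying nominal constants. This settles the right rule.

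For $\defL^p$ the target is $\Sigma : \Gamma, p\ \vec{s} \vdash C$, and I would build its derivation from the translated definition by applying $\defL$, then $\lorL$ to branch on the clauses, then $\existsL$ on each disjunct. The $\existsL$ rule raises the bound variables $\vec{x}_i$ over $\supp(\vec{s})$ (recall $\vec{t}_i$ and $B_i$ are free of nominal constants), introducing a raising substitution $\sigma_i$ and fresh eigenvariables. After a $\landL$ I reach a sequent carrying the raised abstraction $(\lambda\vec{z}_i.\ p\ \vec{t}_i)[\sigma_i]\unrhd p\ \vec{s}$ together with the raised body, whereupon $\unrhdL$ (Figure~\ref{fig:na-rules}) generates one premise per solution $\rho$ of that raised abstraction. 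The crux is that each such premise coincides, up to the transformations licensed by Lemmas~\ref{lem:proof-subst} and \ref{lem:proof-perm}, with one of the premises that $\defL^p$ supplies: this is exactly the raising correspondence underlying Definition~\ref{def:s}, where composing $\sigma_i$ with a solution $\rho$ of the raised abstraction yields a solution $\theta = \sigma_i \bullet \rho$ of the original abstraction $(\lambda\vec{z}_i.\ p\ \vec{t}_i)\unrhd p\ \vec{s}$. Since $\sigma_i$ touches only the clause variables $\vec{x}_i$, we have $\Gamma\cas{\theta}\approx\Gamma\cas{\rho}$, $C\cas{\theta}\approx C\cas{\rho}$, and $(B_i\ p\ \vec{x}_i)\cas{\theta}\approx(B_i\ p\ \vec{x}_i)\cas{\sigma_i}\cas{\rho}$, so the $\defL^p$ premise for $\theta$ differs from the $\unrhdL$-generated premise only modulo $\approx$; Lemma~\ref{lem:proof-perm} then transports the given derivation into the one required.

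The main obstacle is precisely this correspondence in the left rule: the pattern rule enumerates solutions $\theta$ of the abstraction directly, whereas the translation reaches them in two stages — first raising the clause variables via $\existsL$, then solving the raised abstraction via $\unrhdL$ — and one must check that these two stages compose to cover exactly the same proof obligations, with the support and freshness side-conditions of $\existsL$ aligning with the capture-avoidance built into $\cas{\cdot}$. The bookkeeping is essentially that already performed for $S(\Sigma,s,t)$, so I would appeal to that argument rather than repeat it; the remaining routine steps (the $\lorL$/$\lorR$ branching and the $\landL$/$\landR$ and $\existsL$/$\existsR$ manipulations) I would dispatch without further comment.
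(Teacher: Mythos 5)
Your proof has the same architecture as the paper's: the same disjunction-of-existentials translation generalizing the {\sl cntx} example, the same simulation of $\defR^p$ by $\defR$, $\lorR$, $\existsR$, $\landR$, $\unrhdR$, and of $\defL^p$ by $\defL$, $\lorL$, $\existsL$, $\landL^*$, $\unrhdL$, with Lemmas~\ref{lem:proof-perm} and~\ref{lem:proof-subst} used to align the $\unrhdL$ premises with those of $\defL^p$ (you are in fact more explicit than the paper about the raising performed by $\existsL$). The genuine gap is in the translation itself: you wrap the clause heads with the defined predicate $p$, writing $(\lambda\vec{z}_i.\, p\ \vec{t}_i) \unrhd p\ \vec{w}$, whereas the paper introduces a \emph{new} constant $p'$, used as a term constructor, and writes $(\lambda\vec{z}_i.\, p'\ \vec{t}_i) \unrhd p'\ \vec{w}$. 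With your choice the two arguments of an object-level $\unrhd$ atom are formulas (terms of type $o$) containing the very predicate being defined. Nothing in the paper sanctions this: every other piece of the logic's apparatus (quantification, $\nabla$, nominal constants, argument types of nonlogical constants) excludes $o$-types, and the definitional discipline expects the defined predicate to occur in its body only as the head of atoms. So it is unclear that your clause is a legitimate definition to which $\defL$/$\defR$ and cut-elimination apply, and your ``preliminary step'' does not confront this; moreover its level computation is off --- for a recursive clause $\lvl(B_i\ p\ \vec{x}_i)$ equals $\lvl(p)$ rather than being strictly below it, since the quantity the pattern stratification condition actually bounds is $\lvl(B_i\ (\lambda\vec{x}.\top)\ \vec{x}_i)$.

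The fresh $p'$ is not a cosmetic choice. For the present theorem it delivers exactly what you want from $p$: a rigid wrapper forcing the $\nabla$-bound variables to receive a consistent value across all arguments of the head, and since both $p$ and $p'$ are rigid, the solutions of $(\lambda\vec{z}_i.\, p\ \vec{t}_i) \unrhd p\ \vec{s}$ and $(\lambda\vec{z}_i.\, p'\ \vec{t}_i) \unrhd p'\ \vec{s}$ coincide --- the paper notes this explicitly, and it is what lets the leaves of the simulation be literally the premises of $\defL^p$ (which are stated with $p$). But freshness keeps the $\unrhd$ atom between honest terms free of the defined predicate, and it becomes indispensable in the theorem that follows: the same translation must support $\IL^p$, and the $\IL$ rule substitutes the invariant $S$ for $p$ throughout the body. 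Under your translation that substitution corrupts the matching atom into $(\lambda\vec{z}_i.\, S\ \vec{t}_i) \unrhd S\ \vec{w}$ with $S$ an arbitrary non-rigid term, destroying the pattern-matching reading; with $p'$ the atom is untouched. The repair is one line --- replace your wrapper $p$ by a fresh term constructor $p'$ --- after which your two simulations are essentially verbatim the paper's proof.
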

\begin{proof}
Let $p$ be a predicate whose clauses in the definition being
considered are given by the following set of clauses.
\begin{equation*}
\{\forall \vec{x}_i.~ (\nabla \vec{z}_i. p\ \vec{t}_i) \triangleq
B_i\ p\ \vec{x}_i\}_{i\in 1..n}
\end{equation*}
Let $p'$ be a new constant symbol with the same argument types as
$p$. Then the intended interpretation of the definition of $p$ in a
setting that does not allow the use of patterns in the head and that
limits the number of clauses defining a predicate to one is given by
the clause
\begin{equation*}
\forall \vec{y} . p\ \vec{y} \triangleq \bigvee_{i\in 1..n} \exists \vec{x}_i
. ((\lambda \vec{z}_i . p'\ \vec{t}_i) \unrhd p'\ \vec{y}) \land B_i\
p\ \vec{x}_i
\end{equation*}
in which the variables $\vec{y}$ are chosen such that they do not
appear in the terms $\vec{t}_i$ for $1 \leq i \leq n$. Note also that we are using the term
constructor $p'$ here so as to be able to match the entire 
head of a clause at once, thus ensuring that the $\nabla$-bound
variables in the head are assigned a consistent value for all
arguments of the predicate. 

Based on this translation, we can replace
an instance of $\defR^p$, 
\begin{equation*}
\infer[\defR^p]
      {\Gamma \vdash p\; \vec{s}}
      {\Gamma \vdash (B_i\; p\; \vec{x}_i)[\theta]}
\end{equation*}
with the following sequence of rules, where a double inference line
indicates that a rule is used multiple times.
\begin{equation*}
\infer[\defR]{\Gamma \vdash p\; \vec{s}}
{\infer=[\lorR]
 {\Gamma \vdash \bigvee_{i\in 1..n} \exists \vec{x}_i
  . ((\lambda \vec{z}_i . p'\ \vec{t}_i) \unrhd p'\ \vec{s}) \land
     B_i\ p\ \vec{x}_i
 }
 {\infer=[\existsR]
  {\Gamma \vdash \exists \vec{x}_i
   . ((\lambda \vec{z}_i . p'\ \vec{t}_i) \unrhd p'\ \vec{s}) \land
     B_i\ p\ \vec{x}_i
  }
  {\infer[\landR]
   {\Gamma \vdash ((\lambda \vec{z}_i . p'\ \vec{t}_i)[\theta] \unrhd
     p'\ \vec{s}) \land (B_i\ p\ \vec{x}_i)[\theta]
   }
   {\infer[\unrhdR]{\Gamma \vdash (\lambda \vec{z}_i . p'\
       \vec{t}_i)[\theta] \unrhd p'\ \vec{s}}{}
    &
    \Gamma \vdash (B_i\; p\; \vec{x}_i)[\theta]
   }
  }
 }
}
\end{equation*}
Note that we have made use of the fact that $\theta$ instantiates only
the variables $x_i$ and thus has no effect on $\vec{s}$. Further, the
side condition associated with the $\defR^p$ rule ensures that the
$\unrhdR$ rule that appears as a left leaf in this derivation is
well-formed.

Similarly, we can replace an instance of $\defL^p$,
\begin{equation*}
\infer[\defL^p]
      {\Sigma : \Gamma, p\; \vec{s} \vdash C}
      {\left\{
         \Sigma\theta : \Gamma\cas{\theta}, (B_i\; p\;
         \vec{x}_i)\cas{\theta} \vdash C\cas{\theta}\ |\ 
           \hbox{$\theta$ is a solution to $((\lambda \vec{z} . p\ \vec{t}_i) \unrhd p\
                        \vec{s})$}
              \right\}_{i\in 1..n}
      }
\end{equation*}
with the following sequence of rules
\begin{equation*}
\hspace{-1.8cm}
\infer[\defL]{\Gamma, p\; \vec{s} \vdash C}
{
 \infer=[\lorL]
 {\Gamma, \bigvee_{i\in 1..n} \exists \vec{x}_i
  . ((\lambda \vec{z}_i . p'\ \vec{t}_i) \unrhd p'\ \vec{s}) \land
     B_i\ p\ \vec{x}_i
  \vdash C
 }
 {\hspace{2.5cm}\left\{\raisebox{-6ex}{
  \infer=[\existsL]
   {\Gamma, \exists \vec{x}_i
     . ((\lambda \vec{z}_i . p'\ \vec{t}_i) \unrhd p'\ \vec{s}) \land
     B_i\ p\ \vec{x}_i
    \vdash C
   }
   {\infer[\landL^*]
    {\Gamma, ((\lambda \vec{z}_i . p'\ \vec{t}_i) \unrhd p'\ \vec{s}) \land
     B_i\ p\ \vec{x}_i
     \vdash C
    }
    {\infer[\unrhdL]
     {\Gamma, (\lambda \vec{z}_i . p'\ \vec{t}_i) \unrhd p'\ \vec{s},
       B_i\ p\ \vec{x}_i
       \vdash C
     }
     {\left\{\hbox{
       $\Gamma\cas{\theta}, (B_i\; p\; \vec{x}_i)\cas{\theta}
         \vdash C\cas{\theta}\ |\ \theta$ is a solution to
	   $((\lambda \vec{z} . p'\ \vec{t}_i) \unrhd p'\ \vec{s})$
      }
      \right\}
     }
    }
   }
 }\right\}_{i \in 1..n}\hspace{3.3cm}
 }
}
\end{equation*}
Here $\landL^*$ is an application of $\cL$ followed by $\landL_1$ and
$\landL_2$ on the contracted formula. It is easy to see that the
solutions to $(\lambda \vec{z}.p\;\vec{t}_i) 
\unrhd p\;\vec{s}$ and $(\lambda \vec{z}.p'\;\vec{t}_i)
\unrhd p'\;\vec{s}$ are identical and hence the leaf sequents in this
partial derivation are exactly the same as the upper sequents of the
instance of the $\defL^p$ rule being considered.
\end{proof}

A weak form of a converse to the above theorem also holds. Suppose
that the predicate $p$ is given by the following clauses 
\begin{equation*}
\{\forall \vec{x}_i.~ (\nabla \vec{z}_i. p\ \vec{t}_i) \triangleq
B_i\ p\ \vec{x}_i\}_{i\in 1..n}
\end{equation*}
in a setting that uses pattern-based definitions and that has the 
$\defL^p$ and $\defR^p$ but not the $\defL$ and $\defR$ rules. In such
a logic, it is easy to see that the following is provable:
\begin{align*}
\forall \vec{y} . \left[(p\ \vec{y} \supset \bigvee_{i\in 1..n}\right. \exists&
  \vec{x}_i . ((\lambda \vec{z}_i . p'\ \vec{t}_i) \unrhd p'\ \vec{y})
  \land B_i\ p\ \vec{x}_i) \land~ \\
&\left.(\bigvee_{i\in 1..n} \exists \vec{x}_i 
. ((\lambda \vec{z}_i . p'\ \vec{t}_i) \unrhd p'\ \vec{y}) \land B_i\
p\ \vec{x}_i \supset p\ \vec{y}) \right]
\end{align*}
Thus, in the presence of \cut, the $\defL$ and $\defR$ rules can be
treated as derived rules relative to the translation interpretation
of pattern-based definitions.

We would like also to allow patterns to be used in the heads of
clauses when writing definitions that are intended to pick out the
least and greatest fixed points, respectively. Towards this end we
admit in a definition also clauses of the form $\forall
\vec{x}.(\nabla \vec{z}. p\ \vec{t}) \mueq B\ p\ \vec{x}$ and $\forall
\vec{x}.(\nabla \vec{z}. p\ \vec{t}) \nueq B\ p\ \vec{x}$ with the
earlier provisos on the form of $B$ and $\vec{t}$ and the types of $B$ and
$p$ and with the additional requirement that all the clauses for any
given predicate are unannotated or annotated uniformly with either $\mu$ or
$\nu$. Further, a definition must satisfy stratification conditions as
before. In reasoning about the least or greatest fixed point forms of
definitions, we may use the translation into the earlier, non-pattern
form together with the rules $\IL$ and $\CIR$. It is possible to
formulate an induction rule that works directly from pattern-based
definitions using the idea that to show $S$ to be an induction
invariant for the predicate $p$, one must show that every clause of
$p$ preserves $S$. A rule that is based on this intuition is presented
in Figure~\ref{fig:pattern-induction-rule}. The soundness of this rule
is shown in the following theorem.

\begin{figure}[t]
\begin{equation*}
\infer[\IL^p]
{\Sigma : \Gamma, p\ \vec{s} \vdash C}
{\left\{\vec{x}_i : B_i\ S\ \vec{x}_i \vdash \nabla \vec{z}_i.S\
  \vec{t}_i\right\}_{i\in 1..n} \quad
  \Sigma : \Gamma, S\ \vec{s} \vdash C}
\end{equation*}
\begin{center}
assuming $p$ is defined by the set of clauses $\{\forall \vec{x}_i.
(\nabla \vec{z}_i. p\ \vec{t}_i) \mueq B_i\ p\ \vec{x}_i\}_{i\in 1..n}$
\end{center}
\caption{Induction rule for pattern-based definitions}
\label{fig:pattern-induction-rule}
\end{figure}

\begin{theorem}
The $\IL^p$ rule is admissible under the intended translation of
pattern-based definitions.
\end{theorem}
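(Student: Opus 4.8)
The plan is to reduce the $\IL^p$ rule to the ordinary $\IL$ rule applied to the non-pattern translation of the $\mu$-annotated pattern-based definition. Recall that the clauses $\{\forall \vec{x}_i.(\nabla \vec{z}_i. p\ \vec{t}_i) \mueq B_i\ p\ \vec{x}_i\}_{i\in 1..n}$ translate, exactly as in the unannotated case but retaining the $\mu$ reading, to the single clause
\[
\forall \vec{y}.\ p\ \vec{y} \mueq \mathcal{B}\ p\ \vec{y}, \qquad
\mathcal{B}\ p\ \vec{y} = \bigvee_{i\in 1..n} \exists \vec{x}_i.\ ((\lambda \vec{z}_i. p'\ \vec{t}_i) \unrhd p'\ \vec{y}) \land B_i\ p\ \vec{x}_i,
\]
where $p'$ is the auxiliary constructor used to match whole argument tuples at once. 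With this definition in force, I would apply the ordinary $\IL$ rule to the conclusion $\Sigma : \Gamma, p\ \vec{s} \vdash C$ using the very same invariant $S$ supplied to $\IL^p$. This produces exactly two premises: the invariance obligation $\vec{y} : \mathcal{B}\ S\ \vec{y} \vdash S\ \vec{y}$ (note that $p'$ is a constructor and so is untouched by the substitution of $S$ for $p$), and the continuation $\Sigma : \Gamma, S\ \vec{s} \vdash C$. The latter is literally the second premise of $\IL^p$, so nothing remains to be done for it.

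The work, then, is to build a derivation of the invariance premise out of the per-clause premises $\{\vec{x}_i : B_i\ S\ \vec{x}_i \vdash \nabla \vec{z}_i. S\ \vec{t}_i\}_{i\in 1..n}$. I would first apply $\lorL$ to split $\mathcal{B}\ S\ \vec{y}$ into its $n$ disjuncts. In the $i$-th branch I apply $\existsL$ to introduce the variables $\vec{x}_i$; since $\vec{t}_i$, $B_i$ and $S$ contain no nominal constants and $\vec{y}$ is an eigenvariable, the support of the quantified body is empty and the raising in $\existsL$ is trivial. An application of $\landL^*$ (contraction followed by the two $\landL$ rules) then yields the sequent $\vec{y}, \vec{x}_i : (\lambda \vec{z}_i. p'\ \vec{t}_i) \unrhd p'\ \vec{y},\ B_i\ S\ \vec{x}_i \vdash S\ \vec{y}$. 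Finally I apply $\unrhdL$ to the nominal abstraction hypothesis.

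The crux is the characterization of the solutions of $(\lambda \vec{z}_i. p'\ \vec{t}_i) \unrhd p'\ \vec{y}$. Because $p'$ is a rigid constructor, $\vec{t}_i$ carries no nominal constants, and $\vec{y}$ is a flexible eigenvariable, the degree-$|\vec{z}_i|$ abstraction forces any solution to instantiate $\vec{y}$ by $\vec{t}_i[\vec{c}/\vec{z}_i]$ for a tuple $\vec{c}$ of fresh distinct nominal constants, leaving $\vec{x}_i$ as parameters; up to $\leq_\Sigma$ this single substitution $\theta_0$ generates the whole solution set. For $\theta_0$ the branch sequent reduces to $\vec{x}_i : B_i\ S\ \vec{x}_i \vdash S\ \vec{t}_i[\vec{c}/\vec{z}_i]$. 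This I discharge from the $i$-th per-clause premise: the sequent $\nabla \vec{z}_i. S\ \vec{t}_i \vdash S\ \vec{t}_i[\vec{c}/\vec{z}_i]$ is derivable by iterating $\nablaL$ and closing with $id$ (here $S\ \vec{t}_i[\vec{a}/\vec{z}_i] \approx S\ \vec{t}_i[\vec{c}/\vec{z}_i]$ via the transposition of the two tuples of constants, as these symbols occur nowhere else), and a $\cut$ against the premise derivation then yields the branch. Every remaining solution of the nominal abstraction is of the form $\theta_0 \bullet \sigma$, so its branch sequent is obtained from the $\theta_0$-branch by applying $\sigma$; by Lemma~\ref{lem:proof-subst} those branches are therefore derivable as well, completing the instance of $\unrhdL$ and hence the whole invariance derivation.

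I expect the main obstacle to lie precisely in this last step: making the correspondence between the $\nabla \vec{z}_i$ in the head of a pattern clause and the nominal constants produced by the solutions of the associated nominal abstraction fully precise, and in checking that the freshness side conditions line up so that the per-clause premise (which speaks of $\nabla \vec{z}_i. S\ \vec{t}_i$) really does furnish the specific instance $S\ \vec{t}_i[\vec{c}/\vec{z}_i]$ demanded by each $\unrhdL$ branch. The remaining bookkeeping — triviality of raising in $\existsL$, the invariance of $p'$ under the substitution of $S$ for $p$, and the reduction of non-principal solution branches via Lemma~\ref{lem:proof-subst} — is routine.
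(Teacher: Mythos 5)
Your proposal is correct and follows essentially the same route as the paper's proof: translate the pattern-based clauses into the single clause over $p'$, apply $\IL$ with the same invariant $S$ (so the right premises coincide), and derive the invariance premise from the per-clause premises via $\lorL$, $\existsL$, $\landL^*$, and $\unrhdL$, using the fact that the most general solution of $(\lambda \vec{z}_i. p'\ \vec{t}_i) \unrhd p'\ \vec{y}$ instantiates $\vec{y}$ by $\vec{t}_i$ with fresh nominal constants in place of $\vec{z}_i$. The only difference is that you explicitly fill in two steps the paper leaves implicit — discharging each principal branch by $\nablaL$, $id$, and $\cut$ against the per-clause premise, and handling the non-principal solutions $\theta_0 \bullet \sigma$ via Lemma~\ref{lem:proof-subst} — which is a faithful elaboration rather than a different argument.
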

\begin{proof}
Let the clauses for $p$ in the pattern-based definition be given by
the set
\[\{\forall \vec{x}_i.
(\nabla \vec{z}_i. p\ \vec{t}_i) \mueq B_i\ p\ \vec{x}_i\}_{i\in
  1..n}\]
in which case the translated form of the definition for $p$ would be 
\begin{equation*}
\forall \vec{y} . p\ \vec{y} \mueq \bigvee_{i\in 1..n} \exists \vec{x}_i
. ((\lambda \vec{z}_i . p'\ \vec{t}_i) \unrhd p'\ \vec{y}) \land B_i\
p\ \vec{x}_i.
\end{equation*}
In this context, the rightmost upper sequents of the $\IL^p$ and the $\IL$
rules that are needed to derive a sequent of the form $\Sigma :
\Gamma, p\ \vec{s} \vdash C$ are identical. Thus, to show
that $\IL^p$ rule is admissible, it suffices to show that the left
upper sequent in the $\IL$ rule can be derived in the original
calculus from all but the rightmost upper sequent in an $\IL^p$
rule. Towards this end, we observe that we can construct the following
derivation:
\begin{equation*}
\hspace{-2.7cm}
\infer=[\lorL]
{\vec{y} : \bigvee_{i\in 1..n} \exists \vec{x}_i
  . ((\lambda \vec{z}_i . p'\ \vec{t}_i) \unrhd p'\ \vec{y}) \land
  B_i\ S\ \vec{x}_i
  \vdash S\ \vec{y}
}
{\hspace{2.8cm}\left\{\raisebox{-6ex}{
    \infer=[\existsL]
    {\vec{y} : \exists \vec{x}_i
      . ((\lambda \vec{z}_i . p'\ \vec{t}_i) \unrhd p'\ \vec{y}) \land
      B_i\ S\ \vec{x}_i \vdash S\ \vec{y}
    }
    {\infer[\landL^*]
      {\vec{y}, \vec{x}_i :
        ((\lambda \vec{z}_i . p'\ \vec{t}_i) \unrhd p'\ \vec{y}) \land
        B_i\ p\ \vec{x}_i \vdash S\ \vec{y}
      }
      {\infer[\unrhdL]
        {\vec{y}, \vec{x}_i :
          (\lambda \vec{z}_i . p'\ \vec{t}_i) \unrhd p'\ \vec{y},
          B_i\ S\ \vec{x}_i
          \vdash S\ \vec{y}
        }
        {\left\{\hbox{
            $(\vec{y}, \vec{x}_i)\theta : (B_i\; p\; \vec{x}_i)\cas{\theta}
            \vdash (S\ \vec{y})\cas{\theta}\ |\ \theta$ is a solution to
            $((\lambda \vec{z} . p'\ \vec{t}_i) \unrhd p'\ \vec{y})$
          }
          \right\}
        }
      }
    }
  }\right\}_{i \in 1..n}\hspace{3.5cm}
}
\end{equation*}
Since the variables $\vec{y}$ are distinct and do not occur in
$\vec{t}_i$, the solutions to $(\lambda \vec{z} . p'\ \vec{t}_i)
\unrhd p'\ \vec{y}$ have a simple form. In particular, let $\vec{t}'_i$
be the result of replacing in $\vec{t}_i$ the variables $\vec{z}$ with
distinct nominal constants. Then $\vec{y} = \vec{t}'_i$ will be a most
general solution to the nominal abstraction. Thus the upper sequents
of this derivation will be
\begin{equation*}
\vec{x}_i : B_i\ p\ \vec{x}_i \vdash S\ \vec{t}'_i
\end{equation*}
which are derivable if and only if the sequents
\begin{equation*}
\vec{x}_i : B_i\ p\ \vec{x}_i \vdash \nabla\vec{z}_i. S\ \vec{t}_i
\end{equation*}
are derivable.
\end{proof}

We do not introduce a co-induction rule for pattern-based
definitions largely because we have encountered few interesting
co-inductive definitions that require patterns and multiple clauses.



\section{Examples}
\label{sec:examples}

We now provide some examples to illuminate the properties of nominal
abstraction and its usefulness in both specification and reasoning
tasks; while \logic has many more features, their
characteristics and applications have been exposed in other
work (\eg, see
\cite{momigliano03types,tiu04phd,mcdowell02tocl,tiu10tocl}). In the
examples that are shown, use will be made of the  
pattern-based form of definitions 
described in Section~\ref{sec:pattern-form}. We will also
adopt the convention that tokens given by capital letters denote
variables that are implicitly universally quantified over the entire
clause. 

\subsection{Properties of $\nabla$ and freshness}\label{ssec:fresh-example}

We can use nominal abstraction to gain a better insight into the
behavior of the $\nabla$ quantifier. Towards this end, let the {\sl fresh}
predicate be defined by the following clause.
\begin{equation*}
(\nabla x.\fresh x E) \triangleq \top
\end{equation*}
We have elided the type of {\sl fresh} here; it will have to be
defined at each type that it is needed in the examples we consider
below. Alternatively, we can ``inline'' the definition by using nominal
abstraction directly, \ie, by replacing occurrences of of $\fresh
{t_1} {t_2}$ with $\exists E. (\lambda x. \tup{x, E} \unrhd \tup{t_1,
  t_2})$ for a suitably typed pairing construct $\tup{\cdot,\cdot}$.

Now let $B$ be a formula whose free variables are among $z, x_1,
\ldots, x_n$, and let $\vec{x} = x_1 :: \ldots :: x_n :: nil$ where
$::$ and $nil$ are constructors in the logic.\footnote{We are, once
  again, finessing typing issues here in that the $x_i$ variables may 
  not all be of the same type. However, this problem can be solved by
  surrounding each of them with a constructor that yields a term with
  a uniform type.} Then the following 
formulas are provable from one another in \logic.
\[
\nabla z. B \qquad\quad
\exists z. (\fresh z \vec{x} \land B) \qquad\quad
\forall z. (\fresh z \vec{x} \supset B)
\]
Note that the type of $z$ allows it to be an arbitrary term in the
last two formulas, but its occurrence as the first argument of {\sl
  fresh} will restrict it to being a nominal constant (even when
$\vec{x} = nil$). Figure~\ref{fig:nabla-fresh} shows a derivation for
one of these entailments. Similar proofs can be constructed for the
other entailments.

\begin{figure}[t]
\begin{equation*}
\infer[\nablaL]
  {\nabla z.B \vdash \exists z. (\fresh z \vec{x} \land B)}
  {\infer[\existsR]
     {B[c/z] \vdash \exists z. (\fresh z \vec{x} \land B)}
     {\infer[\landR]
        {B[c/z] \vdash \fresh c \vec{x} \land B[c/z]}
        {
          {\infer[\defR^p]{B[c/z] \vdash \fresh c \vec{x}}{}}
          &
          {\infer[id]{B[c/z] \vdash B[c/z]}{}}
        }
     }
  }
\end{equation*}
\caption{The proof of an entailment involving $\nabla$ and the {\sl
    fresh} predicate}
\label{fig:nabla-fresh}
\end{figure}

In the original presentation of the $\nabla$ quantifier
\cite{miller03lics}, it was shown that one can move a $\nabla$
quantifier inwards over universal and existential quantifiers by using
raising to encode an explicit dependency. To illustrate this, let $B$
be a formula with two variables abstracted out, and let $C \equiv D$
be shorthand for $(C \supset D) \land (D\supset C)$. The following
formulas are provable in the logic.
\begin{align*}
\nabla z. \forall x. (B\ z\ x) &\equiv \forall h. \nabla z.
(B\ z\ (h\ z)) &
\nabla z. \exists x. (B\ z\ x) &\equiv \exists h. \nabla z.
(B\ z\ (h\ z))
\end{align*}
In order to move a $\nabla$ quantifier outwards over universal and
existential quantifiers, one would need a way to make non-dependency
(\ie, freshness) explicit. This is now possible using nominal
abstraction as shown by the following equivalences.
\begin{align*}
\forall x. \nabla z. (B\ z\ x) &\equiv
\nabla z. \forall x. (\fresh z x \supset B\ z\ x)
\\
\exists x. \nabla z. (B\ z\ x) &\equiv
\nabla z. \exists x. (\fresh z x \land B\ z\ x)
\end{align*}
Finally, we note that the two sets of equivalences for moving the
$\nabla$ quantifier interact nicely. Specifically, starting with a
formula like $\nabla z. \forall x. (B\ z\ x)$ we can push the $\nabla$
quantifier inwards and then outwards to obtain $\nabla z. \forall h.
(\fresh z (h\ z) \supset B\ z\ (h\ z))$. Here $\fresh z (h\ z)$ will
only be satisfied if $h$ does not use its first argument, as
expected.

\subsection{Type uniqueness for the simply-typed $\lambda$-calculus}\label{ssec:stlc-example}

\begin{figure*}[t]
\begin{equation*}
\infer{\Gamma \vdash x : a}{x:a \in \Gamma}
\hspace{1cm}
\infer{\Gamma \vdash (t_1\; t_2) : b}
      {\Gamma \vdash t_1 : a \to b & \Gamma \vdash t_2 : a}
\hspace{1cm}
\infer[x \notin \mathop{dom}(\Gamma)]
      {\Gamma \vdash (\lambda x\!:\!a .\; t) : a \to b}
      {\Gamma, x:a \vdash t : b}
\end{equation*}
\caption{Type assignment for $\lambda$-terms}
\label{fig:typing}
\end{figure*}

\begin{figure}[t]
\begin{align*}
& \member P (P::L) \mueq \top \\
& \member P (Q::L) \mueq \member P L \\[10pt]
& \of L X A \mueq \member{\tup{X, A}}{L} \\
& \of L {(\app M N)} B \mueq
\exists A. \of L M (\arr A B) \land \of L N A \\
& \of L {(\tabs A R)} {(\arr A B)} \mueq
\nabla x. \of {(\tup{x, A}::L)} {(R\ x)} B
\end{align*}
\caption{Encoding of type assignment for $\lambda$-terms}
\label{fig:typing-enc1}
\end{figure}

As a more complete example, we consider the problem of showing the
uniqueness of type assignment for the simply-typed
$\lambda$-calculus. The typing rules used in the assignment 
are shown in Figure~\ref{fig:typing}. We introduce the
type $tp$ to denote the collection of simple types and the constants
$i : tp$ to represent the (single) atomic type and $\hbox{\sl
  arr} : tp \to tp \to tp$ to represent the function type
constructor. Representations of $\lambda$-terms will have the type $tm$ 
and will be constructed using the constants $\hbox{\sl app} : tm \to tm \to
tm$ and $\hbox{\sl abs} : ty \to (tm \to tm) \to tm$ that are chosen
to represent application and abstraction, respectively. Finally we
introduce a type $a$ for typing assumptions together with the constant
$\tup{\cdot,\cdot} : tm \to tp \to a$, and the type $alist$ for lists of
typing assumptions constructed from the constants $nil : alist$ and
the infix constant $::$ of type $a \to alist \to alist$.
We define the predicate {\sl member} of type $a \to alist \to o$ and
encode the simple typing of $\lambda$-terms in the definition of a
predicate {\sl of} with type $alist \to tm \to tp \to o$ as shown in
Figure~\ref{fig:typing-enc1}.
Note here that the
side-condition on the rule for typing abstractions is subsumed by the
treatment of $\nabla$ in the logic.

\begin{figure}[t]
\begin{align*}
& \cntx nil \mueq \top \\
(\nabla x. &\cntx (\tup{x,A} :: L)) \mueq \cntx\; L
\end{align*}
\caption{{\sl cntx} in \logic}
\label{fig:cntx}
\end{figure}

\begin{figure}[t]
\begin{align*}
& \cntx nil \mueq \top \\
& \cntx (\tup{X,A} :: L) \mueq (\forall M, N. X = \app
M N \supset \bot) \land\null \\
&\hspace{3.7cm} (\forall R, B . X = \tabs B R
\supset \bot) \land\null \\
&\hspace{3.7cm} (\forall B. \member{\tup{X, B}}{L} \supset \bot) \land \null\\
&\hspace{3.7cm} \cntx L
\end{align*}
\caption{{\sl cntx} in $LG^\omega$}
\label{fig:cntx-lg}
\end{figure}

Given this encoding of simple typing, the task of showing the
uniqueness of type assignment reduces to proving the following formula:
\begin{equation*}
\forall t,a,b. (\of {nil} t a \land \of {nil} t b) \supset a = b.
\end{equation*}
While the theorem that is ultimately of interest is stated with a
$nil$ context, it is not difficult to see that in an inductive proof
we will have to consider the more general case where this context is
not empty. However, the typing context is not entirely arbitrary. It
must have 
the form $\tup{x_1,a_1}::\ldots::\tup{x_n,a_n}::nil$ where each $x_i$
is unique and atomic (a nominal constant). If we assume a predicate
{\sl cntx} which restricts the structure of typing contexts in this
way, then we can state our generalized result as follows.
\begin{equation*}
\forall \ell,t,a,b. (\cntx \ell \land \of \ell t a \land \of \ell t b)
\supset a = b
\end{equation*}
This is now provable by a straightforward induction on either of the
typing assumptions.

We turn now to the question of defining a suitable {\sl cntx}
predicate. Using nominal abstraction, we can define {\sl cntx}
directly and succinctly as shown in Figure~\ref{fig:cntx}. An instance
of the second clause must replace $x$ with a nominal constant and $A$
and $L$ by terms which do not contain that nominal constant. The
atomicity and distinct properties of typing assumptions follow
naturally from this. To better appreciate the elegance of this
approach, consider how one would enforce atomicity and distinctness
without nominal abstraction. In a logic such as $LG^\omega$, the
restrictions imposed by {\sl cntx} would have to be encoded via
negative information as shown in Figure~\ref{fig:cntx-lg}. This
description of typing contexts is cumbersome and non-modular. For
example, if we were to add a new constructor for $\lambda$-terms and a
typing rule associated with this constructor then, even though the
structure of typing contexts has not changed, we would need to change
{\sl cntx} to rule out this constructor from occurring in typing
contexts. We will use the definition of {\sl cntx} with nominal
abstraction going forward.

When proving the generalized type uniqueness property, the typing
context becomes important at two points: when considering the base
case where a typing assumption is looked up in the context, and when
extending the context with a new typing assumption. When a typing
assumption is found in the context, we must show that it is unique.
The definition of {\sl cntx} describes the structure of typing
assumptions that occur at the head of a context, and the following
lemma uses induction to generalize this to arbitrary elements of the
context.
\begin{equation*}
\forall \ell, m, a, b . (\cntx \ell \land \member{\tup{m,a}}\ell
\land \member{\tup{m,b}}\ell) \supset a = b
\end{equation*}
This property can be shown by induction on {\sl cntx} followed by case
analysis on the {\sl member} hypotheses. The interesting case is when
we have $\ell = \tup{m,a}::\ell'$ and $\member {\tup{m,b}} \ell'$.
Applying $\defR^p$ to $\cntx (\tup{m,a}::\ell')$ in this case replaces
$m$ with a nominal constant that $\ell'$ cannot contain. The
assumption that $\member {\tup{m,b}} \ell'$ then leads to a
contradiction, thus eliminating this case. Moving on to the second
point, when adding a typing assumption to the context, we need to show
that the resulting context still satisfies the {\sl cntx} predicate.
This boils down to showing the following.
\begin{equation*}
\forall \ell, a. (\cntx \ell \supset \nabla x. \cntx (\tup{x,a}::\ell))
\end{equation*}
This follows directly from applying $\defR^p$ to {\sl cntx}. With
these issues taken care of, the rest of the type uniqueness proof is
straightforward.

In order for the above reasoning to be meaningful, we must show
that our encoding of the simply-typed $\lambda$-calculus is adequate.
The crux of this is showing that $\Gamma \vdash t : a$ holds in the
simply-typed $\lambda$-calculus if and only if $\vdash \of
{\enc{\Gamma}} {\enc{t}} {\enc{a}}$ is provable in \logic. Here
$\enc{\cdot}$ is a bijective mapping between objects of the
simply-typed $\lambda$-calculus and their representation in \logic.
Since $\logic$ admits cut-elimination, it is straightforward to
analyze how $\vdash \of {\enc{\Gamma}} {\enc{t}} {\enc{a}}$ might be
proved in the logic. Then the only subtlety in showing adequacy is
that the first clause for {\sl of} allows the type of any object to be
looked up in the context while the first typing rule for simply-typed
$\lambda$-calculus only allows the type of variables to be looked up.
This is resolved by noting that typing contexts only contain bindings
for variables. Alternatively, using nominal abstraction, it is
possible to give a definition of typing which is closer to the
original rules (Figure~\ref{fig:typing}) by replacing the first clause
of {\sl of} with the following.
\begin{equation*}
(\nabla x. \of {(L\ x)} x A) \mueq \nabla x. \member{\tup{x,A}}{(L\ x)}
\end{equation*}
An additional benefit of this encoding is that in proofs such as for
type uniqueness we no longer need to consider spurious cases where the
type of a term such as $\app m n$ is looked up in the typing context.

We can now put everything together to establish the type uniqueness
result for the simply-typed $\lambda$-calculus. Suppose $\Gamma \vdash
t : a$ and $\Gamma \vdash t : b$ are judgments in the simply-typed
$\lambda$-calculus. Then by adequacy we know $\vdash \of
{\enc{\Gamma}} {\enc{t}} {\enc{a}}$ and $\vdash \of {\enc{\Gamma}}
{\enc{t}} {\enc{b}}$ are provable in $\logic$. Using these
assumptions, the {\sl cut} rule, and the type uniqueness result proved
earlier in \logic, we know that $\vdash \enc{a} = \enc{b}$ has a proof
in $\logic$. Thus it also has a cut-free proof. This proof must end with
with $\unrhdR$ which means that $\enc{a}$ is equal to $\enc{b}$.
Finally, since $\enc{\cdot}$ is bijective, $a$ must equal $b$.

\subsection{Polymorphic type generalization}

In addition to reasoning, nominal abstraction can also be useful in
providing declarative specifications of computations. We consider the
context of a type inference algorithm that is also discussed in
\cite{cheney08toplas} to illustrate such an application. In this
setting, we might need a predicate {\sl spec} that
relates a polymorphic type $\sigma$, a list of distinct variables
$\vec{\alpha}$ (represented by nominal constants) and a monomorphic
type $\tau$ just in the case that $\sigma =
\forall\vec{\alpha}.\tau$. Using nominal abstraction, we can define
this predicate as follows.
\begin{align*}
&\spec {(\monoTy T)} {nil} T \mueq \top \\
(\nabla x. &\spec {(\polyTy P)} {(x::L)} {(T\ x)}) \mueq
\nabla x. \spec {(P\ x)} L {(T\ x)}.
\end{align*}
Note that we use $\nabla$ in the head of the second clause to
associate the variable $x$ at the head of the list $L$ with its
occurrences in the type $(T\ x)$. We then use $\nabla$ in the body of
this clause to allow for the recursive use of {\sl spec}.

\subsection{Arbitrarily cascading substitutions}
\label{ssect:cascading}

Many reducibility arguments, such as Tait's proof of normalization for
the simply typed $\lambda$-calculus \cite{tait67jsl}, are based on
judgments over closed terms. During reasoning, however, one has often
to work with open terms. To accommodate this requirement, the closed
term judgment is extended to open terms by considering all possible
closed instantiations of the open terms. When reasoning with \logic,
open terms are denoted by terms with nominal constants representing
free variables. The general form of an open term is thus $M\; c_1\;
\cdots\; c_n$, and we want to consider all possible instantiations
$M\; V_1\; \cdots\; V_n$ where the $V_i$ are closed terms. This type
of arbitrary cascading substitutions is difficult to realize in
reasoning systems where variables are given a simple type since $M$
would have an arbitrary number of abstractions but the type of $M$
would {\em a priori} fix that number of abstractions.

We can define arbitrary cascading substitutions in \logic using
nominal abstraction. In particular, we can define a predicate which
holds on a list of pairs $\tup{c_i,V_i}$, a term with the form $M\; c_1\;
\cdots\; c_n$ and a term of the form $M\; V_1\; \cdots\; V_n$. The
idea is to iterate over the list of pairs and for each pair $\tup{c,V}$
use nominal abstraction to abstract $c$ out of the first term and then
substitute $V$ before continuing. The following definition of the
predicate {\sl subst} is based on this idea.
\begin{align*}
& \subst {nil} T T \mueq \top \\
(\nabla x. &\subst {(\tup{x,V}::L)} {(T\; x)} S) \mueq
\subst L {(T\; V)} S
\end{align*}

The ideas in this substitution predicate have been used to formalize
Tait's logical relations argument for the weak normalization of the
simply-typed $\lambda$-calculus in a logic similar to \logic
\cite{gacek08lfmtp}. Here, an important property of arbitrary
cascading substitutions is that they act compositionally. For
instance, taking the slightly simpler example of the untyped
$\lambda$-calculus, we can show that {\sl subst} acts compositionally
via the following lemmas.
\begin{align*}
&\forall \ell, t, r, s.~
\subst \ell {(\app t r)} s \supset
\exists u, v. (s = \app u v \land \subst \ell t u \land \subst \ell r v)
\\
&\forall \ell, t, r.~
\subst \ell {(\abs t)} r \supset
\exists s. (r = \abs s \land \nabla z. \subst \ell {(t\; z)} (s\; z))
\end{align*}
Both of these lemmas have straightforward proofs by induction on {\sl
  subst}.



\section{Related Work}
\label{sec:related}

We structure the discussion of related work into three parts: the
previously existing framework that \logic builds on, alternative
proposals for treating binding in syntax and different approaches for
relating specifications of formal systems and reasoning about them.

\subsection{The precursors for \logic}

The logic \logic that we have described in this paper provides a
framework for intuitionistic reasoning that is characterized by its
use of typed $\lambda$-terms for representing objects, of a
fixed-point notion of definitions with associated principles of
induction and co-induction, of the special $\nabla$-quantifier to
express generic judgments and of nominal abstraction for making
explicit the properties of objects captured by the
$\nabla$-quantifier. All these features except the last derive from
previously described logics. The style in which definitions are
treated originates from work by Schroeder-Heister
\cite{schroeder-heister92nlip} and Girard
\cite{girard92mail}. McDowell and Miller used this idea within a
fragment of the Simple Theory of Types and added to this also a
treatment of induction over natural numbers \cite{mcdowell00tcs}. The
resulting logic, called $\FOLDN$, provides a means for reasoning about
specifications of computations over objects involving abstractions in
which universally quantified judgments are used to capture the dynamic
aspects of such abstractions. While such an encoding suffices for many
purposes, Miller and Tiu discovered its inadequacy in, for example,
treating the distinctness of names in arguments relating to the
$\pi$-calculus and they developed the logic \foldnb with the new
$\nabla$-quantifier as a vehicle for overcoming this deficiency
\cite{miller05tocl}. Tiu then showed how to incorporate inductive and
co-inductive forms of definitions into this context
\cite{tiu04phd}. However, the properties initially assumed for the
$\nabla$-quantifier were too weak to support sophisticated forms of
reasoning based on \hbox{(co-)induction}, and this led to the addition
of the
$\nabla$-strengthening and $\nabla$-exchange principles \cite{tiu06lfmtp}. 
The logic that is a composite of all these features still lacks the
ability, often needed in inductive arguments, to make explicit in a
systematic way properties such as the freshness and
distinctness of nominal constants ({\em i.e.}, the variables bound by
the $\nabla$-quantifier).  Nominal 
abstraction, whose study has been the main focus of this paper,
provides a natural means for reflecting such properties into
definitions and as such represents a culmination of this line of
development.

The exchange property assumed for the $\nabla$-quantifier appears to
have a natural justification. On the other hand, the strengthening
property, while useful in many reasoning contexts, brings with it the
implicit requirement that the types at which $\nabla$-quantifiers are
used be inhabited by an unbounded number of
members. This assumption may complicate the process of showing the
adequacy of an encoding, an important part of using a logical
framework in formalizing the properties of a 
computational system. The observation concerning adequacy has led
Baelde to develop an alternative
approach to enriching the structure provided by \foldnb
\cite{baelde08phd, baelde08lfmtp}. Specifically, he has proposed
treating the $\nabla$-quantifier as a defined symbol, including in its
definition also the ability to lift its predicative effect over
types. The exchange property for the quantifier follows from this
enrichment, while the properties $(\nabla x. P) \supset P$ and $P
\supset (\nabla x.P)$ where $x$ does not occur in $P$ are shown to
hold for certain syntactic classes of formulas.
The resulting
logic has a domain of application that overlaps with that of \logic
but, in our opinion, may not be as convenient to use in actual
reasoning tasks. A
detailed consideration of this issue and also the quantification of
the real differences in adequacy arguments are left for future
investigation. 

\subsection{Nominal logic}
\label{ssec:nominal}

The $\nabla$-quantifier of \logic bears several
similarities to the \new-quantifier contained in nominal logic. As
presented in \cite{pitts03ic}, nominal logic is, in essence, a variant of
first-order logic whose defining characteristics are that it
distinguishes certain domains as those of atoms or names and takes as
primitive a freshness predicate---denoted by the infix operator
$\#$---between atoms and other objects and a swapping operation
involving a pair of names and a term. The logic then formalizes
certain properties of the swapping operation (referred to as
equivariance properties) and of freshness. One of the freshness axioms 
leads to the availability of an unbounded supply of names, an aspect
that is reminiscent of the consequence of the strengthening rule
associated with the $\nabla$-quantifier. Letting $\phi$ be a formula
whose free variables are $a, x_1,\ldots, x_n$ where $a$ is of atom
type, another consequence of the swapping and freshness axioms is the
following equivalence:
\[\exists a. (a \# x_1 \land \ldots \land a \# x_n \land \phi)\quad \equiv\quad
\forall a. (a \# x_1 \land \ldots \land a \# x_n \supset \phi)
\]
The \new-quantifier can be defined in this setting by translating $\new
a. \phi$ into either one of the formulas shown in this equivalence. In
our presentation of \logic, we have taken the $\nabla$-quantifier to be
primitive and we have shown that we can define a {\em fresh} predicate
using nominal abstraction. As we
have seen in Section~\ref{ssec:fresh-example}, we then get a set of
equivalences between $\nabla$, the traditional quantifiers and {\em
  fresh} that is reminiscent of the one discussed here involving the
\new-quantifier.

At a deeper level, there appears to be some convergence in the
treatment of syntax between the nominal logic approach and the one
supported by \logic using $\lambda$-terms. 
For example, both make use of self-dual quantifiers to manage names
and both provide predicates for freshness, equality, and inequality
relating to 
names.  Probably the most fruitful way to compare these approaches in
detail is via their respective proof theories: see
\cite{cheney05fossacs,gabbay07jal} for some proof theory developments
for nominal logics.  To illustrate such a convergence, we note that 
nominal logic has inspired a variant to logic programming in the form
of the $\alpha$Prolog language \cite{cheney08toplas}.  The
specifications written in $\alpha$Prolog have a Horn clause like
structure with the important difference that the \new-quantifier is
permitted to appear in the head.  Clauses of this kind bear a
resemblance to the pattern-based form of definitions discussed in
Section~\ref{sec:pattern-form} in which the $\nabla$-quantifier may
appear at the front of clauses.
In fact, it is shown in \cite{gacek10ppdp} that the former can be directly
translated to the latter.
The animation of such definitions in
\logic through the $\defR^p$ rule requires the solution of nominal
abstraction problems that is similar in several respects to the equivariant
unification \cite{cheney09jar} needed in an interpreter for
$\alpha$Prolog.  

These similarities notwithstanding, the intrinsic structures of nominal
logic and \logic are actually quite different. The former logic is
first-order in spirit and does not include a binding construct at the
outset.  While it is possible to define a (first-order) binding constructor
in nominal logic that obeys the principle of
$\alpha$-equivalence, the resulting binder is not capable of
directly supporting $\lambda$-tree syntax.  In particular,
$\beta$-equivalence is not internalized with these terms: as a
consequence, term-level substitution has to be explicitly formalized
and its formal properties need to be established on a case-by-case basis. 
While such a first-order encoding has some drawbacks from the
perspective of treating binding structure, it also has the benefit
that it can be more easily formalized within the logic
of existing theorem provers such as Coq and Isabelle/HOL 
\cite{pitts06jacm,aydemir06lfmtp,urban08jar}.  

\subsection{Separation of specification and reasoning logics}

An important envisaged use of \logic is in realizing the {\em
two-level approach} to reasoning about the operational semantics of
programming languages and process calculi.  The first step in this
approach is to use a {\it specification logic} to encode such
operational semantics as well as assortments of other properties such
as typing.  The second step involves embedding provability of this
first logic into a second logic, called the {\it reasoning logic}.
This two level-logic approach, pioneered by McDowell and Miller
\cite{mcdowell02tocl,mcdowell97phd}, offers several benefits, such as
the ability to internalize into the reasoning logic properties about
derivations in the specification logic and to use these uniformly in
reasoning about the specifications of particular systems.  For
example, cut-elimination for the specification logic can be used to
prove substitution lemmas in the reasoning logic.  Another benefit is
that $\lambda$-tree syntax is available for both logics since the
specification logic is a simple definition within the reasoning
logic.   Part of our motivation for \logic was for it to play the role
of a powerful reasoning logic.  In particular, 
\logic has been provided an implementation in the Abella system
\cite{gacek08ijcar}.   Given the richer expressiveness of \logic, it
was been possible to redo the example proofs in \cite{mcdowell02tocl}
in a much more understandable way \cite{gacek08ijcar,gacek09phd}.

Pfenning and Sch\"urmann \cite{schurmann98cade} also describe a
two-level approach in which the terms and types of a dependently typed
$\lambda$-calculus called LF \cite{harper93jacm} are used as
specifications and a logic called ${\cal M}_2$ is used for the
reasoning logic.  Sch\"urmann's PhD thesis \cite{schurmann00phd}
further extended that reasoning logic to one called ${\cal M}_2^{+}$.
This framework is realized in the Twelf system \cite{pfenning99cade},
which also provides a related style of meta-reasoning based on mode,
coverage, and termination checking over higher-order judgments in LF.
This approach makes use of $\lambda$-tree syntax at both the
specification and reasoning levels and goes beyond what is available
with \logic in that it exploits the sophistication of dependent types
that also provides for the encoding of proof objects. On the other hand, the
kinds of meta-level theorems that can be proved in this setting are
structurally weaker than those that can be proved in \logic.  For
example, implication and negation are not present in ${\cal M}_2^{+}$
and cannot be encoded in higher-order LF judgments. Concretely, this
means that properties such as bisimulation for CCS or the
$\pi$-calculus are not provable in this approach.

A key component in ${\cal M}_2^{+}$ and in the higher-order LF
judgment approach to meta-reasoning is the ability to specify
invariants related to the structure of meta-logical contexts. These
invariants are called {\it regular worlds} and their analogue in our
system is judgments such as {\sl cntx} which explicitly describe the
structure of contexts. While the approach to proving properties in
Twelf is powerful and convenient for many applications, it may be
preferable to have the ability to define invariants such as {\sl cntx}
explicitly rather than relying on regular worlds, since this allows
more general judgments over contexts to be described, such as in the
example of arbitrary cascading substitutions
(Section~\ref{ssect:cascading}) where the {\sl subst} 
predicate actively manipulates the context of a term.



\section{Acknowledgements}
\label{sec:ack}

We are grateful to Alwen Tiu whose observations with respect to the
earlier formulation of our ideas in terms of $\nabla$-quantifiers in
the heads of clauses eventually led us to a recasting using the
nominal abstraction predicate. Useful comments were also received from
David Baelde and the reviewers of an earlier version of this paper and
our related LICS'08 paper. This work
has been supported by the National Science Foundation grants
CCR-0429572 and CCF-0917140 and by INRIA through the
``Equipes Associ{\'e}es'' Slimmer. Opinions, findings, and conclusions
or recommendations expressed in this papers are those of the authors
and do not necessarily reflect the views of the National Science
Foundation.



\bibliographystyle{elsarticle-num}

\end{document}